  \providecommand\BibTeX{{%
    Bib\TeX}}}
\newtheorem{theorem}{Theorem}
\newtheorem{lemma}{Lemma}
\newtheorem{definition}{Definition}
\theoremstyle{definition}
\newtheorem{remark}{Remark}
\newcommand{\PHASES}{\Phi}
\DeclarePairedDelimiter{\ceil}{\lceil}{\rceil}
\newcommand{\avgpacks}{\lambda}
\newcommand{\be}{\begin{eqnarray}}
\newcommand{\ee}{\end{eqnarray}}
\newcommand{\ben}{\begin{eqnarray*}}
\newcommand{\een}{\end{eqnarray*}}
\newcommand{\bfl}{\begin{flalign*}}
\newcommand{\efl}{\end{flalign*}}
\newcommand{\FI}{0}
\newcommand{\SI}{1}
\newcommand{\TI}{2}
\newcommand{\USERS}{\MATH{\mathcal K}}
\newcommand{\BSTATE}[1]{
\ifx&#1&%
   \MATH{\Psi}
\else
   \MATH{\Psi(#1)}
\fi
}
\newcommand{\TFSTATE}[1]{
\ifx&#1&%
   \MATH{\mathbf z}
\else
   \MATH{\mathbf z(#1)}
\fi
}
\newcommand{\LP}[0]{\texttt{LP}\xspace}
\newcommand{\MATH}[1]{\ensuremath{#1}\xspace}
\newcommand{\br}[1]{\left[#1\right]}
\newcommand{\ES}{\mathbb E}
\newcommand{\E}[1]{\mathbb E \br{#1}}
\newcommand{\ALG}{\ensuremath{\mathrm{ALG}}\xspace}
\newcommand{\sumtlast}{\sum_{\trl=(t-\DMAX)^+}^{t \land \TC }}
\newcommand{\IF}{C}
\newcommand{\ACTIVELINK}[2]{\IF_{#1}(#2)=1}
\newcommand{\ACTIVES}[1]{\{l\in \USERS: \ACTIVELINK{l}{t}\}}
\newcommand{\dref}[1]{(\ref{#1})}
\newcommand{\sumajt}{\sum_{n=1}^{\ajt}}
\newcommand{\sumajtrl}{\sum_{n=1}^{\ajtrl}}
\newcommand{\ALGOFF}{\ensuremath{\mathrm{FBPF}}\xspace}
\newcommand{\ALGOFFSPELLED}{Flow-Based Probabilistic Forwarding\xspace}
\newcommand{\ALGON}{\ensuremath{\mathrm{DLPF}}\xspace}
\newcommand{\ALGONSPELLED}{Dynamic Learning with Probabilistic Forwarding\xspace}
\newcommand{\tprm}{t^\prime}
\newcommand{\arv}[2]{a_{#1}^{#2}}
\newcommand{\aj}[1]{\arv{j}{#1}}
\newcommand{\ajt}{\aj{t}}
\newcommand{\ajtrl}{\aj{\trl}}
\newcommand{\fl}[3]{f_{#1#2}^{#3}}
\newcommand{\flo}[3]{f_{#1#2}^{#3\star}}
\newcommand{\trl}{\tau}
\newcommand{\PKT}{\ensuremath{\mathrm J}\xspace}
\newcommand{\PD}{\ensuremath{[\PKT]}\xspace}
\newcommand{\AMAX}{\ensuremath{a_{max}}\xspace}
\newcommand{\DMAX}{\ensuremath{d_{max}}\xspace}
\newcommand{\WMAX}{\ensuremath{w_{max}}\xspace}
\newcommand{\LNKS}{\ensuremath{\mathcal L}\xspace}
\newcommand{\V}{\ensuremath{\mathcal V}\xspace}
\newcommand{\G}{\ensuremath{\mathcal G}\xspace}
\newcommand{\TC}{\ensuremath{\mathrm T}\xspace}
\newcommand{\TCP}{\ensuremath{\mathrm T^\prime}\xspace}
\newcommand{\RI}{\ensuremath{\mathrm {RI}(\TC)}\xspace}
\newcommand{\ERI}{\ensuremath{\mathbb E \big[W_{\RI}^\star\big]}}
\newcommand{\EI}{\ensuremath{\mathrm {EI}(\TC)}\xspace}
\newcommand{\EIB}{\ensuremath{\mathrm {\overline {EI}}(\TC)}\xspace}
\newcommand{\EST}{\ensuremath{\mathrm E_S}\xspace}
\newcommand{\FS}{\ensuremath{\mathrm F_S}\xspace}
\newcommand{\FNS}{\ensuremath{\mathrm F_{NS}}\xspace}
\newcommand{\TB}{\ensuremath{\overline{\mathrm T}}\xspace}
\newcommand{\EG}{\ensuremath{\overline {\mathcal G}}\xspace}
\newcommand{\ELNKS}{\ensuremath{\overline {\mathcal L}}\xspace}
\newcommand{\WMX}{\ensuremath{w_{\max}}\xspace}
\newcommand{\WMN}{\ensuremath{w_{\min}}\xspace}
\newcommand{\sumT}{\sum_{t=1}^{\TC}} 
\newcommand{\sumk}{\sum_{k\in \mathcal K_j} } 
\newcommand{\sumpkt}{\sum_{j=1}^{\PKT}} 
\newcommand{\sumtotarrv}[1]{ \sumT \sumpkt} 
\newcommand{\sumsched}{\sum_{k\in \mathcal K_j} }
\newcommand{\sumschedl}{\sum_{k: k_{t-\trl}=\ell} }
\newcommand{\vRI}[1]{y_{jk}^{#1}}
\newcommand{\vRIO}{\vRI{nt\star}}
\newcommand{\vRIOtrl}{\vRI{n\trl\star}}
\newcommand{\vEI}[1]{x_{jk}^{#1}}
\newcommand{\sumKK}{\sum_{k \in \mathcal K_{j}:k_{\trl}=l}}
\newcommand{\sumTT}{\sum_{\trl=0}^{\DMAX}}
\newcommand{\ahat}{\hat \lambda_{j}}
\newcommand{\abar}{\lambda_{j}}
\newcommand{\amin}{\lambda_{\min}}
\newcommand{\abart}{\lambda_{j}^t}
\newcommand{\abartau}{\lambda_{j}^{\trl}}
\newcommand{\C}{\mathrm {C}}
\newcommand{\LL}{L}
\newcommand{\CMN}{\ensuremath{\C_{\min}}\xspace}
\newcommand{\CMX}{\ensuremath{\C_{\max}}\xspace}
\newcommand{\AO}[1]{\ensuremath{\mathtt{Out}(#1)}}
\newcommand{\AI}[1]{\ensuremath{\mathtt{Inc}(#1)}}
\def\BibTeX{{\rm B\kern-.05em{\sc i\kern-.025em b}\kern-.08em
    T\kern-.1667em\lower.7ex\hbox{E}\kern-.125emX}}
\begin{document}

\title[Near-Optimal Packet Scheduling  in Multihop Networks with End-to-End Deadline Constraints]{Near-Optimal Packet Scheduling in Multihop Networks with End-to-End Deadline Constraints}

\author{Christos Tsanikidis}
\email{c.tsanikidis@columbia.edu}
\affiliation{%
  \institution{Columbia University}
  \city{New York}
  \state{NY}
  \country{USA}}

\author{Javad Ghaderi}
\affiliation{%
  \institution{Columbia University}
  \city{New York}
  \state{NY}
  \country{USA}}
\email{jghaderi@columbia.edu}

\begin{abstract}
Scheduling packets with end-to-end deadline constraints  in multihop networks is an important problem that has been notoriously difficult to tackle. Recently, there has been progress on this problem in the worst-case traffic setting, with the objective of maximizing the number of packets delivered within their deadlines. 
Specifically, the proposed algorithms were shown to achieve  $\Omega(1/\log(\LL))$ fraction of the optimal objective value if the minimum link capacity in the network is $\CMN=\Omega(\log (\LL))$, where $\LL$ is the maximum length of a packet's route in the network (which is bounded by the packet's maximum deadline). However, such guarantees can be quite pessimistic due to the strict worst-case traffic assumption and may not accurately reflect real-world settings. 
In this work, we aim to address this limitation by exploring whether it is possible to design algorithms that achieve a constant fraction of the optimal value while relaxing the worst-case traffic assumption. 
 We provide a positive answer by demonstrating that in stochastic traffic settings, such as i.i.d. packet arrivals, near-optimal, $(1-\epsilon)$-approximation algorithms can be designed if $\CMN = \Omega\big(\frac{\log (\LL/\epsilon) } {\epsilon^2}\big)$. To the best of our knowledge, this is the first result that shows this problem can be solved near-optimally under nontrivial assumptions on traffic and link capacity. We further present extended simulations using real network traces with non-stationary traffic, which demonstrate that our algorithms outperform worst-case-based algorithms in practical settings.
\end{abstract}
\keywords{Scheduling algorithms; Approximation algorithms; Multihop traffic; Deadline scheduling}

\maketitle

\section{Introduction}

In recent years, the scheduling of real-time traffic in communication networks has become increasingly important, due to growing number of emerging real-time applications such as video streaming and video conferencing, vehicular networks, cyber-physical networks, and Internet-of-Things \cite{popovski2022perspective}.
The connectivity within these networks is often complex, with packets generated at specific sources requiring the traversal of multiple links in order to reach their destinations successfully.
Additionally, in many real-time applications, the timely delivery of packets is a primary concern, as packets that fail to meet specific deadlines, on the total time from generation of
a packet at its source until delivery to its destination, are typically discarded by the application.  Meeting the deadline constraints requires a departure from traditional scheduling algorithms (e.g. based on MaxWeight or Backpressure~\cite{tassiulas1992stability}) that have been designed for maximizing throughput and cannot provide deadline guarantees on packet delivery.

Despite the importance of the problem due to its broad applicability, there is very limited work on techniques with attractive theoretical guarantees in multihop networks. 
This is due to the complexity of the problem, including the need to make online decisions, the exponential growth in the number of scheduling decisions that involve the path a packet
takes as well as the specific time slots the packet occupies for
transmission on the links, and the stringent deadline constraints.
In particular, prior literature on scheduling packets with deadlines in multihop networks has focused on either worst-case traffic \cite{tsanikidis2022online,gu2021asymptotically,deng2019online,mao2014optimal} with pessimistic approximation ratios, or stochastic traffic with guarantees in the case that the link capacities are relaxed to be only satisfied on average~\cite{singh2018throughput},
or in an asymptotic regime where the link capacities and packet arrival rates scale to infinity~\cite{singh2018throughput}. As a result, an important open question remains: \textit{is it possible to design algorithms that have attractive performance guarantees (e.g. constant approximation ratio that does not depend on parameters of the network), for finite link capacities and given stochastic traffic}? In this work, we answer this question positively by providing algorithms that are near-optimal while only requiring minimum link capacities that are logarithmic in the maximum length of  a packet's route $L$ (with $L\leq \DMAX$, where $\DMAX$ is the maximum deadline of any packet). Further, our algorithms have polynomial computational complexity, are amenable to distributed implementations, and can be easily adapted to different traffic distribution assumptions. Additionally, our work 
features distinct techniques compared to prior work in the field 
\cite{gu2021asymptotically,deng2019online,singh2018throughput, tsanikidis2022online}.

The near-optimality of our algorithms is with respect to the commonly studied average approximation ratio, which is the fraction of the optimal objective value our algorithm obtains on average (this modifies the worst-case approximation ratio in \cite{tsanikidis2022online,deng2019online,gu2021asymptotically} to a stochastic setting)\footnote{In the literature, similar results are presented in terms of the \textit{competitiveness}, which is the reciprocal of the approximation ratio, that is to say, $a$-approximation ratio corresponds to a $1/a$-competitiveness.}. In the weighted-packet case, unlike prior techniques, e.g.~\cite{tsanikidis2022online,gu2021asymptotically}, our results do not depend on the weights of the packets. Additionally, our results hold for finite time horizons, as opposed to prior works in the stochastic setting that require infinite time horizons, e.g.~\cite{singh2018throughput,singh2013pathwise}.

In our view, this work fills an important gap in the literature regarding the feasibility of solving the real-time scheduling problem near-optimally in practical multihop networks which are characterized by finite link capacities, finite time horizons, and stochastic traffic. 

\subsection{Related Work}
The related work can be divided into three categories: scheduling traffic with deadlines in single-hop networks, scheduling traffic with deadlines in multi-hop networks, and works on online stochastic programming. 

\textbf{Single-hop networks}. In single-hop networks, traffic between any source-destination node needs to traverse only one link. If the network has a single link, there is extensive literature that guarantees a constant approximation ratio (for example $\frac{e-1}{e}$), e.g.,~\cite{chin2006online,jez2013universal}. 
Further, when all packets can be successfully delivered within their deadlines (called the \textit{underloaded} regime), simple algorithms such as Earliest-Deadline-First become optimal \cite{baruah1992competitiveness,liu1973scheduling}.
In the absence of interference among links, these techniques can be applied to wired networks, as each link can be handled independently due to the one-hop traffic. However, in networks that suffer from interference between links, the required solutions often become more involved, as the decisions across links become coupled. The problem has been studied in a variety of works, e.g., \cite{hou2009theory,hou2013scheduling,kang2013performance,tsanikidis2020power,tsanikidis2021randomized,singh2013pathwise} under different traffic considerations, benchmarks, and types of interferences. Many of the prior works consider frame-based traffic, in which, packets usually are assumed to arrive at the beginning of the frame and expire at the end of the frame (e.g. \cite{hou2009theory,hou2013scheduling}). Recently, advancements on the problem in the more general traffic case have been made \cite{kang2014performance,tsanikidis2020power}, specifically, it has been shown that obtaining constant fractions of the optimal objective value is still possible, with approximation ratios ranging from $0.5$ to $\frac{e-1}{e}$, and complexity that depends on the interference graph \cite{tsanikidis2020power, tsanikidis2022online}\footnote{Although the performance metric in many of these  works is in terms of the \textit{real-time capacity region}, this metric can be related to the approximation ratio we study here \cite{tsanikidis2021randomized}.}.
 
\textbf{Multi-hop networks.}
The problem becomes significantly more challenging when multihop traffic is considered. The past work can be divided into four groups: (i) heuristics without theoretical guarantees (e.g., \cite{li2012scheduling,liu2019spatial}), (ii) algorithms for restrictive topologies such as single-destination tree \cite{bhattacharya1997optimal,mao2014optimal}, 
(iii) approaches that provide approximation ratios that diminish as parameters of the network scale
~\cite{li2012scheduling,li2009minimizing,liu2019spatial,wang2011end, mao2014optimal,deng2019online,gu2021asymptotically,tsanikidis2022online}, and (iv) techniques with guarantees for relaxations of the problem \cite{andrews1999packet,andrews2000general,singh2018throughput} or when certain parameters of the network are scaled to infinity \cite{singh2018throughput}. Below, we highlight works with theoretical guarantees in the latter two groups, which are more relevant to our work.

The recent works in~\cite{deng2019online,gu2021asymptotically,tsanikidis2022online} provide the best existing guarantees for the problem in the worst-case traffic setting. The work~\cite{deng2019online} studied the problem without any packet weights, and provided the first algorithm that can achieve $\Omega(1/\log L)$ approximation,  when the minimum link capacity grows to infinity ($C_{\min} \rightarrow \infty$). This result matches the lower-bound provided in \cite{mao2014optimal} and hence is asymptotically optimal for the worst-case traffic.  
The work~ \cite{gu2021asymptotically} introduced several variants of an algorithm called GLS, which in the best case, with $\CMX=\CMN=\Omega(\log L)$, yield a $\Omega(1/\log L)$-approximation as well. 
 This line of research was improved further in \cite{tsanikidis2022online} to more general and improved techniques for weighted packets. In particular, for $\CMN=\Omega( \log (\rho L))$ an algorithm with $\Omega(1/\log (\rho L))$ approximation ratio was provided, where $\rho=\frac{\WMX}{\WMN}$ is the ratio of maximum to minimum packet weight, $\WMX$ and $\WMN$.


In addition to the work in the worst-case traffic setting, there is work~\cite{singh2018throughput} in the case of stochastic traffic, which considers a relaxed version of the problem under \textit{average} link capacity constraints. However, this relaxation considerably simplifies the problem. In fact, some of the primary challenges of the studied problem in our work, revolve around handling the strict capacity constraints as opposed to average constraints. The work~\cite{singh2018throughput} shows that the performance loss due to such relaxation becomes asymptotically small in the regime that the link capacities and arrival rates are all scaled to infinity.  However, the performance in real networks with finite capacities and finite arrival rates is not clear.


There is also work that
relaxes the notion of strict deadlines considered in our paper, e.g., \cite{andrews2000general,andrews1999packet}. 
The work in~\cite{andrews2000general} focuses on underloaded networks, with fixed routes, where the total packet arrival rate to each link is less than the link's capacity (assumed to be 1), and packets \textit{do not} have strict deadlines. In this case, all packets can be delivered in a bounded time. The paper proposes algorithms with guaranteed bounded delay as a function of the arrival rates and the network size. 
The work in~\cite{andrews1999packet} considers deadlines but they are allowed to be violated by some factor. The system is again assumed to be underloaded. It provides algorithms where the violation factor is bounded by a function of arrival rates and the number of links in the network.   
In contrast, in our work, packets exceeding their deadlines have no utility and therefore are discarded. Further, we do not assume an underloaded system, e.g., the system might be overloaded, in which case, even the optimal algorithm has to drop some packets. Finally, we consider different weights or rewards for packets as opposed to unweighted packets in~\cite{andrews1999packet,andrews2000general}.  
Note that the underloaded system assumption considerably simplifies the problem. For example, in the case of single-hop networks with unit capacity, as mentioned earlier, Earliest-Deadline-First is optimal when the system is underloaded. However, when packets have weights and the system is overloaded, the approximation ratio achieved by any online algorithm is strictly less than one~\cite{jez2013universal}.


Although the above works have made significant advances on the problem, they raise the following lingering question: Can we do better than an approximation ratio that becomes increasingly worse as network parameters, such as $L$, become larger, in the case of finite link capacities and for the given traffic? In this work, we address this question by providing algorithms that, in the presence of stochastic traffic, can provide \textit{near-optimal} performance.

\textbf{Online Stochastic Programming.} Beyond the above related literature, there has been a line of research \cite{devanur2019near,agrawal2014fast,kesselheim2014primal,banerjee2020uniform} for online and sequential assignment of limited \textit{resources} to \textit{requests}.
Requests are drawn from an i.i.d. distribution of \textit{request types}, which characterize how these requests can be assigned to resources. Then, each request of a given request type can be assigned to the same fixed set of resources regardless of the time of arrival. While there is similarity between this model and our problem, for example, by considering ``requests'' as the arriving packets, and ``resources'' as each pair of (link, time slot) combinations, unfortunately, however, these techniques cannot be applied to our problem. This is because these works assume that requests of a given request type can be assigned to the same resources, whereas, in our problem, packets of a given packet type cannot be assigned to the same (link, time slot) pair. For example, a packet arriving at time $t'$ after $t$ cannot be assigned to the (link, time slot) pair $(l,t)$. Further, these techniques often require randomizing across all decision variables, which in our case, is prohibitive, as there are exponentially many options for scheduling a packet in the network. 

Finally, we point out that there is a rich literature on the broader problem of time-sensitive scheduling and/or routing, e.g., \cite{leighton1994packet,wang2014energy,sun2021age,srinivasan1997constant,fountoulakis2023scheduling,fan2019approach}. In ~\cite{leighton1994packet}, the time to schedule a set of packets (makespan) in a network was studied. Deadline-constrained scheduling has also been studied jointly with other objectives such as energy consumption~\cite{wang2014energy,fan2019approach} or the age of information~\cite{sun2021age,fountoulakis2023scheduling}.

\subsection{Contributions} The main contributions of this paper can be summarized as follows.

\textbf{Near-Optimal Static Algorithm.} We design the first algorithm, to the best of our knowledge, that guarantees \textit{near-optimal} performance for scheduling packets arriving as a stochastic process with arbitrary hard deadlines in multihop networks with \textit{finite and strict} link capacities. Assuming the knowledge of the packet arrival rates, our algorithm relies on solving a single linear program with polynomial number of variables and constraints. Subsequently, packets are treated independently from each other and are scheduled over different links according to forwarding probabilities calculated through the solution of the linear program. Our algorithm provides $(1-3\epsilon)$-approximation when the minimum link capacity $\CMN$ satisfies $\CMN \geq 2 \left(\frac{ 1+\epsilon}{\epsilon}\right)^2 \log (L/\epsilon)$, when packet arrivals are generated by a set of Bernoulli (or Binomial) processes, with any deadline and weight. Our result does not require relaxing the capacity constraints or taking a limit on arrival rate or capacity. Furthermore, our guarantees hold for any finite time horizon of length $\TC\geq \frac{2 } {\epsilon}\DMAX^2$.


\textbf{Near-Optimal Dynamic Algorithm.} When the knowledge of the packet arrival rates is not available, we provide a dynamic algorithm by dividing time into $\ceil{\log(1/\epsilon)}$ phases of geometrically increasing lengths. At the beginning of each phase, the packet arrivals in prior phases are used to estimate the packet arrival rate of different packet types (flows), which are subsequently used in the linear program by the static algorithm for the incoming phase. Our dynamic algorithm preserves the $(1-O(\epsilon))$-approximation, when the time horizon is larger than a certain threshold.

\textbf{Extensions to Non-Stationary and Dependent Traffic.}
Our 
techniques are versatile and can be adapted to other distributional assumptions. For instance, in the case that packet arrivals  exhibit dependence across time slots or within a time slot, our method achieves near-optimal performance when $\CMN \geq 2 D \left(\frac{ 1+\epsilon}{\epsilon}\right)^2 \log (L/\epsilon)$, where $D$ represents the degree of dependence among packets (in the i.i.d. scenario, $D=1$). This allows extending our results to more general distributions on the number of arrivals per time slot. Additionally, our techniques can be extended to periodic traffic distributions and non-stationary distributions. 
The combination of these extensions results in a wide range of stochastic arrival processes that can be effectively addressed through our method.

\textbf{Empirical Evaluation Using Real Datasets.} 
In order to evaluate the performance of our algorithms in practical settings, in addition to extensive synthetic simulations, we provide simulations using real traffic traces, and over real networks.
The results indicate that, despite the presence of highly non-stationary traffic in the network traces, our algorithms demonstrate significant performance improvement over the worst-case-based algorithms.

\subsection{Notations}
We use $[n]$ to denote the set $\{1,2,\cdots,n\}$. Further we denote $[n]_0:=[n]\cup\{0\}$. We use $\mathbb N := \{1,2,3,\cdots\}$, and $\mathbb R^+:=\{ x\in \mathbb R|x\geq 0\}$. We define $a \land b:=\min \{a,b\}$, and $(a)^+:=\max(a,0)$. 


\section{Model and Definitions}\label{sec:model}

Consider a communication network consisting of a set of nodes $\V$ and a set of communication links $\LNKS$ between the nodes, defining a directed graph $\G=(\V,\LNKS)$. We assume that time is slotted, i.e., $t = 1, 2,3, \cdots$. Each link  $\ell=(u,v)\in \LNKS$ has a capacity $\C_\ell$ which is the maximum number of packets per time slot that can be transmitted over the link from node $u \in \V$ to node $v \in \V$.
To simplify future discussions, we further define $\EG:=(\V,\ELNKS)$ to denote 
$\G$ with the addition of self-loops to the set of edges, i.e., $\ELNKS=\LNKS \cup \{ (u,u): u\in \V\}$. 
Scheduling a packet over a ``self-loop link'' then has the interpretation that the packet remains at the same node, and we define $\C_{\ell}=\infty$ for any self-loop link $\ell$. 
We further use $\AO{v}$ to denote the set of outgoing links of a node $v$ in $\EG$, i.e., $\AO{v}=\{(v,u): (v,u)\in \ELNKS\}$, and similarly use $\AI v$ to denote its incoming links, i.e., $\AI v=\{(u,v): (u,v)  \in \ELNKS\}$. Note that by definition, self-loop link $(v,v)$ belongs to both $\AO v$ and $\AI v$.

The network is shared by a set of packet types (flows) $\PD:=\{1,2,\cdots,\PKT\}$. A packet type (flow) $j \in \PD$ is characterized by its source $s_j \in \V$, its destination $z_j \in \V$, an end-to-end deadline $d_j \in \mathbb N\cup\{0\}$, and a weight $w_j \in \mathbb R^+$. Packets of different types arrive during a time horizon of length $\TC$.
A packet of type $j$ arriving at the beginning of time slot $t$ has to reach its destination before the end of time slot $d_j+t$ to yield a reward $w_j$, otherwise it is discarded. Packets of type $j \in \PD$ arrive according to a stochastic process $\{\ajt\}_{ t\in [\TC]}$ with possibly time-dependent arrival rate $\abart$, where $\ajt \geq 0$ is the number of packet arrivals of type $j$ at time $t$, and $\abart=\ES[{a_j^t}]$.
We denote the maximum deadline, the maximum number of arrivals, and the minimum arrival rate of any packet type by $\DMAX:=\max_{j \in \PD} d_j$, $\AMAX:=\max_{t\in [\TC],j \in \PD} a_j^t$ and $\amin:=\min_{t\in [\TC],j \in \PD} \abart$, respectively.

A packet of type $j$ arriving at time $t$, which is scheduled over the network, should be transmitted over a sequence of links, at specified time slots, that deliver it from its source $s_j$ to its destination $z_j$ before the end of time slot $t+d_j$. This sequence of links and time slots is characterized through the notion of \textit{relative route-schedule}, defined in \Cref{define-route-schedule} below.
\begin{definition}[Relative Route-Schedule]\label{define-route-schedule}
A (relative) route-schedule $k$ for a packet of type $j$ is a directed walk with sequence of edges
$
[k_{\FI}\quad k_{\SI}\quad \cdots \quad k_{d_j}]
$
in graph $\EG$, where $k_{\trl} \in \ELNKS$ is the link over which the packet is scheduled at the $\trl$-th time slot following its arrival, with $k_{\FI} \in \AO {s_j}$ and $k_{d_j} \in \AI {z_j}$. We use $\mathcal K_j \subseteq \ELNKS^{d_j+1}$ to denote the set of all valid route-schedules for packet-type $j \in \PD$.
\end{definition}

For notational convenience, we define  $k_{\trl}=\varnothing$ for $\trl>d_j$, where $k\in \mathcal K_j$.
Our objective is to maximize the weighted sum of the packets that are successfully delivered from their sources to their destinations within their deadlines. Given a random instance of the packet arrival sequence, this optimization can be formulated as an integer program over the time horizon $\TC$, defined in \dref{RIT-eq-first}-\dref{RIT-eq-last} below. We refer to this optimization as $\RI$ which stands for \textit{Random Integer problem over the time horizon of length $\TC$}. In $\RI$, the optimization is over the scheduling decisions $\mathbf y= \{\vRI {nt}\}$, where each $\vRI {nt}$ is a binary variable indicating whether the $n$-th arriving packet of type $j$ at time $t$ is scheduled using a relative route-schedule $k \in \mathcal K_j$ or not.
\begin{subequations}\label{RIT}
\begin{align}
\max_{\mathbf{y}} \quad & 
 \sumtotarrv{t} w_j \sum_{n=1}^{\ajt} \sumsched \vRI {nt} \quad\quad\quad (:=\RI) \label{RIT-eq-first}\\ 
\textrm{s.t.} \quad & 
\sumk \vRI {nt}  \leq 1,\quad \forall t\in [\TC],\forall j\in \PD, \forall n \in [\ajt],\label{RIT:one-choice}\\
&  \sumtlast
\sumpkt \sum_{n=1}^{\ajtrl}
\sumschedl \vRI {n\trl} \leq \C_{\ell},\quad \forall \ell\in \LNKS, \forall t\in [\TB], \label{RIT:capacity} \\
& \vRI {nt} \in \{0,1\}, \quad \forall j\in \PD, \forall t\in [\TC],\forall k \in \mathcal K_j.  \label{RIT-eq-last}
\end{align}
\end{subequations}

The objective function \dref{RIT-eq-first} is the sum of the weights of arriving packets which are successfully scheduled in the network.
Constraints \dref{RIT:one-choice} and \dref{RIT-eq-last} require that each arriving packet is assigned to ``at most'' one route-schedule (or not scheduled at all). Constraints \dref{RIT:capacity} require each link's capacity to be enforced at every time slot $t\in [\TB]$ that packets can exist in the system, where $\TB:=\TC+\DMAX$.
Specifically, in the left-hand-side of \dref{RIT:capacity}, we count the number of packets that are scheduled to be transmitted on link $\ell$ at time $t$. By the definition of deadline, these packets could have only arrived in the last $\DMAX+1$ time slots, i.e., at times $\trl \in \{ t, t-1,\cdots, t-\DMAX\}$, and are counted in the left-hand-side of \dref{RIT:capacity} if they are scheduled using some route-schedule $k$ which, $t-\trl$ slots after the packet arrival at time $\trl$, transmits it on $\ell$ at time $t$ (i.e. the route schedules with $k_{t-\trl}=\ell$). 
Refer to \Cref{flow-route-example-1} for an illustration of the notations through an example.

\begin{figure}[t]
\centering

\begin{subfigure}{.35\textwidth}
  \centering
  \includegraphics[width=0.90\linewidth]{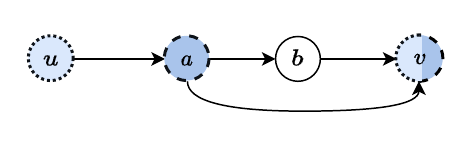}
  \caption{Example network graph $\G$}
  \label{fig:example-graph}
\end{subfigure}%
\begin{subfigure}{.33\textwidth}
  \centering
  \includegraphics[width=0.90\linewidth]{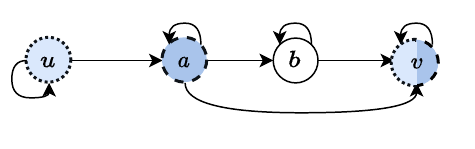}
  \caption{Graph $\EG$ with self-loops}
  \label{fig:example-loop}
\end{subfigure}

\caption{A network $\G$ with nodes $\{u,a,b,v\}$ and $4$ links with unit capacity, 
and the corresponding graph $\EG$ with self-loops. Suppose there are 2 packet types $\PD=\{1,2\}$  with sources $s_1=u,s_2=a$, destinations $z_1=z_2=v$, and deadlines $d_1=2$ and $d_2=1$.
Two possible route-schedules for packet-types $1$ and $2$ respectively are: $k=[(u,u)\quad (u,a)\quad({a},{v})]$ and $k^\prime=[({a},v)\quad (v,v)]$. Suppose there are arrivals of type $1$ and type $2$ at time slots $t_0$ and $t_0+2$ respectively (i.e. $\arv{1}{t_0}=\arv{2}{t_0+2}=1$), then, scheduling both packets on $k,k^\prime$ (i.e. $y_{1k}^{1t}=y_{2 k^\prime}^{1(t+2)}=1$) is not possible due to insufficient
capacity on link $(a, v)$ at time $t+2$ (violating constraint \dref{RIT:capacity}).}
\label{flow-route-example-1}
\end{figure}

Our goal is to provide algorithms that guarantee good performance in the average sense, as formalized through the definition below.

\begin{definition}
Suppose the optimal objective value of $\RI$ is $W_{\RI}^\star$. An algorithm $\ALG$ provides $\gamma$-approximation to $\RI$, if the objective value attained using $\ALG$, $W_{\ALG}$, satisfies:
\[\mathbb E \big[W_{\ALG}\big]\geq \gamma \ERI,\]
where the expectation in \ERI is with respect to the randomness in the random instance $\RI$ (i.e., with respect to the arrival sequence $\{\ajt\}$), and $\E{W_{\ALG}}$ is with respect to the randomness in the random instance, and, if applicable, the random decisions of $\ALG$.
\end{definition}

\section{Algorithms and Main Results} \label{sec:algo-results}
In this section, we introduce the two main algorithms under the assumption of fixed packet arrival rates, i.e., $\avgpacks_{j}^t \equiv \avgpacks_j$. Extensions to this assumption are discussed in \Cref{generalization-non-stationary}. We first introduce \Cref{randomized-scheduling-offline} (\textit{\ALGOFFSPELLED}) for the case of known packet arrival rates, and subsequently \Cref{randomized-scheduling-online} (\textit{\ALGONSPELLED}) for the case where the packet arrival rates are unknown. For each algorithm, we state its corresponding approximation ratio.

\subsection{\texorpdfstring{\Cref{randomized-scheduling-offline}: \ALGOFFSPELLED}{}}
We first introduce \Cref{randomized-scheduling-offline}, a scheduling algorithm that probabilistically forwards packets at each time slot based on their \textit{age} (time since their arrival), \textit{type}, and their \textit{current node} in the system. Specifically, \Cref{randomized-scheduling-offline} forwards a type-$j$ packet, which is currently at node $v$, over an outgoing link  $\ell \in \AO v$ with a probability determined by forwarding variables $f_{j\ell}^\trl$, which depend on the packet type $j$ and the number of time slots $\trl \in \{0,1,\cdots, d_j\}$ that the packet has been in the system so far (i.e., its age). 
To determine the forwarding probabilities, \Cref{randomized-scheduling-offline} solves a Linear Program (\LP), as a preprocessing step.
In the following, we first explain the \LP and subsequently discuss the forwarding process based on its solution.  

The forwarding variables $\mathbf f=\{f_{j\ell}^\tau\}$ are computed as the solution to the following \LP, which we refer to as $\FS$:
\begin{subequations}\label{FS}
\begin{align}
\max_{\mathbf{f}} \quad & 
 \sum_{j=1}^{\PKT} w_{j} \avgpacks_{j} \sum_{\ell \in \AO {s_j}} \fl{j} \ell {\FI}
\quad(:=\FS) \label{FS-obj} \\ 
\textrm{s.t.} \quad & 
\sum_{\ell \in \AO{s_j}} \fl j \ell \FI \leq 1,\quad \forall j\in \PD, \label{FS:one-choice}\\
& 
\sum_{\ell \in \AI{v}} \fl{j}{\ell}{\trl-1}=\sum_{\ell \in \AO{v}} \fl j \ell \trl,\quad \forall v\in \V, \trl \in [d_j], \forall j\in \PD, \label{FS:time-conserv}\\
&f_{j\ell}^{\FI}=0,\ \ \forall \ell\not \in \AO{s_{j}}, 
\quad f_{j\ell}^{d_j}=0,\ \ \forall \ell\not \in \AI{z_j}, \label{FS:edge-cases}
\\
& 
\sum_{j=1}^{\PKT} \sum_{\trl=\FI}^{d_j} \avgpacks_j \fl j \ell {\trl} \leq \frac{\C_{\ell}}{1+\epsilon},\quad \forall \ell\in \LNKS, \label{FS:capacity}\\
& \fl j {\ell} {\trl} \geq 0,\quad \forall j \in \PD,\forall \ell \in \ELNKS,\forall \trl\in [\DMAX]_{\FI}. \quad \label{FS-postv}
\end{align}
\end{subequations}

The objective of \FS \dref{FS-obj} is to maximize the expected weighted number of arriving packets (i.e. packets with age $\trl=\FI$) that are forwarded, which corresponds to the packets attempted to be scheduled. Constraint \dref{FS:one-choice} states that the total forwarding probability of arriving packets of type $j$ out of its source node is at most $1$. If the constraint is not tight, there is a chance that the packet will not be forwarded, and therefore be dropped. 
The constraints \dref{FS:time-conserv} require that packets that have been in the system for $\tau-1$ time slots, which have been forwarded into some node $v$, must be forwarded out of the node $v$, or remain in $v$ (through a self-loop), in the following slot, with their age increasing to $\tau$. This ensures that packets are continuously forwarded, hop by hop, until their expiration\footnote{Due to the definition of route-schedules $k$ using the self-looped graph $\EG$, packets of type $j$ that are delivered earlier than their deadline to their destination $z_j$, are ``scheduled'' over $(z_j,z_j)$ until their expiry.}. Constraints \dref{FS:edge-cases} ensure that all arriving packets (i.e., with age $\trl=\FI)$ can only be forwarded out of their sources (first constraint), and must only be delivered into their destinations by their deadline (second constraint). Finally, \dref{FS:capacity} enforces the capacity constraint for each link. It requires that a link's average total traffic, due to the forwarding of packets of different ages and packet types, cannot be more than the link's capacity, divided by $(1+\epsilon)$ for some $\epsilon >0$. 
The notations and constraints of $\FS$ are illustrated through an example in \Cref{flow-route-example-2}. \textit{Note that $\FS$ is independent of the time horizon $T$, unlike $\RI$}.

\begin{figure}[t]
\centering
\begin{subfigure}{.5\textwidth}
  \centering
  \includegraphics[width=0.7\linewidth]{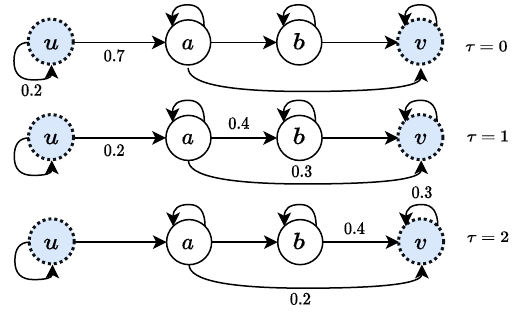}
  \caption{Forwarding variables $f_{j\ell}^\trl$, $\trl\in\{0,1,2=d_j\}$}
  \label{fig:flow-assignment-example}
\end{subfigure}%
\begin{subfigure}{.5\textwidth}
  \centering
  \includegraphics[width=0.7\linewidth]{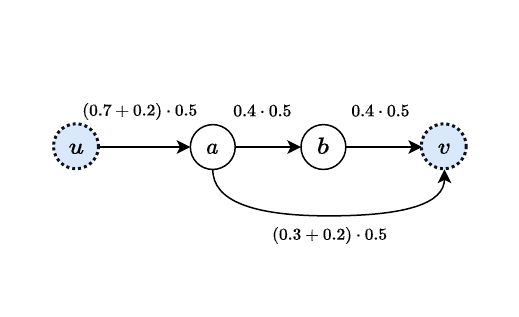}
  \caption{Links' capacity consumption}
  \label{fig:flow-constraint}
\end{subfigure}
\caption{For the network in \Cref{flow-route-example-1}, consider a packet type $j$ with arrival rate $\abar=0.5$, source $s_j=u$, destination $z_j=v$ and deadline $d_j=2$. 
\Cref{fig:flow-assignment-example} shows a feasible forwarding variables assignment $\mathbf f$ for \FS \dref{FS-obj}-\dref{FS-postv} (with, e.g., $\fl j {(u,a)} {\FI} = 0.7$ and $\fl j {(v,v)} {\TI}=0.3$). The capacity constraints \dref{FS:capacity} should sum the total flow on the link over different ages $\tau$, which is illustrated in \Cref{fig:flow-constraint}. For example, on link $(a,v)$, the total flow is $\big(\fl j {(a,v)} {\SI}+ \fl j {(a,v)} {\TI}\big) \overline a_{j} = (0.3+0.2) 0.5$.}
\label{flow-route-example-2}
\end{figure}

\RestyleAlgo{ruled}
\begin{algorithm}
 \caption{\ALGOFFSPELLED ($\ALGOFF)$}
 \label{randomized-scheduling-offline}
\SetAlgoLined
\textbf{Input:} Packet types $\{(s_j,z_j,d_j,w_j,\lambda_j)\}$, and an $\epsilon>0$.

Find optimal solution $\mathbf f^\star = \{f_{j\ell}^{t\star}\}$ to the  \LP $F_S$ \dref{FS-obj}-\dref{FS-postv}. \label{find-static}

\For{each time $t=1,2,\cdots, \TC$}{ \label{algo-time-for}
 \For{each unexpired packet of type $j$ at each node $v\in \V$}{ \label{alg-packet-for}
%
\If {age of packet is $\tau=0$} { \label{age-0-packet-if} 
Select link $\ell \in \AO v$ w.p. $ \flo j \ell {\FI}$. Else, drop w.p. $\big(1-\sum_{\ell \in \AO {s_j}} \flo j \ell {\FI}  \big)$. \label{first-link-of-packet}
}
\ElseIf{age of packet is $1\leq \tau \leq d_j$} {
Select a link $\ell \in \AO{v}$ w.p. 
$\frac{f_{j\ell}^{\trl\star}}{\sum_{\ell \in \AO{v} } f_{j\ell}^{\trl\star}}$ \label{older-packet-link}
}
Forward packet over selected link $\ell$ if there is available capacity on $\ell$ at $t$, else drop. \label{forward-if-capacity}
  } \label{alg-packet-for-end}
 }
\end{algorithm}

%
%
%
%

After solving the \LP $F_S$, we obtain a solution $\mathbf f^\star = \{\flo j \ell \tau\}$, which we use in \Cref{randomized-scheduling-offline} to schedule unexpired packets at every time slot $t$ and at every node $v$ (\Cref{alg-packet-for}-\Cref{alg-packet-for-end} in  \Cref{randomized-scheduling-offline}). Specifically, for new packets (i.e., with age $\trl=0$, \Cref{age-0-packet-if}), we select their first link $\ell \in \AO {s_j}$ with probability $ \flo j \ell {\FI}$, or drop the packet with probability
$1-\sum_{\ell \in \AO{s_j}} \flo j \ell {\FI}$ (\Cref{first-link-of-packet}). 
For packets with age $\trl>0$, we always select one of the outgoing links probabilistically (\Cref{older-packet-link}).
We then forward the packet over the selected link if there is available capacity (\Cref{forward-if-capacity}). Overall each packet is forwarded hop by hop using variables $\{\flo j \ell \tau\}$, until it reaches its destination. 
Although the algorithm only rejects packets when they arrive, once a packet is admitted, in subsequent time slots, it might also be dropped due to the capacity constraints (\Cref{forward-if-capacity}).

\Cref{main-theorem-offline} states the performance guarantee of \Cref{randomized-scheduling-offline} 
in the case that the arrivals $a_{j}^t$ for each type $j$ are i.i.d.  Bernoulli or Binomial. 

\begin{theorem}
    Given an $\epsilon \in (0,1/3)$, $\ALGOFF$ provides $(1-3\epsilon)$-approximation to $\RI$ when $\CMN \geq 2 \left(\frac{ 1+\epsilon}{\epsilon}\right)^2 \log (L/\epsilon)$ and $\TC\geq \frac{2 \DMAX^2} {\epsilon}$.  \label{main-theorem-offline}
\end{theorem}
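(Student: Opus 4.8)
The plan is to sandwich both $\ERI$ and the algorithm's value between constant multiples of $\TC\cdot\mathrm{OPT}(\FS)$, where $\mathrm{OPT}(\FS)$ is the optimal value of the LP $\FS$. On one side I would prove $\ERI\le (1+\epsilon)(1+\tfrac\epsilon2)\,\TC\,\mathrm{OPT}(\FS)$, and on the other $\E{W_{\ALGOFF}}\ge (1-\epsilon)\,\TC\,\mathrm{OPT}(\FS)$; dividing the two bounds and simplifying for $\epsilon\in(0,1/3)$ then gives $\E{W_{\ALGOFF}}\ge(1-3\epsilon)\,\ERI$.

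\emph{Step 1 (LP relaxation bound).} Fix an arrival instance, let $\mathbf y^\star$ be an optimal integral solution of $\RI$, and define the averaged flow $\hat f_{j\ell}^{\trl}:=\frac{1}{\lambda_j \TC}\,\ES\big[\sum_{t=1}^{\TC}\sum_{n=1}^{\ajt}\sum_{k\in\mathcal K_j:\,k_\trl=\ell}\vRI{nt\star}\big]$. Since every scheduled packet uses exactly one route-schedule, which is a complete walk in $\EG$ from $s_j$ to $z_j$, I would verify that $\hat{\mathbf f}$ satisfies \dref{FS:one-choice} (using $\ES[\ajt]=\lambda_j$), the age-conservation equations \dref{FS:time-conserv} (a walk enters a node at one step exactly when it leaves it at the next), and the boundary conditions \dref{FS:edge-cases}, while its $\FS$-objective equals $\tfrac1\TC\ERI$ because $k_\FI\in\AO{s_j}$ always. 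For \dref{FS:capacity}, reindexing packets by their transmission slot gives $\sum_{j,\trl}\lambda_j\hat f_{j\ell}^{\trl}=\tfrac1\TC\,\ES\big[\sum_{t'=1}^{\TB}(\text{number of packets on }\ell\text{ at }t')\big]\le\tfrac{\TB}{\TC}\,\C_\ell=(1+\DMAX/\TC)\,\C_\ell\le(1+\tfrac\epsilon2)\C_\ell$, using \dref{RIT:capacity} and $\TC\ge 2\DMAX^2/\epsilon$. Scaling $\hat{\mathbf f}$ down by $\tfrac1{(1+\epsilon)(1+\epsilon/2)}$ then produces a feasible point of $\FS$ (every remaining constraint is homogeneous or of ``$\le$'' type), yielding the claimed bound on $\mathrm{OPT}(\FS)$.

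\emph{Step 2 (algorithm bound).} Let $\mathbf f^\star$ be the $\FS$-optimum used by \ALGOFF and $F_j:=\sum_{\ell\in\AO{s_j}}\flo{j}{\ell}{\FI}\le1$, so that $\mathrm{OPT}(\FS)=\sum_j w_j\lambda_j F_j$. I would first study the \emph{idealized run} of \ALGOFF with the capacity test removed, so packets are rejected only on arrival. An induction on the age $\trl$, using \dref{FS:time-conserv} and the renormalization in \Cref{older-packet-link}, shows that in the idealized run an arriving type-$j$ packet is forwarded on link $\ell$ at age $\trl$ with probability exactly $\flo{j}{\ell}{\trl}$, so by \dref{FS:capacity} the expected number of packets on any link $\ell$ at any time is at most $\C_\ell/(1+\epsilon)$. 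Next I would couple the idealized and the actual runs through the same arrivals and routing coins; an induction over time shows that the actual link occupancies are pointwise dominated by the idealized ones, hence that whenever the actual run drops a packet on a (necessarily non-self-loop) link $\ell$ at time $t'$, the idealized occupancy of $(\ell,t')$ strictly exceeds $\C_\ell$ --- an ``overflow''. Consequently an arriving packet that is admitted and meets no overflow on the real links of the walk $K$ it follows is delivered on time. Conditioning on an admitted type-$j$ packet arriving at $t$ and following a fixed walk $K$, for each real link $K_\trl$ the idealized occupancy of $(K_\trl,t+\trl)$ equals $1$ plus a sum of independent bounded contributions of the other packets with mean at most $\C_{K_\trl}/(1+\epsilon)$; a multiplicative Chernoff bound, together with $\CMN\ge 2\big(\tfrac{1+\epsilon}{\epsilon}\big)^2\log(L/\epsilon)$, bounds the overflow probability at $(K_\trl,t+\trl)$ by $\epsilon/L$, and a union bound over the at most $L$ real hops of $K$ bounds the non-delivery probability by $\epsilon$. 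Therefore each arriving type-$j$ packet is delivered with probability at least $(1-\epsilon)F_j$, and summing $w_j$ over packets with $\sum_t\ES[\ajt]=\lambda_j\TC$ gives $\E{W_{\ALGOFF}}\ge(1-\epsilon)\TC\sum_j w_j\lambda_j F_j=(1-\epsilon)\TC\,\mathrm{OPT}(\FS)$ --- and this per-packet argument is manifestly weight-free.

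\emph{Main obstacle.} The hard part will be Step 2: turning the two inductions into fully rigorous couplings --- in particular the claim that every capacity drop in the real execution is witnessed by an overflow of the idealized execution --- and running the Chernoff step with the correct conditioning, so that a packet's own routing choice does not destroy the independence of the other packets' contributions to the congestion it sees (for Binomial arrivals one should first split each arrival batch into its independent Bernoulli trials, which also recovers the stated mean). A lesser nuisance is keeping the $O(\epsilon)$ bookkeeping --- including the $(1+\DMAX/\TC)$ boundary loss, which is exactly why $\TC\ge 2\DMAX^2/\epsilon$ is imposed --- tight enough to land at $(1-3\epsilon)$.
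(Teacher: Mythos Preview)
Your proposal is correct and takes a genuinely different route on the upper-bound side (your Step~1). The paper proceeds through two intermediate LPs: it first relaxes $\RI$ to the expected instance $\EI$ (Lemma~\ref{expected-random-integer-lemma}), then uses a symmetrization argument (Lemma~\ref{static-is-good-lemma}) to show that $\EI$ admits a near-optimal \emph{stationary} solution --- this is where the factor $(1-2\DMAX^2/\TC)$, and hence the hypothesis $\TC\ge 2\DMAX^2/\epsilon$, originates --- and only then passes to $\FS$. You instead construct a feasible point of $\FS$ directly, by averaging the optimal $\RI$ decisions over time and over the arrival randomness; this bypasses $\EI$, $\EST$, and the symmetrization lemma entirely. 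Your capacity estimate only incurs the boundary correction $(1+\DMAX/\TC)$, so your argument in fact establishes the theorem under the much weaker hypothesis $\TC\ge 2\DMAX/\epsilon$ (which suffices for $(1+\DMAX/\TC)\le 1+\epsilon/2$), although you do not exploit this. The paper's route, on the other hand, makes the structural point that a stationary solution is near-optimal for $\EI$, which it later reuses for the periodic extension. Your Step~2 is essentially the paper's Lemma~\ref{small-drop-chance-lemma}: the concentration bound and the union bound over at most $L$ real hops are identical, and your explicit coupling between the idealized and actual runs --- together with your handling of the conditioning on the tagged packet's walk so that the other packets' contributions remain independent --- is, if anything, more careful than the paper's own somewhat informal union-bound step.
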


\begin{remark}
Based on \Cref{main-theorem-offline}, we  require $\CMN\geq \Omega( \frac{\log L/\epsilon} {\epsilon^2})$ for $(1-\epsilon)$-approximation to \RI on average. This is in vast contrast to prior literature \cite{tsanikidis2022online,deng2019online} that for $\CMN=\Omega(\log ({\rho L}))$, resulted in $\Omega(1/\log ({\rho L}))$ approximation ratios for the worst case, where $\rho=\WMAX/\WMN$ was the ratio of maximum weight to minimum weight of packets. 
These findings indicate that, in situations where the traffic is stochastic, the results presented in prior work can be overly pessimistic and significant improvements can be made. Further, compared to \cite{singh2018throughput} with stochastic traffic, our results are \textit{not} asymptotic, i.e., we do not require scaling the arrival rates, capacities, and time horizon to infinity in order to obtain the stated performance guarantees. 
\end{remark}
\begin{remark}
    \Cref{randomized-scheduling-offline} admits a distributed implementation following the computation of $\FS$. Every intermediate node only requires storing the forwarding variables for each packet type based on their age.
\end{remark}
\begin{remark}
 Our results extend to unreliable networks, where the capacity of link $\ell$ at each time is random with average $\C_{\ell} p_{\ell}$ for some $p_\ell \in (0,1]$. \Cref{randomized-scheduling-offline} can be directly applied by replacing $\C_{\ell}$ with $\C_{\ell} p_{\ell}$ in $F_S$ \dref{FS-obj}-\dref{FS-postv}. In the case that the link's capacity is i.i.d. $Binomial(\C_{\ell},p_{\ell})$, we obtain a $(1-\epsilon)$-approximation ratio when $\min_{\ell }\{\C_{\ell} p_{\ell} \}= \Omega(\frac 1 {\epsilon^2}\log\frac{L} \epsilon)$.  Conceptually, the effective number of packets that can be scheduled on link $\ell$ now becomes $\C_{\ell} p_\ell$, and therefore, we require this product to satisfy a similar bound as in \Cref{main-theorem-offline}. \label{unreliable-networks-remark}
\end{remark}
\begin{remark} 
\Cref{main-theorem-offline} can be generalized to non-stationary arrival rates, dependent packet arrivals, and more general distributions than Bernoulli and Binomial. This is described in 
\Cref{generalization-non-stationary}. In particular, as a result of these extensions, in cases where a very short horizon is of interest, a non-stationary solution will still guarantee a near-optimal performance, in contrast to \Cref{randomized-scheduling-offline}.
\end{remark}

\subsection{Algorithm 2: Dynamic Learning with Probabilistic Forwarding}

In this section, we provide a dynamic variant of  \Cref{randomized-scheduling-offline} which does not require the knowledge of the arrival rates $\{\abar: j \in \PD\}$. Instead, it estimates the arrival rates dynamically over time and uses the estimates as input to \Cref{randomized-scheduling-offline}. The dynamic variant is described in \Cref{randomized-scheduling-online}.

\Cref{randomized-scheduling-online} works by dividing the time horizon $\TC$ into $(\PHASES+1)$ phases, $\phi=0,1,\cdots,\PHASES$, where $\PHASES=[\log_2(1/\epsilon)]$ for some suboptimality parameter $\epsilon \in (0,1)$. The  phases have geometrically increasing duration, with phase $\phi$ for $\phi \geq 1$ having a duration $\DMAX+\TC_\phi$, where $\TC_\phi=\TC_0 2^{\phi-1}$, and phase $\phi=0$ having duration $\TC_0 = \epsilon (\TC-\DMAX \PHASES)$. An illustration of the time division into phases is provided in \Cref{time-horizon-figure}.
During the first phase $\phi=0$, due to the lack of any statistics, the algorithm can either remain idle, or schedule packets in an arbitrary way (for example, according to algorithms in \cite{tsanikidis2022online,gu2021asymptotically}). At the end of the first phase, the algorithm estimates $\ahat$ (see \Cref{observe-algo} of \Cref{randomized-scheduling-online}) for the first time, using the number of packet arrivals of different packet types during the phase. The optimal forwarding probabilities $\mathbf {\hat f}=\{\hat f_{j\ell}^{\tau\star}\}$ are then found by solving $\FS$ (as defined in \dref{FS-obj}-\dref{FS-postv}) using the current estimates $\ahat$ instead of the actual arrival rates $\abar$ (after invoking \Cref{run-previous-algo}). The packets during the following phase are then scheduled according to the forwarding probabilities $\mathbf {\hat f}=\{\hat f_{j\ell}^{\tau\star}\}$ as in Lines \ref{alg-packet-for}-\ref{alg-packet-for-end} of \Cref{randomized-scheduling-offline} (\Cref{run-previous-algo}). More specifically, in \Cref{scaling-parameter}, we calculate a shrinkage parameter which is used to further scale down the right-hand-side of \dref{FS:capacity} (\Cref{run-previous-algo}), based on the confidence of our estimates. We allow a tuning parameter $\zeta \geq 0$ to reduce the effect of the shrinkage, with $\zeta=0$ yielding no additional shrinkage, as in  \Cref{randomized-scheduling-offline} to obtain \Cref{main-theorem-offline}. Similarly, in each new phase, we repeat the above by updating estimates $\ahat$ and therefore the forwarding probabilities.

\begin{figure}[t]
\centering
\includegraphics[width=0.6\linewidth]{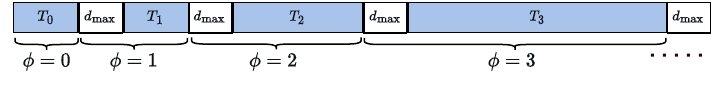}
\caption{The division of time horizon into phases of geometrically increasing duration in \Cref{randomized-scheduling-online}.
}
\label{time-horizon-figure}
\end{figure}

\RestyleAlgo{ruled}
\begin{algorithm}
 \caption{\ALGONSPELLED (\ALGON)}
 \label{randomized-scheduling-online}
\SetAlgoLined
\SetCommentSty{textit}  

\textbf{Input:} Suboptimality parameter $\epsilon \in (0,1)$. Tuning parameter $\zeta\geq 0$. 

$\PHASES \leftarrow \left[\log_2(1/\epsilon)\right]$ 
\quad \hfill {\textit{$\blacktriangleright$\  number of phases}} \label{phases-definition}

$\TC_0 \leftarrow \epsilon (\TC-\DMAX \PHASES)$  \label{t0-definition}

Run any scheduling algorithm for  $\TC_0$ slots while collecting statistics. \label{idle-algo}

\For{each time phase $\phi=1,2,\cdots,\PHASES$}{

Update estimates $\ahat$ of $\abar$ using historical samples of $\ajt$ for all $j\in \PD$ \label{observe-algo}


$\TC_\phi \leftarrow \TC_0 2^{\phi-1}$ \label{phi-phase-subduration}

$\eta \leftarrow (1-\zeta \sqrt{\log (2\PKT/\epsilon)/(\amin \TC_\phi)})$ \label{scaling-parameter}

  Run $\ALGOFF$, with estimates $\ahat$ and replacing $\C_\ell$ with $\eta \C_\ell$ in \dref{FS:capacity},  for $(\DMAX+\TC_\phi)$ time slots. \label{run-previous-algo}
}
\end{algorithm}

We remark that in \Cref{observe-algo}, estimating the arrival rates can be performed based on  a simple empirical average of the number of packet arrivals of different types using the past observations. 
The performance of \Cref{randomized-scheduling-online} is described in \Cref{online-theorem} in the case that the arrivals $\{a_{j}^t\}$ are i.i.d. Bernoulli or Binomial random variables. For sufficiently large horizon $\TC$, the algorithm is near-optimal, and only requires solving  $F_S$ a total of $[\log_2(1/\epsilon)]$ times.
\begin{theorem}
\Cref{randomized-scheduling-online} for parameters $\epsilon>0$ and $\zeta=1$, yields a $(1-10\epsilon)$-approximation to $\RI$ when $\CMN \geq 2 \left(\frac{ 1+\epsilon}{\epsilon}\right)^2 \log {\frac L {\epsilon}}$ and \[\TC=\Omega\left( \frac{a_{\max}^2}{2 \amin^2 \epsilon^2} \log(\frac{2 \PKT}{\epsilon}) + 2\DMAX^2 \log(1/\epsilon)/\epsilon \right).\]
\label{online-theorem}
\end{theorem}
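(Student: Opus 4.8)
The plan is to reduce \Cref{online-theorem} to a per-phase application of \Cref{main-theorem-offline}, paying a total of $O(\epsilon)$ for the throw-away phase $\phi=0$, for the estimation error in each learning phase, and for the $\DMAX$-slot warm-up at each of the $\PHASES+1=O(\log(1/\epsilon))$ phase boundaries. Write $\Lambda^\star$ for the optimal value of the linear program $\FS$ at the true rates $\{\abar\}$ and capacities $\{\C_\ell\}$, and $\widehat\Lambda^\star$ for its optimal value when the right-hand side of \dref{FS:capacity} is relaxed from $\C_\ell/(1+\epsilon)$ to $\C_\ell$. From the proof of \Cref{main-theorem-offline} I would extract two ingredients: (a) the benchmark satisfies $\ERI\le(\TC+\DMAX)\,\widehat\Lambda^\star$ and $\widehat\Lambda^\star\le(1+\epsilon)\Lambda^\star$ (a feasible solution for capacity $\C_\ell$ can be scaled down by $1+\epsilon$); and (b) a sub-horizon form of the offline guarantee --- running $\ALGOFF$ for $T'\ge 2\DMAX^2/\epsilon$ slots with forwarding variables $\mathbf f$ that are feasible for the constraints of $\FS$ under the arrival rates actually in force collects, in expectation over those $T'$ slots, at least $(1-3\epsilon)(T'-O(\DMAX))$ times the $\FS$-objective of $\mathbf f$. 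Since the dynamic algorithm's reward is the sum over phases, it suffices to show that on a high-probability event each sufficiently long learning phase $\phi$ contributes at least $(1-O(\epsilon))\,\TC_\phi\,\Lambda^\star$, sum, and compare with (a).

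The new content is the concentration-and-shrinkage step. Let $\eta_\phi$ be the factor of \Cref{scaling-parameter} (with $\zeta=1$) and put
\[
E=\bigcap_{\phi=1}^{\PHASES}\ \bigcap_{j\in\PD}\ \bigl\{\,|\ahat^{(\phi)}-\abar|\le(1-\eta_\phi)\,\abar\,\bigr\}.
\]
Since arrivals are i.i.d.\ across slots, $\ahat^{(\phi)}$ is an empirical mean of $\Theta(\TC_\phi)$ independent Bernoulli/Binomial samples collected strictly before phase $\phi$; a Chernoff/Bernstein tail bound (using that such variables have variance at most their mean) together with a union bound over the at most $\PKT\PHASES$ pairs $(j,\phi)$ gives $\mathbb P(E)\ge 1-O(\epsilon)$ --- the $\sqrt{\log(2\PKT/\epsilon)/(\amin\TC_\phi)}$ shape of $1-\eta_\phi$ is exactly what this bound calls for, and demanding in addition that $\eta_\phi\ge 1-O(\epsilon)$ for every non-negligible phase is what forces the first term $\tfrac{a_{\max}^2}{2\amin^2\epsilon^2}\log\tfrac{2\PKT}{\epsilon}$ of the horizon hypothesis. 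I would then use a monotonicity property of $\FS$: substituting the flow variables $g_{j\ell}^{\trl}=\lambda_j f_{j\ell}^{\trl}$ turns the objective into $\sum_{j\in\PD}w_j\sum_{\ell\in\AO{s_j}}g_{j\ell}^{\FI}$, with the rates entering only through the per-type caps $\sum_{\ell\in\AO{s_j}}g_{j\ell}^{\FI}\le\lambda_j$, so the optimal value is nondecreasing in each $\lambda_j$ and each $\C_\ell$ and scales linearly under a common rescaling of rates and capacities. Hence, on $E$, the forwarding solution $\widehat{\mathbf f}^{(\phi)}$ that \Cref{run-previous-algo} computes from $(\ahat^{(\phi)},\eta_\phi\C_\ell)$ has, when re-evaluated at the true rates, $\FS$-objective at least $\tfrac{\eta_\phi}{1+O(\epsilon)}\Lambda^\star\ge(1-O(\epsilon))\Lambda^\star$, and it keeps the true average link load below $\max_{j\in\PD}(\abar/\ahat^{(\phi)})\cdot\eta_\phi\C_\ell/(1+\epsilon)\le\C_\ell/(1+\epsilon)$ --- precisely the feasibility needed for ingredient (b) inside phase $\phi$.

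Assembling: condition on $E$. For every learning phase with $\TC_\phi\ge 2\DMAX^2/\epsilon$, ingredient (b) applied to its useful sub-horizon (of length $\TC_\phi$ after discarding the opening $\DMAX$ slots) gives expected reward at least $(1-3\epsilon)(1-O(\epsilon))\TC_\phi\Lambda^\star$; any earlier, too-short phase and phase $\phi=0$ we bound below by $0$. Because the $\TC_\phi=\TC_0 2^{\phi-1}$ grow geometrically, the combined length of the discarded short phases, of phase $0$, and of the $O(\log(1/\epsilon))$ opening-$\DMAX$ warm-ups is $O(\epsilon\TC)$ under the theorem's horizon hypothesis (this is what the $2\DMAX^2\log(1/\epsilon)/\epsilon$ term pays for), whereas the surviving phases satisfy $\sum_\phi\TC_\phi\ge(1-O(\epsilon))\TC$. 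Therefore $\E{W_{\ALGON}}\ge\mathbb P(E)(1-O(\epsilon))\TC\Lambda^\star\ge(1-O(\epsilon))\tfrac{\TC}{(1+\epsilon)(\TC+\DMAX)}\ERI$ by ingredient (a), and collecting all the $\epsilon$-losses gives $\E{W_{\ALGON}}\ge(1-10\epsilon)\ERI$. The capacity hypothesis $\CMN\ge 2((1+\epsilon)/\epsilon)^2\log(L/\epsilon)$ is used only through ingredient (b), i.e.\ through \Cref{main-theorem-offline} invoked inside each phase.

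The hard part is the joint calibration. The shrinkage $1-\eta_\phi$ must be large enough that, with probability $1-O(\epsilon)$ over the estimation sample, the forwarding solution built from $\ahat^{(\phi)}$ still respects the true link capacities --- a demand bounded from below by the Bernoulli/Binomial tail, yielding the $\sqrt{\log(\PKT/\epsilon)/(\amin\TC_\phi)}$ scaling --- yet small enough to forfeit at most an $O(\epsilon)$ fraction of $\Lambda^\star$, which lower-bounds $\TC_\phi$ and hence $\TC$; at the same time each of the $O(\log(1/\epsilon))$ phase boundaries burns a $\DMAX$ warm-up and the earliest phases are simply too short for either the concentration bound or \Cref{main-theorem-offline} to apply, so one must confirm these defects are confined to an $O(\epsilon)$ fraction of the horizon's value. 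The geometrically increasing phase schedule is the device that makes all of these requirements compatible, since it concentrates a constant fraction of the slots in the last --- most accurate and best-amortized --- phase. A minor but necessary observation is that $\ahat^{(\phi)}$ is built from arrivals strictly before phase $\phi$ and is therefore, by the i.i.d.-across-slots assumption, independent of the phase-$\phi$ dynamics, so conditioning on $E$ and taking expectations phase by phase is legitimate.
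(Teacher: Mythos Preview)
Your overall architecture matches the paper's: define a high-probability estimation event, show that on it the phase-$\phi$ forwarding solution is feasible for the true $\FS$ with controlled objective loss, invoke the offline guarantee inside each phase, and sum. Your monotonicity argument via the substitution $g_{j\ell}^{\trl}=\lambda_j f_{j\ell}^{\trl}$ is a clean alternative to the paper's \Cref{static-approximation}, which argues directly in the route-schedule formulation $\EST$.

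There is, however, a real gap in how you aggregate the phase-wise losses. You assert that on $E$ the phase-$\phi$ objective is at least $\eta_\phi\Lambda^\star/(1+O(\epsilon))\ge(1-O(\epsilon))\Lambda^\star$, and that ``demanding $\eta_\phi\ge 1-O(\epsilon)$ for every non-negligible phase is what forces the first term'' of the horizon hypothesis. This is false under the stated hypothesis. With $\TC=\Theta\big(\tfrac{a_{\max}^2}{\amin^2\epsilon^2}\log(\PKT/\epsilon)\big)$, the first learning phase has $\TC_1=\epsilon\TCP$, so $1-\eta_1=\sqrt{\log(2\PKT/\epsilon)/(\amin\TC_1)}=\Theta(\sqrt{\epsilon})$, not $O(\epsilon)$. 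In fact only phases $\phi\ge\PHASES-O(1)$ satisfy $\eta_\phi\ge 1-O(\epsilon)$, and those cover only a constant fraction of the horizon, not a $(1-O(\epsilon))$ fraction. If you insist on a uniform $(1-O(\epsilon))$ per-phase bound you need $\TC_1\ge\Omega(\log(\PKT/\epsilon)/(\amin\epsilon^2))$, hence $\TC=\Omega(\log(\PKT/\epsilon)/(\amin\epsilon^{3}))$---a factor $1/\epsilon$ worse than the theorem. The paper's own remark about the ``simpler two-phase algorithm'' is exactly this observation.

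The fix, which is the core of the paper's proof, is to keep the phase-dependent loss explicit rather than bound it uniformly. Phase $\phi$ delivers a $(1-4\epsilon)(1-4\epsilon_\phi)$ fraction of the per-slot optimum with $\epsilon_\phi=\sqrt{\mu\log(2\PKT/\epsilon)/\TC_\phi}$ (\Cref{number-of-samples} and \Cref{static-approximation}), and one then bounds the \emph{weighted} loss $\sum_{\phi=1}^{\PHASES}(\TC_\phi/\TCP)\,\epsilon_\phi$. Because $\TC_\phi=\epsilon\TCP 2^{\phi-1}$, each summand equals $\sqrt{\epsilon\mu\log(2\PKT/\epsilon)/\TCP}\cdot\sqrt{2}^{\,\phi-1}$; this is geometric with ratio $\sqrt 2$, dominated by its top term $\Theta\big(\sqrt{\mu\log(2\PKT/\epsilon)/\TCP}\big)$, and requiring the sum to be at most $\epsilon$ is precisely what yields the $1/\epsilon^2$ horizon. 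So the geometric phasing is not only an amortization device for the $\DMAX$ warm-ups as you write; its more essential role is to make this weighted sum small even though the individual $\epsilon_\phi$ are not all $O(\epsilon)$.
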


\begin{remark}
Algorithm 2 provides a baseline approach under stationary model assumptions. For non-stationary packet arrival rates, a more natural approach is to update the estimates of $\ahat$ periodically, for example, by using a fixed $\TC_\phi$, or combining past estimates $\ahat$ such that more recent samples are prioritized. Further the estimates in \Cref{observe-algo} could leverage prior distributional assumptions, e.g., through a Bayesian approach, or use more complex machine learning methods.
\end{remark}

\begin{remark}
An alternative simpler two-phase algorithm, which splits the horizon into two phases $\TC_0$ and $\TC_1$ with appropriate lengths, can achieve $1-O(\epsilon)$ approximation ratio, but, for small $\epsilon$, the required time horizon will be roughly a factor $1/\epsilon$ lager than the stated $\TC$ in \Cref{online-theorem}.   

\end{remark}

\begin{remark}
Although a small $\amin$ suggests the need of a large horizon in \Cref{online-theorem}, this is a consequence of a multiplicative $(1-O(\epsilon))$ approximation ratio considered here for uniformity across our results. In practice this typically is not required as validated in our simulations in \Cref{simulation-section}. In particular, packets with very small $\amin$ effectively have a small impact on the performance of the algorithm, and can be handled as special cases if they have big weights, or are otherwise entirely ignored. This can be formalized by modifying the approximation ratio definition, such that it is satisfied up to additive constants. 
Finally, we remark that although in our analysis we assume  the knowledge of $\amin$ for simplicity, we can extend the analysis and remove this assumption by replacing $\amin$ with its estimate $\hat{\lambda}_{\min}$. In practice, $\zeta$ can either be set to $0$ or be set such that the knowledge of $\amin$ is not required.
 \end{remark}


\section{Proofs of Main Theorems} \label{analysis-technique}
In this section, we present the proof of \Cref{main-theorem-offline} and the outline for the proof of \Cref{online-theorem}.
\subsection{\texorpdfstring{Proof of \Cref{main-theorem-offline} for \Cref{randomized-scheduling-offline}}{}}\label{outline-theorem-offline} 
The proof of \Cref{main-theorem-offline} has 4 steps. In \textit{Step 1}, we construct an intermediate \LP, referred to as \EI, whose optimal objective value is higher than that of \RI. In \textit{Step 2}, we argue that this \LP admits a simple near-optimal stationary solution, thus simplifying \EI significantly into a new \LP called \EST. In \textit{Step 3}, we convert the \LP \EST into a flow-based \LP \FS, which is the \LP defined in \dref{FS-obj}-\dref{FS-postv} for \Cref{randomized-scheduling-offline}. Finally, in \textit{Step 4}, we argue that by using the forwarding variables given by \FS appropriately, we obtain \Cref{main-theorem-offline}. We describe each step below in detail.

\label{technique-static}
\textbf{Step 1. } 
We first construct a deterministic \LP below, which we refer to as the Expected Instance, \EI: 
\begin{subequations}\label{EIT}
\begin{align}
\max_{\mathbf{x}} \quad & 
 \sumtotarrv{t} w_j \abart \sumk x_{j k}^t \quad\quad\quad (:=\EI) \label{EIT:first} \\ 
\textrm{s.t.} \quad & 
\sumk x_{jk}^t \leq 1 \label{EIT:one-choice}, \quad \forall t\in [\TC],\forall j\in \PD,\\
& \sumtlast \sumpkt \abartau \sum_{k \in K_{j}:k_{t-\trl}=\ell} x_{j k}^{\trl} \leq \C_{\ell}, \quad \forall \ell\in \LNKS, \forall t\in [\TB], \label{EIT:capacity} \\
& x_{jk}^t \geq 0,\quad \forall j\in \PD, \forall t\in [\TC],\forall k \in \mathcal K_j. \label{EIT:last}
\end{align}
\end{subequations}
In \EI, the variables $\mathbf x=\{x_{jk}^t\}$ can be interpreted as probabilistic (relaxed) versions of $\{y_{jk}^{nt}\}$ (in \RI), i.e., $x_{jk}^t$ can be viewed as the probability of setting $y_{jk}^{nt}=1$ for each $n=1,\ldots,\ajt$.
Then, \EI can be viewed as an \LP for maximizing the expected reward if every packet type $j$ has $\mathbb E\big[\ajt\big]=\abart$ arrivals at each time slot $t$, and the capacity constraints are only satisfied in expectation.  

\Cref{expected-random-integer-lemma} establishes a   relationship between \RI and \EI, for general distributions on $\{\ajt\}$. 
\begin{lemma}\label{expected-random-integer-lemma}
Let $W_{\EI}$ denote the optimal value of the expected instance $\EI$, and $W_{\RI}$ denote the optimal value of $\RI$. Then, for any distribution on $\{\ajt, t \in [\TC], j\in \PD \}$, with $\abart<\infty$, we have
\[W_{\EI}\geq \E{W_{\RI}}.\]
\end{lemma}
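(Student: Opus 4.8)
The plan is to show $W_{\EI} \ge \mathbb{E}[W_{\RI}]$ by taking an optimal (random) solution of $\RI$, averaging it appropriately, and checking that the resulting deterministic vector is feasible for $\EI$ with the same expected objective value. Concretely, for a fixed realization of the arrival sequence $\{\ajt\}$, let $\mathbf y^\star = \{\vRI{nt}\}$ be an optimal integral solution of $\RI$. I would define, for each $j,k,t$,
\[
x_{jk}^t \;=\; \frac{1}{\abart}\,\mathbb{E}\!\left[\sum_{n=1}^{\ajt} \vRI{nt}\right],
\]
where the expectation is over the randomness in the arrival instance (and, if the optimal solution is not unique, over an arbitrary measurable selection rule). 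The normalization by $\abart = \mathbb{E}[\ajt]$ is the natural choice so that $\abart\,x_{jk}^t = \mathbb{E}[\sum_n \vRI{nt}]$ lines up with the per-type contribution appearing in both \EI and \RI.

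\textbf{Feasibility.} For constraint \dref{EIT:one-choice}: by \dref{RIT:one-choice} we have $\sum_k \vRI{nt} \le 1$ for every $n \le \ajt$, so $\sum_k \sum_{n=1}^{\ajt} \vRI{nt} \le \ajt$ pointwise; taking expectations and dividing by $\abart$ gives $\sum_k x_{jk}^t \le \mathbb{E}[\ajt]/\abart = 1$. For the capacity constraint \dref{EIT:capacity}: by \dref{RIT:capacity}, for every realization and every $\ell, t$,
\[
\sum_{\trl=(t-\DMAX)^+}^{t\wedge\TC}\;\sumpkt \sum_{n=1}^{\ajtrl} \sum_{k:k_{t-\trl}=\ell} \vRI{n\trl} \;\le\; \C_\ell,
\]
and multiplying through the definition of $x_{jk}^\trl$ we get $\sum_{\trl}\sum_j \abartau \sum_{k:k_{t-\trl}=\ell} x_{jk}^\trl = \sum_\trl \sum_j \mathbb{E}[\sum_n \sum_{k:k_{t-\trl}=\ell}\vRI{n\trl}] = \mathbb{E}[\text{LHS of }\dref{RIT:capacity}] \le \C_\ell$. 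Nonnegativity \dref{EIT:last} is immediate since the $\vRI{nt}$ are binary and $\abart>0$. (If $\abart = 0$ for some $j$, then $\ajt = 0$ a.s., so no packets of that type ever arrive and we may set $x_{jk}^t = 0$; this edge case just needs a one-line remark.)

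\textbf{Objective comparison.} The objective of \EI at this $\mathbf x$ is $\sum_{t,j} w_j \abart \sum_k x_{jk}^t = \sum_{t,j} w_j \,\mathbb{E}[\sum_{n=1}^{\ajt}\sum_k \vRI{nt}] = \mathbb{E}[\sum_{t,j} w_j \sum_{n=1}^{\ajt}\sum_k \vRI{nt}] = \mathbb{E}[W_{\RI}]$, by linearity of expectation and the definition of the objective \dref{RIT-eq-first}. Since this $\mathbf x$ is feasible for \EI, its objective value is a lower bound on the optimum $W_{\EI}$, giving $W_{\EI} \ge \mathbb{E}[W_{\RI}]$.

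The only genuinely delicate point — and the one I expect to be the main obstacle — is the measurability/selection issue: $\mathbf y^\star$ is defined as "an optimal solution" of a random integer program, so one must argue there exists a jointly measurable choice of optimal solution (e.g. via a fixed deterministic tie-breaking rule over the finite feasible set, invoking a standard measurable-selection argument, or simply noting $\mathcal K_j$ and the feasible region are finite for each realization so a lexicographically-first optimum is measurable) in order for the expectations above to be well-defined. Once that is handled, everything else is pointwise inequalities plus linearity of expectation, so no further subtlety arises; in particular the argument never uses the specific arrival distribution, which is why the lemma holds for arbitrary distributions with finite mean, as claimed.
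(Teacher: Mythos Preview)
Your proposal is correct and follows essentially the same route as the paper: define $x_{jk}^t = \mathbb{E}\big[\sum_{n=1}^{\ajt}\vRI{nt}\big]/\abart$, verify feasibility of \dref{EIT:one-choice}--\dref{EIT:last} by taking expectations of the pointwise constraints \dref{RIT:one-choice}--\dref{RIT-eq-last}, and observe the objective value at this $\mathbf{x}$ equals $\mathbb{E}[W_{\RI}]$. The paper's proof is terser (it simply assumes $\abart>0$ without loss of generality and does not comment on measurable selection), but the substance is identical.
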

\begin{proof}[Proof of \Cref{expected-random-integer-lemma}]
The proof is standard, and provided in \Cref{rem-proofs-offline} for completeness.
\end{proof}

\begin{remark}$W_{\EI}$ only depends on the arrival rates $\abart$ and not on the exact distribution of $\{\ajt\}$, and therefore, it serves as a universal upper bound on $\E{W_{\RI}}$ for all distributions with identical arrival rates.\end{remark}
 
We note that the expected-instance relaxations have been adopted successfully in several other problems to obtain a bound on an integer program  \cite{jiang2020online,sun2020near,devanur2019near,gallego1994optimal}. Here, we aim to construct a scheduling algorithm that obtains a total reward close to the optimal value of \EI, which implies an approximation to \RI due to \Cref{expected-random-integer-lemma}. 
However, we cannot directly solve \EI, as this LP has a number of constraints which grow with the time horizon $\TC$, and moreover, the number of variables could be exponential, due to the exponentially large number of route schedules. We address these issues in Step 2 and Step 3 below.

\textbf{Step 2.} To resolve the issue of the number of capacity constraints \dref{EIT:capacity} in
$\EI$ scaling with $\TC$, we show that in the stationary case of $\abart\equiv \abar$, a time-independent (stationary) solution is near-optimal for \EI, thus allowing us to simplify \EI. We state the result in \Cref{static-is-good-lemma} below. 

\begin{lemma} \label{static-is-good-lemma}
In the stationary case, i.e., $\abart\equiv \abar$,  there is a stationary solution $x_{ik}^t=x_{ik}^\star$ for $\EI$ with objective value $\overline W_{\EI}$ that satisfies 
\begin{equation}
W_{\EI}  \geq \overline W_{\EI} \geq W_{\EI} \left(1- 2 \frac {\DMAX^2}{\TC}\right). \label{static-is-good-inequality}
\end{equation}
\end{lemma}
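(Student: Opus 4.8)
The plan is to establish \Cref{static-is-good-lemma} by taking an optimal solution of $\EI$, \emph{time-averaging} it over the horizon, and then \emph{shrinking} it by the factor $\TC/(\TC+\DMAX)$ in order to absorb the boundary effect created by the $\DMAX$ extra capacity constraints $t\in\{\TC+1,\dots,\TC+\DMAX\}$ in \dref{EIT:capacity}. Concretely, I would fix an optimal solution $\mathbf x=(x_{jk}^t)$ of $\EI$ (value $W_\EI$) and set $x_{jk}^\star:=\frac{1}{\TC+\DMAX}\sum_{t=1}^{\TC}x_{jk}^t\ge 0$, and then show that the time-independent profile $x_{jk}^t\equiv x_{jk}^\star$ is feasible for $\EI$ with objective value $\tfrac{\TC}{\TC+\DMAX}W_\EI$. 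The left inequality $\overline W_\EI\le W_\EI$ needs no argument: a feasible stationary profile is in particular a feasible point of $\EI$, which is a maximization.

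For feasibility, constraint \dref{EIT:one-choice} is immediate since $\sum_k x_{jk}^\star=\frac{1}{\TC+\DMAX}\sum_t\sum_k x_{jk}^t\le\frac{\TC}{\TC+\DMAX}\le1$, and $x^\star\ge0$ gives \dref{EIT:last}. The real content is the capacity constraint \dref{EIT:capacity}. I would sum, for a fixed link $\ell\in\LNKS$, the constraints \dref{EIT:capacity} over all $t\in[\TB]$ (that is $\TC+\DMAX$ constraints, with total right-hand side $(\TC+\DMAX)\C_\ell$) and re-index the resulting double sum by the pair $(\trl,\delta)$ with $\delta:=t-\trl$. Accounting for the truncations $(t-\DMAX)^+$ and $t\wedge\TC$ in $\sumtlast$, each block $\sum_j\lambda_j\sum_{k\in\mathcal K_j:\,k_\delta=\ell}x_{jk}^{\trl}$ with $\trl\in[\TC]$ and $\delta\in\{0,\dots,\DMAX\}$ is produced exactly once, so $\sum_{\delta=0}^{\DMAX}\sum_j\lambda_j\sum_{k:\,k_\delta=\ell}\big(\sum_t x_{jk}^t\big)\le(\TC+\DMAX)\C_\ell$, i.e.\ $\sum_{\delta=0}^{\DMAX}\sum_j\lambda_j\sum_{k:\,k_\delta=\ell}x_{jk}^\star\le\C_\ell$. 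This last quantity is exactly the load the stationary profile puts on $\ell$ at an interior slot; at any slot $t\in[\TB]$ the profile's actual load is a sub-sum of it (only ages $\delta\in\{(t-\TC)^+,\dots,\min(t-1,\DMAX)\}$ occur), hence is also $\le\C_\ell$. So $x^\star$ is feasible.

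The objective count is then immediate: a time-independent profile contributes $\TC$ identical per-slot terms, so its value is $\TC\sum_j w_j\lambda_j\sum_k x_{jk}^\star=\frac{\TC}{\TC+\DMAX}\sum_{t=1}^{\TC}\sum_j w_j\lambda_j\sum_k x_{jk}^t=\frac{\TC}{\TC+\DMAX}W_\EI$. Taking $\overline W_\EI$ to be (at least) this value yields $\overline W_\EI\ge\frac{\TC}{\TC+\DMAX}W_\EI\ge\big(1-\tfrac{\DMAX}{\TC}\big)W_\EI\ge\big(1-\tfrac{2\DMAX^2}{\TC}\big)W_\EI$, where the last step uses $\DMAX\le2\DMAX^2$ for every nonnegative integer $\DMAX$ (this also silently covers the degenerate case $\DMAX=0$, where \dref{EIT:capacity} decouples the slots).

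The only genuinely delicate point is the re-indexing step: one must verify carefully that summing \dref{EIT:capacity} over the whole range $t\in[\TB]$ hits every (arrival-slot, age) pair exactly once despite the boundary truncations $(t-\DMAX)^+$ and $t\wedge\TC$, and that the stationary profile's load at the boundary slots $t\le\DMAX$ and $t>\TC$ is a sub-sum of its interior load. Everything else is bookkeeping.
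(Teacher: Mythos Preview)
Your argument is correct, and in fact it is both simpler and sharper than the paper's own proof. The paper argues in three stages: it first replaces $\EI$ by a cyclically symmetrized variant $\EIB$ (modifying the boundary capacity constraints \dref{EIT:capacity} for $t<\DMAX$ and $t>\TC$ so that the program becomes invariant under time shifts modulo $\TC$); it then uses that symmetry to show $\EIB$ has a stationary optimizer; and finally it converts an optimal solution of $\EI$ into a feasible point of $\EIB$ by zeroing out the variables in the $2\DMAX$ boundary slots, bounding the per-slot loss by $W_\EI\,\DMAX/\TC$ via a pigeonhole argument, for a total loss of $2\DMAX^2/\TC$. Your route bypasses the symmetrization entirely: you average the optimal $\mathbf x$ over time and absorb the boundary effect by dividing by $\TC+\DMAX$ rather than $\TC$, which directly gives feasibility through the telescoping of the summed capacity constraints. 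The payoff is a tighter factor $\TC/(\TC+\DMAX)\ge 1-\DMAX/\TC$, which you then (wastefully) relax to $1-2\DMAX^2/\TC$ only to match the stated bound. The paper's detour through $\EIB$ does buy a conceptual explanation of \emph{why} stationarity should be near-optimal (exact symmetry forces it), and that viewpoint is what the authors later reuse for the periodic-traffic extension; but for the lemma as stated your direct averaging is the cleaner and stronger argument.
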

\begin{proof}[Proof of \Cref{static-is-good-lemma}]
The first inequality in \dref{static-is-good-inequality} follows directly since limiting our solutions to static solutions is an additional constraint on the maximization, and therefore it can only result in a lower objective value. 

We prove the second inequality in \dref{static-is-good-inequality} in three steps. We first provide an overview of the steps. In Step A, we derive from $\EI$, a new optimization problem $\EIB$, which is identical in most of the capacity constraints, except for the capacity constraints for times $t<\DMAX$ or $t>\TC$ which are modified such that they contain more terms (therefore $\EIB$ is more constrained, and any solution to it, will also be feasible for $\EI$). More specifically, these constraints are modified appropriately such that $\EIB$ is symmetrical with respect to time shifts of the solution. In Step B, we leverage the  symmetry of $\EIB$ to argue that it admits a stationary optimal solution. In Step C, we argue that the optimal value for $\EIB$ is close to that of $\EI$. Since any feasible solution to $\EIB$ is also feasible for \EI, and $\EIB$ admits a stationary optimal solution with objective value close to the optimal of $\EI$, then there is a near-optimal static solution to \EI (as in the statement of \Cref{static-is-good-lemma}). 

\textbf{Step A}. Note that $\EI$ is nearly symmetrical w.r.t. time shifts. With a modification to this optimization program, we arrive to a symmetrical optimization program $\EIB$. Consider the capacity constraints \dref{EIT:capacity} in \EI.
If $t<\DMAX$ or $t>\TC$, due to the lower or upper limit of the outer summation being clipped to $0$ or $\TC$, the symmetry of the problem is broken. We can make the program cyclically symmetrical, by appropriately extending the outer summation when it is clipped, and using the modulo operator. Specifically, consider the modified capacity constraints for $\EIB$:
\begin{equation}
     \sum_{\trl=(t-\DMAX)}^{t}  \sumpkt \abar \sumschedl x_{j k}^{\trl \ \mathsf{mod} \ (\TC +1)} \leq \C_{\ell}, \forall \ell,\forall t\in \TC. \label{cap-constr-modified}
\end{equation}
Note that this modification only impacts the constraints \dref{EIT:capacity} for $t<\DMAX$ or $t>\TC$ in \EI. These constraints are replaced with the stronger constraints \dref{cap-constr-modified} for $t\in[\DMAX]$ in \EIB. This modification is visualized in \Cref{fig:illustrate-symmetrizing}. 

\textbf{Step B}. Here, we prove the existence of a stationary optimal solution in \EIB. To see this, consider any optimal solution to $\EIB$ (which might be time-dependent), denoted with $\mathbf x^\star=\{x_{jk}^{t\star}\}$.
Let us denote a time-shift of all variables of $\mathbf x^\star$ with $\tprm$ slots by $\mathbf x[\tprm]$, i.e., $x_{lk}^{t}[\tprm] :=x_{jk}^{((t+\tprm)\textsf{mod} \TC)\star}$. Note that $\mathbf x[0]=\mathbf x^\star$. As the program is time-shift invariant, $\mathbf x[\tprm]$ should also be optimal for all $\tprm$. Note that there are at most $\TC$ distinct shifts for $\mathbf x[\cdot]$ which are optimal solutions.
Since $\EIB$ is an \LP, the average of a set of optimal solutions yields a feasible solution that is also optimal. Hence,  
$\overline{\mathbf x}^\star := \frac{1} \TC \sum_{\tprm=1}^\TC \mathbf x[\tprm]$ is an optimal solution. 
This solution is stationary, as it can easily be verified that it corresponds to the time-average of the original variables, i.e.,
$\overline x_{jk}^\star = \frac{\sum_{t=1}^\TC x_{jk}^{t}}{\TC}.
$

\textbf{Step C.}
Finally, we finish the proof of \Cref{static-is-good-lemma}. Note that  any feasible solution of $\EI$ can be converted to a feasible solution for \EIB by setting all variables $\vEI t$ for times $t<\DMAX$ or $t>\TC-\DMAX$ to $0$. Indeed, this results in a feasible solution for $\EIB$ since the only constraints that are different between $\EIB$ and $\EI$, only involve the variables at these boundary time-slots. Therefore we need to set the variables to $0$ for a range of $2\DMAX$ time slots. The maximum loss in reward, however, due to setting these variables to $0$, is, per time slot, at most $W_{\EI} \DMAX/\TC$. Indeed, consider for example time slot $0$. If the total reward could be higher than $W_{\EI} \DMAX/\TC$, then scheduling every $\DMAX$ time slots only, we could obtain a reward higher than $W_{\EI} \DMAX/\TC \times  \TC/\DMAX=W_{\EI}$, which leads to a contradiction.  

Since at every time slot we can lose a maximum reward of $W_{\EI} \DMAX/\TC$, in $2\DMAX$ time slots we lose a maximum reward of at most $W_\EI 2 \frac{\DMAX^2} {\TC}$. 
\begin{figure}[t]
\centering
\begin{subfigure}{.40\textwidth}
  \centering
  \includegraphics[width=0.95\linewidth]{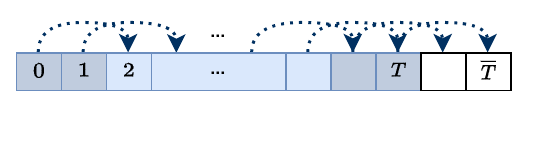}
  \caption{Range of capacity constraints for \EI}
  \label{fig:before-symmetrize}
\end{subfigure}%
\begin{subfigure}{.40\textwidth}
  \centering
  \includegraphics[width=0.95\linewidth]{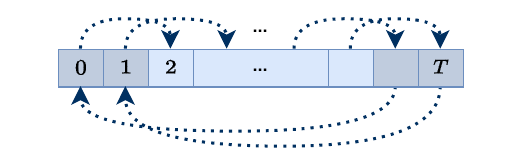}
  \caption{Modified range for \EIB}
  \label{fig:after-symmetrize}
\end{subfigure}
\caption{\Cref{fig:before-symmetrize} shows the time horizon for \EI when $\DMAX=2$. The starting point of each arrow indicates the earliest time slot $\trl$ with arrivals  that impact time slot $t$, which is indicated by the end of an arrow, as seen in \Cref{EIT:capacity}. For example time slot $t=2$ can have arrivals as early as $\tau=0$. The symmetry of the problem across time slots breaks at $t=0,1,\TC+1,\TB$ (shaded).  Adding terms $x_{jk}^\tau$ for these time slots in \dref{EIT:capacity} allows us to create a set of stricter constraints \dref{cap-constr-modified} that make \EI symmetrical. This is shown in  \Cref{fig:after-symmetrize}), where, e.g., now at the capacity constraint for $t=0$, we have terms $x_{jk}^\tau$ from $\trl=\TC+1,\TB$. }
\label{fig:illustrate-symmetrizing}
\vspace{- 0.1 in}
\end{figure}
\end{proof}

Due to \Cref{static-is-good-lemma}, we can search for stationary solutions, which simplifies \EI significantly. We refer to the simplified \LP  as the \textit{stationary expected instance program} $\EST$, defined in \dref{ES-obj}-\dref{ES-postv} below:
\begin{subequations}\label{ES}
\begin{align}
\max_{\mathbf{x}} \quad & 
\sumpkt w_{j} \abar \sumk x_{jk} \quad(:=\EST) \label{ES-obj} \\ 
\textrm{s.t.} \quad & 
\sumsched x_{jk} \leq 1,  \quad \forall j\in \PD, \label{ES:one-choice}\\
& 
\sumpkt\sum_{\trl=0}^{d_j}  \abar \sum_{k \in \mathcal K_{j}:k_{\trl}=l} x_{j k} \leq \C_{\ell}, \quad \forall \ell\in \LNKS, \label{ES-capacity}
\\ 
& x_{jk} \geq 0,\quad \forall j\in \PD,\forall k \in \mathcal K_j.  \label{ES-postv}
\end{align}
\end{subequations}

Note that compared to $\EI$, the capacity constraints \dref{ES-capacity} are much simpler, as constraints \dref{EIT:capacity} in $\EI$ for $t \in [\DMAX , \TC]$ become identical, and the constraints for $t<\DMAX$ or $t>\TC$ are weaker. 
Therefore, 
we only need to keep the constraints \dref{EIT:capacity} for only one $t \in [\DMAX , \TC]$, e.g., for $t=\DMAX$. 

    \textbf{Step 3.} 
     The simplified \LP $\EST$ still has exponentially many variables (there are exponentially many route-schedules $k$). To resolve this, we work with the \textit{flow-based reformulation} of \EST through the \textit{forwarding variables} defined in Section~\ref{sec:algo-results} for \Cref{randomized-scheduling-offline}. Analogous reformulations have been utilized to a great extent in network flow problems \cite{magnanti1993network} (for more details, refer to \Cref{flow-relation}). The conversion results in the flow-based \LP $\FS$, defined earlier in \dref{FS-obj}-\dref{FS-postv}, but with $\epsilon =0$.
To see that, recall that $x_{jk}$ is the probability of scheduling packet-type $j$ over route-schedule $k$, and forwarding variable  $\fl j \ell \tau$ can be interpreted as the fraction of type-$j$ packets scheduled over link $\ell$ at age $\tau$. 
    Then, \dref{ES-obj} and \dref{ES:one-choice} are equivalent to \dref{FS-obj} and \dref{FS:one-choice}, respectively, since $\sum_{k\in K_j} x_{jk} \equiv \sum_{\ell\in \AO{s_j}} \fl j {\ell} {\FI}$, where, recall that $\AO{s_{j}}$ denotes the outgoing links of $s_j$ in $\EG$. Similarly, capacity constraint \dref{ES-capacity} is equivalent to \dref{FS:capacity} with $\epsilon =0$.     
    The remaining constraints \dref{FS:time-conserv}, \dref{FS:edge-cases}, \dref{FS-postv} are the flow conservation laws.        
    At the end, similarly to network flow problems, we can map the solution $\{\fl j \ell \trl\}$ to a solution for $\{x_{jk}\}$ due to the flow decomposition theorem \cite{goldberg1998beyond}.  
Note that in $\FS$, we have reduced the capacity in constraint \dref{FS:capacity} by a factor $(1+\epsilon)$. It is easy to verify that solving \FS, with $\epsilon >0$, results in at most a factor $\frac{1}{1+\epsilon}$ reduction in the optimal objective value, and consequently, $W_{\FS}\geq \frac{W_{\EST}}{1+\epsilon}$.

\textbf{Step 4.} In this step, we tie our analysis together as follows.  If we could schedule the packets successfully according to forwarding probabilities $\{f_{j\ell}^{\tau \star}\}$ we would obtain total expected reward close to $\EI$ (\Cref{static-is-good-lemma} and Step 3). However, scheduling according to these probabilities might violate the capacity constraints of $\RI$. 
We argue that due to the gap between the expected traffic and the actual capacity on a link, imposed in \dref{FS:capacity} by $\epsilon >0$, the probability that traffic surpasses any link's capacity is significantly reduced. In fact, we obtain an ``almost feasible'' solution to $\RI$, if the capacity is greater than a certain threshold. Then, by dropping packets attempted to be scheduled on a link with full capacity, we obtain a feasible solution to $\RI$. Further, the drop probability of any packet is small (i.e. bounded by $\epsilon$) as each link over which the packet might be transmitted is unlikely to be at full capacity. 
The above result is stated formally in \Cref{small-drop-chance-lemma}.

\begin{lemma}\label{small-drop-chance-lemma}
Assume, for each $j \in \PD$, $\{\ajt, t\geq 0 \}$ are i.i.d. Bernoulli (or Binomial) random variables. Then using forwarding variables  $\{f_{j\ell}^{\tau \star}\}$ for scheduling packets (Lines \ref{alg-packet-for}-\ref{alg-packet-for-end} of \Cref{randomized-scheduling-offline}), the probability of a packet being dropped is at most $\epsilon$, if $\CMN \geq 2 \left(\frac{ 1+\epsilon}{\epsilon}\right)^2 \log \frac{L}{\epsilon}$.
\end{lemma}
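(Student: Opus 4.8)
The plan is to bound, for an arbitrary admitted packet, the probability that it is dropped in \Cref{forward-if-capacity} at any single hop by $\epsilon/L$, and then take a union bound over the at most $L$ links of $\LNKS$ traversed by its route-schedule. Fix a tagged packet $P$ of type $j$ arriving at slot $s$, and condition on the route-schedule $k\in\mathcal K_j$ it draws in Lines~\ref{first-link-of-packet}--\ref{older-packet-link}; since self-loops have infinite capacity, $P$ can be dropped only while attempting one of the (at most $L$) non-self-loop links of $k$. I would first introduce an idealized \emph{capacity-free} process that runs Lines~\ref{alg-packet-for}--\ref{alg-packet-for-end} of \Cref{randomized-scheduling-offline} but removes the capacity test in \Cref{forward-if-capacity}, so no admitted packet is ever dropped, and couple it to the true process by reusing the same arrival realizations and the same per-packet, per-age forwarding randomness. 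Under this coupling every (packet, link, slot) occupancy of the true process also occurs in the capacity-free one; hence, writing $N^-_{\ell,t}$ and $\tilde N^-_{\ell,t}$ for the number of packets other than $P$ transmitted on link $\ell$ at slot $t$ in the true and capacity-free processes respectively, $N^-_{\ell,t}\le\tilde N^-_{\ell,t}$, and $P$ is dropped while attempting link $\ell$ at slot $t$ only if $\tilde N^-_{\ell,t}\ge\C_\ell$.

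\textbf{Law of $\tilde N^-_{\ell,t}$.} Next I would pin down the distribution of $\tilde N^-_{\ell,t}$. By an induction on the age $\tau$ that uses the flow-conservation constraints \dref{FS:time-conserv}, in the capacity-free process a packet of type $i$ arriving at slot $t-\tau$ is transmitted on link $\ell$ at slot $t$ with probability exactly $\fl i \ell {\tau\star}$: the conditional forwarding rule of \Cref{older-packet-link} aggregates through \dref{FS:time-conserv} to the \emph{unconditional} value $\fl i \ell {\tau\star}$, and the admission step \Cref{first-link-of-packet} is already reflected in $\fl i \ell {0\star}$. Since arrivals are independent across slots and across types, each packet routes independently, and a given arrival can contribute to $\tilde N_{\ell,t}$ at most once (its age at slot $t$ is fixed by its arrival slot), $\tilde N^-_{\ell,t}$ is a sum of independent Bernoulli indicators (splitting each Binomial arrival into its constituent Bernoulli trials when needed), with mean $\sum_i\sum_{\tau=0}^{d_i}\lambda_i \fl i \ell {\tau\star}\le\C_\ell/(1+\epsilon)$ by the LP capacity constraint \dref{FS:capacity}; moreover it is independent of $P$'s route $k$.

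\textbf{Chernoff and union bound.} Padding $\tilde N^-_{\ell,t}$ with dummy Bernoulli variables to raise its mean to exactly $\mu:=\C_\ell/(1+\epsilon)$ (which only enlarges the upper tail), the multiplicative Chernoff bound at deviation $\epsilon$ gives $\Pr[\tilde N^-_{\ell,t}\ge\C_\ell]=\Pr[\tilde N^-_{\ell,t}\ge(1+\epsilon)\mu]\le\exp\!\big(-\tfrac{\epsilon^2}{2+\epsilon}\cdot\tfrac{\C_\ell}{1+\epsilon}\big)\le\exp\!\big(-\tfrac{1}{2}(\tfrac{\epsilon}{1+\epsilon})^2\C_\ell\big)$, using $(2+\epsilon)(1+\epsilon)\le 2(1+\epsilon)^2$. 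Under the hypothesis $\C_\ell\ge\CMN\ge 2\big(\tfrac{1+\epsilon}{\epsilon}\big)^2\log(L/\epsilon)$ this is at most $\epsilon/L$, uniformly in $\ell$ and $t$. A union bound over the at most $L$ non-self-loop links of $k$ then makes the drop probability of $P$ conditioned on $k$ at most $L\cdot(\epsilon/L)=\epsilon$, and since $k$ was arbitrary this bound holds for every admitted packet, which proves the lemma.

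The step I expect to be the main obstacle is pinning down the law of $\tilde N^-_{\ell,t}$: one must make the coupling precise enough that $\tilde N^-_{\ell,t}$ is genuinely a sum of \emph{independent} Bernoulli variables with the stated mean --- in particular, verifying via \dref{FS:time-conserv} that the age-dependent conditional rule of \Cref{older-packet-link} integrates to the unconditional probabilities $\fl i \ell {\tau\star}$, checking that conditioning on $P$'s route leaves the other packets' contributions untouched, and ensuring that no single arrival contributes a non-Bernoulli term to the count at a fixed slot. Once the mean and independence are in hand, the Chernoff and union-bound arithmetic is routine and is calibrated precisely to the stated lower bound on $\CMN$.
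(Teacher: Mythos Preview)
Your proposal is correct and follows essentially the same approach as the paper: bound the per-link overflow probability by $\epsilon/L$ via a Chernoff-type concentration (the paper uses the Chung--Lu inequality of \Cref{concentration-lemma} in place of the multiplicative Chernoff bound, but the resulting exponent $-\tfrac{1}{2}(\epsilon/(1+\epsilon))^2\C_\ell$ is the same), then union-bound over the at most $L$ non-self-loop links of the route-schedule. Your explicit capacity-free coupling and the care you take with conditioning on the tagged packet's route and excluding it from the count are refinements that the paper's proof leaves implicit, but the overall structure is identical.
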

\begin{proof}[Proof of \Cref{small-drop-chance-lemma}]
In the proof, we leverage the following concentration inequality from \cite{chung2006complex}.

\begin{lemma}\label{concentration-lemma}
Suppose $X_1,X_2,\cdots,X_n$ are independent, but not necessarily identically distributed random variables, with $X_i \leq B$, for all $i \in [n]$. Let $X := \sum_{i=1}^{n} X_i$ and
$S:=\sum_{i=1}^n \E{X_i^2}$. Then
\begin{equation}
\Pr{[X \geq \E{X}+\lambda]} \leq e^{-\frac{\lambda^2}{2 S+ 2 B \lambda/3}}. \label{conc-inequ-inequality}
\end{equation}
\end{lemma}
Using \Cref{concentration-lemma}, we prove that exceeding the capacity of any link happens with small probability, which will then allow us to bound the probability of a packet being dropped. First, in accordance with the notation in \Cref{concentration-lemma}, we define the variables:
\[
X_{j\trl}=\mathds 1 (\text{a type-$j$ packet  arrives at time $\trl$ and is scheduled at $\ell$ at time $t$})
\]
where $\mathds 1(\cdot)$ is the indicator function.
The total capacity consumption on $\ell$ at time $t$ can be written as:
\begin{equation} 
X:=\sum_{j=1}^{\PKT}\sum_{\trl=t-\DMAX}^t X_{j\trl}, \label{total-consumption-random-variable}
\end{equation}
where we assume the case of $\DMAX \leq t\leq \TC$ for simplicity, but the same holds for all $t$ (since we have fewer terms in the other two cases, the probability of packet drops will be lower).
Note that
\begin{equation}
\E{(X_{j\trl})^2}=\E{X_{j\trl}}=\abar
f_{j\ell}^{(t-\trl)\star}, \label{total-consumption-expected-value}
\end{equation}
where the first equality is due to  $X_{j\trl}\in\{0,1\}$ and the second equality follows by a simple calculation,
using the probability of a type-$j$ packet arriving at $\trl$ and being scheduled with total probability $f_{j\ell}^{(t-\trl)\star}$ over $\ell$, where the 
optimal forwarding probabilities $\mathbf f^\star $ found through \FS are used.
Taking an expectation on the expression in \dref{total-consumption-random-variable} and utilizing \dref{total-consumption-expected-value}, and using the fact that variables $\mathbf f$ are a solution to $\FS$ and thus they satisfy \dref{FS:capacity},  we have
$S = \E{X} \leq \frac{\C_\ell}{1+\epsilon}:=S_u$. 
Then, due to the i.i.d. Bernoulli (or Binomial) assumption, all random variables in the right-hand-side of \dref{total-consumption-random-variable} are independent, and therefore we can apply \Cref{concentration-lemma}, with $B=1$, for $\lambda=\epsilon S_u = \frac{\epsilon \C_\ell}{1+\epsilon}$, and using $S\leq S_u$. For the exponent in \dref{conc-inequ-inequality} we have:

\[
-\frac{\lambda^2}{2S+2 \lambda/3}=- \frac{\epsilon^2 S_u^2 }{2S_u+2 \epsilon S_u /3 } \leq -\frac { \epsilon^2 S_u }{2 + 2 \epsilon/3} \leq -
\frac { \epsilon^2 S_u }{2(1+\epsilon)}  =- \frac {\epsilon^2 \C_\ell }{2(1+\epsilon)^2}
\]
Therefore using inequality \dref{conc-inequ-inequality} we get:
\[
\Pr {[X \geq \frac{\C_\ell}{1+\epsilon} + \frac{\epsilon \C_\ell}{1+\epsilon} ]} =
\Pr {[X \geq \C_\ell ]} \leq e^{- \frac{\C_\ell} {2} (\frac{\epsilon}{1+\epsilon} )^2 } \overset{(a)}\leq  
e^{\log \frac \epsilon L } = \epsilon/L 
\]
where in $(a)$ we have used the stated minimum link capacity in \Cref{small-drop-chance-lemma}.
Now, consider an arbitrary packet which is attempted to be scheduled over a route-schedule with a maximum of $L$ link-time slot pairs. Then, the probability that any link in the route-schedule exceeds the capacity constraint can be bounded through a union bound that gives the statement of the Lemma, i.e., the probability of dropping the packet is bounded by $\epsilon$.\end{proof}

We note that the proof extends to non-stationary arrival rates, and can also be generalized to dependent packet arrivals (\Cref{small-drop-chance-lemma-generalization}).
Since packets are dropped with probability at most $\epsilon$, the performance loss of the algorithm is at most a multiplicative factor $\epsilon$ (in addition to factor $1/(1+\epsilon)$ due to the scaling of the capacities in \dref{FS:capacity}, as discussed in Step 3).
Overall, the expected obtained reward is close to $\EI$, which, by \Cref{expected-random-integer-lemma}, gives a guarantee relative to the average optimal value of $\RI$, yielding \Cref{main-theorem-offline}. This is detailed in the following proof.

\begin{proof}[Proof of \Cref{main-theorem-offline}]
The expected reward in a time slot is:
\begin{align*}
&\E{W_{\ALG}} =
\E{
\sumtotarrv{t} w_j \sumajt \sumsched \vRI {nt}
} = 
\sumtotarrv{t} w_j \E{\sumajt \sumsched \vRI {nt}} \overset{(a)}=
\\
& \sumtotarrv{t} w_j \abart \E{\sumsched\vRI {nt}} \overset{(b)} {\geq}
\sumtotarrv{t} w_j \abart \sum_{\ell\in \mathcal O(s_j)}  (1-\epsilon) \fl j {\ell} {\FI\star} \overset{(c)}=
\frac{1-\epsilon}{1+\epsilon} \overline W_{\EI} \overset{(d)}\geq \\
&
\frac{1-\epsilon}{1+\epsilon} \left(1- 2 \frac {\DMAX^2}{T}\right)W_{\EI}  \overset{(e)}\geq 
  \frac{1-\epsilon}{1+\epsilon} \left(1- 2 \frac {\DMAX^2}{T}\right)\E{W_{\RI}} \overset{(f)}\geq (1-3 \epsilon) \E{W_{\RI}},
\end{align*}
where $(a)$ follows due to the independence between $\{\ajt\}$ and decisions $\{\vRI {nt}\}$, $(b)$ follows by Lemma~\ref{small-drop-chance-lemma} due to the forwarding probabilities under \ALGOFF, $(c)$ follows due to Step 3, $(d),(e)$ follow due to Lemmas~\ref{static-is-good-lemma},\ref{expected-random-integer-lemma} respectively, and $(f)$ follows for $\TC\geq \frac{2 \DMAX^2} {\epsilon}$. 
Note that the dependence on an i.i.d Bernoulli or Binomial distribution only appears on $(b)$ and can be lifted by extending Lemma~\ref{small-drop-chance-lemma}.
\end{proof}

\subsection{\texorpdfstring{Proof Outline of \Cref{online-theorem} for \Cref{randomized-scheduling-online}}{}}

Here we present the outline for the proof of \Cref{online-theorem} for \Cref{randomized-scheduling-online}, which is divided into 3 steps. The detailed proofs are provided in \Cref{proofs-online}.

\textbf{Step 1.} In the first step, we find the number of samples required to obtain estimates for $\{\abar\}$ within a desired multiplicative accuracy $(1\pm \epsilon_0)$, for some parameter $\epsilon_0$. This is stated in \Cref{number-of-samples}.

\begin{lemma}\label{number-of-samples}
For any i.i.d. distribution on $\{\ajt\}$, to estimate $\{\abar\}$ such that $\abar(1-\epsilon_0) \leq \ahat \leq \abar (1+\epsilon_0), \forall j$,
with probability at least $1-\epsilon$, given $\epsilon,\epsilon_0 \in (0,1)$, we require samples of the arrivals from $\frac{a_{\max}^2}{2 \amin^2  \epsilon_0^2} \log \left ( \frac {2 \PKT}{\epsilon}\right)$ time slots.
\end{lemma}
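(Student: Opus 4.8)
The plan is to estimate each arrival rate $\abar$ by the empirical average over the observed time slots and apply a concentration inequality to control the multiplicative deviation, then take a union bound over the $\PKT$ packet types. First I would fix $N$ as the number of time slots from which arrivals are observed and define the estimator $\ahat := \frac{1}{N}\sum_{s=1}^{N} \arv{j}{s}$, so that $\E{\ahat}=\abar$ by linearity of expectation and the i.i.d.\ assumption. The event we want to guarantee for a single $j$ is $|\ahat - \abar| \leq \epsilon_0 \abar$; by a union bound, it suffices to show that for each fixed $j$ the failure probability $\Pr{[|\ahat - \abar| > \epsilon_0 \abar]}$ is at most $\epsilon/\PKT$, after which the overall failure probability is at most $\epsilon$.

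For the single-type bound, the natural tool is the same concentration inequality already invoked in the proof of \Cref{small-drop-chance-lemma} (\Cref{concentration-lemma} from \cite{chung2006complex}), applied to the scaled variables $X_s := \arv{j}{s}/N$, which satisfy $X_s \leq \AMAX/N =: B$ and $\sum_s \E{X_s^2} = \frac{1}{N^2}\sum_s \E{(\arv{j}{s})^2} \leq \frac{\AMAX}{N}\abar =: S$, using $(\arv{j}{s})^2 \leq \AMAX\,\arv{j}{s}$. Taking $\lambda = \epsilon_0 \abar$ in \dref{conc-inequ-inequality} gives an upper tail bound; a symmetric argument (applying the inequality to $-X_s$, or noting the inequality is typically stated two-sidedly in this form) handles the lower tail, so that
\[
\Pr{[|\ahat - \abar| > \epsilon_0 \abar]} \leq 2\exp\!\left(-\frac{\epsilon_0^2 \abar^2}{2 S + 2 B \epsilon_0 \abar / 3}\right)
= 2\exp\!\left(-\frac{\epsilon_0^2 \abar^2 N}{2 \AMAX \abar + 2 \AMAX \epsilon_0 \abar / 3}\right).
\]
Bounding the denominator crudely by $2\AMAX\abar(1+\epsilon_0/3) \leq 2\AMAX\abar \cdot 2 = 4\AMAX\abar$ is too lossy for the stated constant, so I would instead keep it as roughly $2\AMAX\abar$ for small $\epsilon_0$ (or simply bound $2 + 2\epsilon_0/3 \le 2$ is false, so use that $\abar/\AMAX \le 1$ and $\abar \ge \amin$ to reduce the exponent to $\frac{\epsilon_0^2 \amin^2 N}{2\AMAX^2}$ after using $\abar^2/\abar = \abar \ge \amin$ and $\abar \le \AMAX$ in the denominator), so the exponent is at least $\frac{\epsilon_0^2 \amin^2 N}{2\AMAX^2}$ up to the lower-order $\epsilon_0/3$ term which I would absorb. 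Setting $2\exp(-\epsilon_0^2\amin^2 N/(2\AMAX^2)) \leq \epsilon/\PKT$ and solving for $N$ yields $N \geq \frac{2\AMAX^2}{\epsilon_0^2 \amin^2}\log\frac{2\PKT}{\epsilon}$, matching the claim $\frac{a_{\max}^2}{2\amin^2\epsilon_0^2}\log\frac{2\PKT}{\epsilon}$ once the factor is reconciled — so I would be careful to track whether the $2$ sits in the numerator or the denominator, and to confirm that the paper's constant corresponds to bounding the variance proxy by $S \le \abar$ (valid when $\AMAX$ is normalized, e.g.\ Bernoulli) rather than by $\AMAX\abar$.

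The main obstacle I anticipate is purely in the constant-chasing: getting the denominator $2S + 2B\lambda/3$ to collapse cleanly into the clean form $\frac{a_{\max}^2}{2\amin^2\epsilon_0^2}$ requires either the Bernoulli/Binomial-specific bound $\E{(\arv{j}{s})^2} \le \AMAX \abar$ together with $\abar \le \AMAX$ and $\abar \ge \amin$, or a slightly different concentration statement (e.g.\ a multiplicative Chernoff bound for bounded variables), and one must decide which yields exactly the stated constant. The structural part — estimator, expectation, single-type concentration, union bound over $\PKT$ types — is routine; everything hinges on which second-moment bound is invoked and whether the lower-order term $\epsilon_0/3$ in the denominator is absorbed into a slightly larger constant or carried explicitly. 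I would present the bound with the weakest clean assumption (i.i.d., $\arv{j}{s} \le \AMAX$, $\E{(\arv{j}{s})^2}\le \AMAX \E{\arv{j}{s}}$) since the lemma is stated for \emph{any} i.i.d.\ distribution, and note that for Bernoulli the constant improves.
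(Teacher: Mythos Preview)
Your structural outline (empirical-mean estimator, per-type concentration, union bound over the $\PKT$ types) is exactly what the paper does. The divergence is in the concentration tool: the paper simply applies Hoeffding's inequality for bounded variables $0\le \arv{j}{s}\le \AMAX$, which gives
\[
\Pr\big[|\ahat-\abar|>\epsilon_0\abar\big]\le 2\exp\!\left(-\frac{2N\epsilon_0^2\abar^2}{\AMAX^2}\right),
\]
and setting the right-hand side equal to $\epsilon/\PKT$ together with $\abar\ge\amin$ immediately yields the stated threshold $N\ge \frac{\AMAX^2}{2\amin^2\epsilon_0^2}\log\frac{2\PKT}{\epsilon}$ with no constant-chasing. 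You instead reach for the Bernstein-type \Cref{concentration-lemma}, which drags in the second-moment proxy $S$ and the $2S+2B\lambda/3$ denominator; this is what forces the awkward bookkeeping you flag and leaves you uncertain whether the $2$ belongs in the numerator or denominator. In fact, if you carry your Bernstein computation through cleanly the exponent is $\frac{N\epsilon_0^2\abar}{2\AMAX(1+\epsilon_0/3)}$, which has the \emph{better} ratio $\amin/\AMAX$ rather than $\amin^2/\AMAX^2$; to land on the paper's exact constant you would have to artificially weaken this. So your route is valid but strictly harder than necessary---Hoeffding is the right hammer here precisely because the lemma is stated for \emph{any} bounded i.i.d.\ distribution and no variance information is available or needed.
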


\textbf{Step 2.} Towards the goal of solving $\EI$ near-optimally as in Algorithm 1,  and due to the results outlined in \Cref{technique-static}, we focus on solving $\FS$ near optimally. As discussed in \Cref{observe-algo} and \Cref{run-previous-algo} (\Cref{randomized-scheduling-online}), we substitute $\abar$ with estimates $\ahat$ of $\abar$ in $\FS$, to obtain a solution $\hat {\mathbf f}$. 
Subsequently, by appropriately scaling this solution, we obtain a near-optimal and feasible solution to $\FS$ with high probability. This is formalized in \Cref{static-approximation}.
\begin{lemma} \label{static-approximation}
Let $\hat {\mathbf f}$ be the solution to $\FS$ by using estimates $\ahat$, satisfying $\abar (1-\epsilon_0) \leq \ahat \leq \abar(1+\epsilon_0)$, in $\FS$. Let
$\tilde {\mathbf f}=(1-\epsilon_0) \hat {\mathbf f}$. The scaled solution $\tilde {\mathbf f}$, whose objective value we denote with $\tilde F_S$, is, with probability  at least $1-\epsilon$, a feasible solution to $\FS$ and
\[
\tilde F_S \geq (1-4\epsilon_0) \FS.
\]
\end{lemma}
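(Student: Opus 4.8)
The plan is to show two things: (i) the scaled solution $\tilde{\mathbf f}=(1-\epsilon_0)\hat{\mathbf f}$ is feasible for $\FS$ (with the true rates $\abar$) with probability at least $1-\epsilon$, and (ii) its objective value is at least $(1-4\epsilon_0)$ times the optimal value of $\FS$. The key observation driving both parts is that $\FS$ depends on the arrival rates only through the products $\abar f_{j\ell}^\tau$ appearing in the objective \dref{FS-obj} and the capacity constraints \dref{FS:capacity}; the flow-conservation constraints \dref{FS:one-choice}, \dref{FS:time-conserv}, \dref{FS:edge-cases}, \dref{FS-postv} do not involve the rates at all. So replacing $\abar$ by $\ahat$ amounts to a coordinatewise multiplicative perturbation of the ``rate-weighted'' quantities, and scaling the flows compensates for it.

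First I would check feasibility. Condition on the high-probability event $\mathcal{A}$ that $\abar(1-\epsilon_0)\leq\ahat\leq\abar(1+\epsilon_0)$ for all $j$; by \Cref{number-of-samples} this holds with probability at least $1-\epsilon$ given the stated sample count (which is exactly $\TC_\phi$ via the choice in the algorithm, but for the lemma we just invoke the event). The flow-structure constraints are rate-free, so $\tilde{\mathbf f}$ satisfies them because $\hat{\mathbf f}$ does and scaling by a constant $(1-\epsilon_0)\le 1$ preserves \dref{FS:one-choice} and nonnegativity, and preserves the equalities \dref{FS:time-conserv}, \dref{FS:edge-cases}. For the capacity constraint \dref{FS:capacity} with the true rate $\abar$: on $\mathcal{A}$ we have $\abar\leq \ahat/(1-\epsilon_0)$, hence for every link $\ell$,
\[
\sum_{j}\sum_{\trl=0}^{d_j}\abar\,\tilde f_{j\ell}^{\trl}
= (1-\epsilon_0)\sum_{j}\sum_{\trl=0}^{d_j}\abar\,\hat f_{j\ell}^{\trl}
\leq \sum_{j}\sum_{\trl=0}^{d_j}\ahat\,\hat f_{j\ell}^{\trl}
\leq \frac{\C_\ell}{1+\epsilon},
\]
where the last step is exactly the capacity constraint of $\FS$ written with the estimates $\ahat$, which $\hat{\mathbf f}$ satisfies by construction. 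So $\tilde{\mathbf f}$ is feasible for the true $\FS$ on $\mathcal{A}$.

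Next the objective bound. Let $W_{\FS}$ denote the optimal value of $\FS$ with true rates and $\mathbf f^\star$ an optimizer; let $W_{\FS}^{\ahat}$ denote the optimal value of $\FS$ with estimated rates. A symmetric scaling argument in the other direction shows $W_{\FS}^{\ahat}\geq (1-\epsilon_0)W_{\FS}$: on $\mathcal{A}$, the scaled true optimizer $(1-\epsilon_0)\mathbf f^\star$ is feasible for the estimated-rate $\FS$ (same computation as above with $\ahat\le\abar(1+\epsilon_0)$... actually one needs $\ahat$ small, i.e. use $\abar\ge\ahat/(1+\epsilon_0)$ — let me just note that the clean route is $\ahat\le\abar(1+\epsilon_0)$ so $(1-\epsilon_0)\mathbf f^\star$ tested against estimated rates gives link load $\le(1-\epsilon_0)(1+\epsilon_0)\sum\abar f^\star_{j\ell}\le\sum\abar f^\star_{j\ell}\le\C_\ell/(1+\epsilon)$), and its estimated-rate objective is $\sum_j w_j\ahat_j(1-\epsilon_0)\sum_\ell f^{\star,0}_{j\ell}\ge (1-\epsilon_0)^2 W_{\FS}$. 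Then $\hat{\mathbf f}$ achieves $W_{\FS}^{\ahat}$, and the objective of $\tilde{\mathbf f}$ with true rates is
\[
\tilde F_S = \sum_j w_j\abar_j(1-\epsilon_0)\!\!\sum_{\ell\in\AO{s_j}}\!\!\hat f_{j\ell}^{0}
\;\geq\; \frac{1-\epsilon_0}{1+\epsilon_0}\sum_j w_j\ahat_j\!\!\sum_{\ell\in\AO{s_j}}\!\!\hat f_{j\ell}^{0}
= \frac{1-\epsilon_0}{1+\epsilon_0}\,W_{\FS}^{\ahat}
\geq \frac{(1-\epsilon_0)^3}{1+\epsilon_0}W_{\FS},
\]
using $\abar\ge\ahat/(1+\epsilon_0)$ on $\mathcal A$. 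Finally $\frac{(1-\epsilon_0)^3}{1+\epsilon_0}\geq 1-4\epsilon_0$ for $\epsilon_0\in(0,1)$ (expand: $(1-\epsilon_0)^3\ge(1-3\epsilon_0)$ and $1/(1+\epsilon_0)\ge 1-\epsilon_0$, product $\ge(1-3\epsilon_0)(1-\epsilon_0)\ge1-4\epsilon_0$), which gives the claim.

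The main obstacle is purely bookkeeping: keeping the direction of every multiplicative inequality straight — each appearance of a rate either needs an upper bound by $\abar(1+\epsilon_0)$ or a lower bound by $\abar(1-\epsilon_0)$ depending on whether it sits on the ``constrained'' side or the ``objective'' side, and the scaling factor $(1-\epsilon_0)$ must be inserted so that both feasibility (worst case $\ahat$ large) and objective (worst case $\ahat$ small) go through simultaneously. Once the event $\mathcal{A}$ is fixed and one is careful that the flow-conservation constraints are rate-independent, everything reduces to the elementary inequality $\frac{(1-\epsilon_0)^3}{1+\epsilon_0}\ge 1-4\epsilon_0$; there is no concentration or combinatorial difficulty beyond invoking \Cref{number-of-samples} for the probability $1-\epsilon$.
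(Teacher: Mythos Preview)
Your proof is correct and follows essentially the same approach as the paper: both condition on the accuracy event from \Cref{number-of-samples}, scale the estimated-rate optimizer by a factor close to $1$ to recover feasibility for the true-rate program, and symmetrically scale the true-rate optimizer to lower-bound the estimated-rate optimum, then chain the inequalities. The only cosmetic differences are that the paper carries out the argument in the route-schedule formulation $\EST$ rather than directly in $\FS$, and uses $\tfrac{1}{1+\epsilon_0}\mathbf x^\star$ (instead of your $(1-\epsilon_0)\mathbf f^\star$) as the feasible point for the estimated program, arriving at the intermediate factor $\tfrac{(1-\epsilon_0)^2}{(1+\epsilon_0)^2}$ rather than your $\tfrac{(1-\epsilon_0)^3}{1+\epsilon_0}$; both exceed $1-4\epsilon_0$, so the conclusion is identical.
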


\textbf{Step 3.} Based on \Cref{static-approximation}, we can solve $F_S$ with progressively better estimates $\ahat$ in $\log(1/\epsilon)$ phases, to obtain an increasingly better approximation in each phase for $\RI$, following the connection between $\FS$ and $\RI$ discussed in \Cref{technique-static}. More specifically, in each phase we have a different number of samples available and based on \Cref{number-of-samples} and \Cref{static-approximation}, $\epsilon_0$ is decreasing across phases (and determined by the carefully designed phase durations). Aggregating the approximations of each phase, we obtain an overall near-optimal solution assuming sufficient time is available and under the capacity constraints in the statement of \Cref{randomized-scheduling-online}. Refer to \Cref{proofs-online} for the detailed proofs.

%



\section{Discussion} \label{discussion-section}

In this section, we discuss how our technique can be extended to other distributional assumptions and discuss some complexity and implementation improvements.

\subsection{Extension to more general arrival distributions}\label{generalization-non-stationary}
\textbf{General stationary distributions.} The theoretical results in \Cref{main-theorem-offline} and \Cref{online-theorem} extend to general distributions on the arrivals, other than i.i.d. Binomial and Bernoulli arrivals. The distribution affects the required minimum link-capacity, and generally, improved results are obtained when packet arrivals are less dependent on each other. Specifically, if the packet arrivals of a packet-type in any $(\DMAX+1)$ consecutive time slots can be partitioned into a set of groups of maximum size $D$ such that each group's arrivals are independent of the arrivals of other groups, then our results can be generalized, requiring capacity $\CMN \geq 2 D \left(\frac{ 1+\epsilon}{\epsilon}\right)^2 \log (L/\epsilon)$ for near-optimal ($1-3\epsilon$)-approximation (\Cref{main-theorem-offline-dependencies} in \Cref{dependent-packet-arrivals}). Binomial and Bernoulli arrivals correspond to $D=1$.

We remark that this extension allows a dependence between packets within the same time slot or across different time slots. For example, if the arrivals of a packet type are independent across time, and are at most $D$ at any time, then the above partitioning is possible. Alternatively, packets might depend across time. For example, when $a_{\max}=1$, with the arrivals of a packet type at any time, dependent on the arrivals of that type in prior slots, we obtain near-optimal performance for 
$\CMN \geq 2 (\DMAX+1) \left(\frac{ 1+\epsilon}{\epsilon}\right)^2 \log (L/\epsilon)$, that is, $D=(\DMAX+1)$ in this case. 

The formal description of these results is provided in \Cref{dependent-packet-arrivals}.

\textbf{Known non-stationary traffic.} Our assumption on identically distributed arrivals across time slots can be relaxed. In that case, due to the presence of non-stationarity, we cannot rely on stationary forwarding probabilities $f_{j\ell}^\trl$ (identified earlier in \FS) to obtain a near-optimal solution. As a result, in contrast to the i.i.d. case, the forwarding variables will need to be time dependent, i.e., $f_{j\ell}^{\tau t}$, in addition to the dependence on link, type and age of the packet. 
The modified \LP is referred to as \FNS, and is presented in \Cref{non-stationary-extension}, see \dref{FS-obj-TIME}-\dref{FS-postv-TIME}. For example, the objective in $\FS$ should be modified to describe the average total reward (rather than the average per time slot reward)
and it becomes:
$$
\sum_{t=1}^{\TC} \sum_{j=1}^{\PKT} w_{j} \abart \sum_{\ell \in \AO{s_{j}}} \fl{j} \ell {\FI t}.
$$
Further, the link capacity constraints \dref{FS:capacity} should be adjusted such that they are applied for each time slot independently (as opposed to a single constraint that captures the average traffic on the link).  Note that the modification for non-stationarity comes with a computational cost, as the number of variables and constraints now scale also with the horizon $\TC$. The cost of solving \FNS with a large horizon can be mitigated by  splitting the horizon into frames, and solving \FNS in each independent frame. A trade off between complexity and performance arises, with larger considered frames generally yielding better performance but with higher complexity  (we provide more details in \Cref{frame-base-extension-appendix}, e.g., \Cref{frm-constr-lemma}).
Following the determination of forwarding variables $f_{j\ell}^{\trl t}$, we can proceed by probabilistically forwarding any packet of type $j$ arriving at time $t$ according to variables $f_{j\ell}^{\trl t}$ (similarly to \Cref{randomized-scheduling-offline}). We provide the general  \Cref{most-general-thm} in \Cref{non-stationary-extension}.

\textbf{Periodic traffic distribution.} In the special case of periodic traffic distribution, the complexity of the non-stationary approach can be further reduced such that the number of variables and constraints do not scale with $\TC$, but rather, scale  with the length of the period of the periodic traffic. In this case, our proof techniques outlined in \Cref{analysis-technique} (such as \Cref{static-is-good-lemma}) can be modified to show that a near optimal solution is periodic (as opposed to stationary). For example, for a period of length $\widetilde {\TC}$, the forwarding 
variables are 
$\{f_{j\ell}^{\trl t},t\in [\widetilde {\TC}]\}$ and a packet arriving at time $t^\prime$ must be forwarded according to variables $\{f_{j\ell}^{\trl (t^\prime\mathrm{mod}\ \widetilde {\TC})}\}$ . For more details, refer to  \Cref{periodic-traffic-extension}.

\textbf{Unknown arrival rates.} For the non-stationary traffic when the model is not known, our techniques can be used in conjunction with prediction methods that estimate future arrival rates $\abart$. In that case, the predicted future $\abart$ is used as input to $\FNS$. In particular, the performance of our methods may be characterized based on the accuracy of the predictions. 
Prediction-based methods have been studied recently in   \cite{stein2023learning}. By leveraging an extension to \Cref{static-approximation}, we can obtain performance which depends on the accuracy of the estimated arrival rates $\widehat{\abart}$. Further, by using frame-based methods, such as the one proposed in \Cref{frm-constr-lemma} (\Cref{frame-base-extension-appendix}), we only require predicting the arrival rates for an incoming frame.

\subsection{Complexity and implementation}\label{complexity-section}

\textbf{Reducing the number of used route-schedules.} In our algorithms, we described a hop-by-hop randomization, in order to convert the forwarding probabilities to a particular route-schedule. However, depending on the design needs, other conversions might be preferred and are available. For example, an iterative procedure based on \cite{raghavan1987randomized} can be used to find a fixed set of route-schedules, over which the source of the packet can randomize over for scheduling the packet. Specifically, each packet type will have a maximum of $\DMAX |\ELNKS|$ route-schedules over which the packets of this type are scheduled. This is in contrast to the hop-by-hop approach that can potentially choose any route-schedule. In certain scenarios, the former method might be preferred. 
More details on this alternate process are provided in \Cref{iterative-process-flow-to-route}.

\textbf{Solving the LP {\FS}.} Recall that \FS \dref{FS} is an LP which has $O(\DMAX |\ELNKS| \PKT)$ number of variables and constraints and therefore can be solved efficiently in polynomial time.
Following our results, it is interesting to seek even more efficient implementations of \FS or online solutions of \FS, through, for example distributed primal-dual techniques, or multicommodity-flow-inspired solutions \cite{magnanti1993network}.  

\section{Simulation Results} \label{simulation-section}

In this section, we report simulation results using both synthetic data and real network data.

\subsection{Evaluation using synthetic traffic} \label{synthetic-data}
\textbf{Comparisons between algorithms.} First, we consider network $\G_1$ shown in \Cref{fig:networka-competitive}. In our first experiment, we examine traffic generated using $10$ packet types, with source $s_j$ and destination $z_j$ for each type $j$ chosen randomly in the network, weights $w_j$ chosen randomly in $(0,1)$, and arrival rates $\abar$ uniformly chosen in $(0,25)$. The arriving packets have a deadline of $10$, and the time horizon is $\TC=5000$.

\begin{figure}[t]
\centering
\begin{subfigure}{.35\textwidth}
  \centering
  \includegraphics[width=0.95\linewidth]{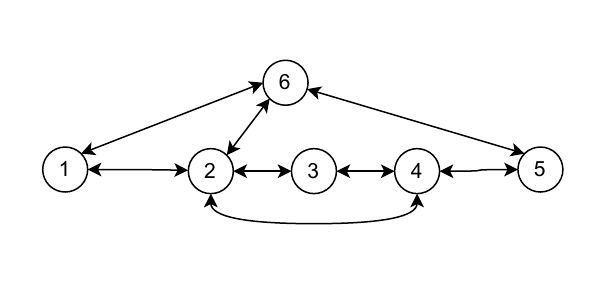}
  \caption{Simple network $\G_1$}
  \label{fig:networka-competitive}
\end{subfigure}%
\begin{subfigure}{.3\textwidth}
  \centering
  \includegraphics[width=0.95\linewidth]{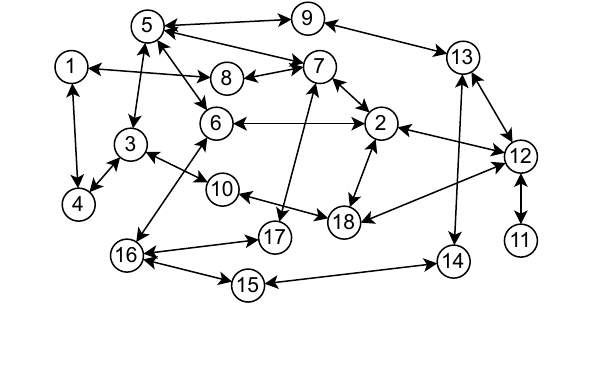}
  \caption{IBM network $\G_{IBM}$}
  \label{fig:network-ibm}
\end{subfigure}
\caption{Two different networks used in simulations.}
\label{fig:guarantees-comparison}
\vspace{- 0.1 in}
\end{figure}

We evaluate three variants of our algorithms: \ALGOFF (\Cref{randomized-scheduling-offline}), and two variants of \ALGON (\Cref{randomized-scheduling-online}), the base method which divides the horizon into exponentially increasing phases (denoted with \ALGON-EXP), and a method that divides the horizon into phases of fixed duration of $500$ slots (\ALGON-500). 
With the goal of verifying the near-optimal behavior of our algorithms as $\CMN$ increases, we additionally provide comparisons with the recent worst-case-based algorithm GLS-FP \cite{gu2021asymptotically} (Fastest-Path variant of GLS), which schedules packets greedily on the ``fastest path`` (the route-schedule that delivers the packet as early as possible) from source to destination, when enough capacity is available, otherwise it drops the packet. This is the algorithm of choice in the simulations of \cite{gu2021asymptotically}, which in the simulations therein outperforms works such as the one in \cite{deng2019online}. 

To evaluate our algorithms with respect to the optimal, we used an upper bound on the optimal value of \RI based on the optimal value of \FS (\dref{FS-obj}-\dref{FS-postv}) (multiplied by $(1-2 \DMAX^2/\TC)^{-1}$ due to \dref{static-is-good-inequality} and our earlier analysis, outlined in \Cref{outline-theorem-offline}). Hence, we can obtain a lower bound on the approximation ratio of each algorithm, which would be the metric in our comparisons here.

In our first experiment, we study the impact of link capacity with fixed traffic.  We considered an equal link capacity  $\C_\ell=\CMN$ for all the links in the network and varied $\CMN$ between $1$ and $25$. The results are shown in \Cref{fig:comparison-a-synthetic}. We observe that all our algorithms outperform GLS-FP significantly, and even for very small capacities, obtain more than a fraction $0.8$ of the optimal value. 

In the next experiment, we study the impact of increasing capacity $\CMN$ while also scaling the traffic intensity proportionally (to avoid simplification of the problem). We do so by increasing the number of packet types with the capacity, with $\PKT=\{5,10,\cdots,40\}$ and $\CMN=\{1,4,\cdots,25\}$ (as earlier). The results are depicted in \Cref{fig:comparison-b-synthetic}. We observe that our algorithms improve the performance with the scaling of capacity and traffic regardless, whereas GLS-FP does not improve as much as in \Cref{fig:comparison-a-synthetic}. This is in line with our analysis that suggests an improvement of the approximation ratio for larger capacities.

The good performance of our algorithms was further validated in an extensive set of topologies, deadlines and traffic configurations. Refer to \Cref{additional-simulations} for additional simulation results.

\textbf{Performance over different arrival distributions.} As discussed in \Cref{generalization-non-stationary} (with more details in \Cref{dependent-packet-arrivals}), our methods extend to general packet arrival distributions, with the characteristics of the distribution only affecting the required $\CMN$ to obtain near-optimal performance. Here, we validate the good performance of \ALGOFF, under three different distributions, for identical arrival rates in each case. In particular, we consider an i.i.d. process with arrivals at each time slot and for each packet type from the standard Binomial distribution, a Poisson distribution, and a scaled Bernoulli distribution which is either $0$ or $a_{j}$ for packet type $j$ (all with equal $\lambda_j$). In particular, we note  that i.i.d. arrivals from the Binomial distribution result in the minimum dependency degree $D=1$, whereas those from Scaled Bernoulli have maximum dependency degree, with $D=a_{\max}$. We use
IBM's network $\G_{IBM}$ \cite{ibmnetwork}, shown in \Cref{fig:comparison-a} (similar results are obtained for $\G_1$), which consists of $18$ nodes and $20$ links. Here, we consider higher arrival rates compared to the previous simulations, randomly selected in $(0,150)$ for each $j$, as we seek to study the performance for larger capacities in order to verify that approximation ratios are improved over all distributions as $\CMN$ increases. The approximation ratio bounds are shown in \Cref{fig:comparison-diff-distrs}. We notice that the best performance is obtained under the Binomial distribution, in line with our theoretical results ($D=1$, \Cref{main-theorem-offline-dependencies}). The Poisson distribution also exhibits good performance. The worst performance is obtained (as expected) for the Scaled Bernoulli $(D=a_{\max})$.


\subsection{Evaluation using real traffic} \label{sim-real-traffic}

\begin{figure}[t]
\centering
\begin{subfigure}{.33\textwidth}
  \centering
  \includegraphics[width=0.9\linewidth]{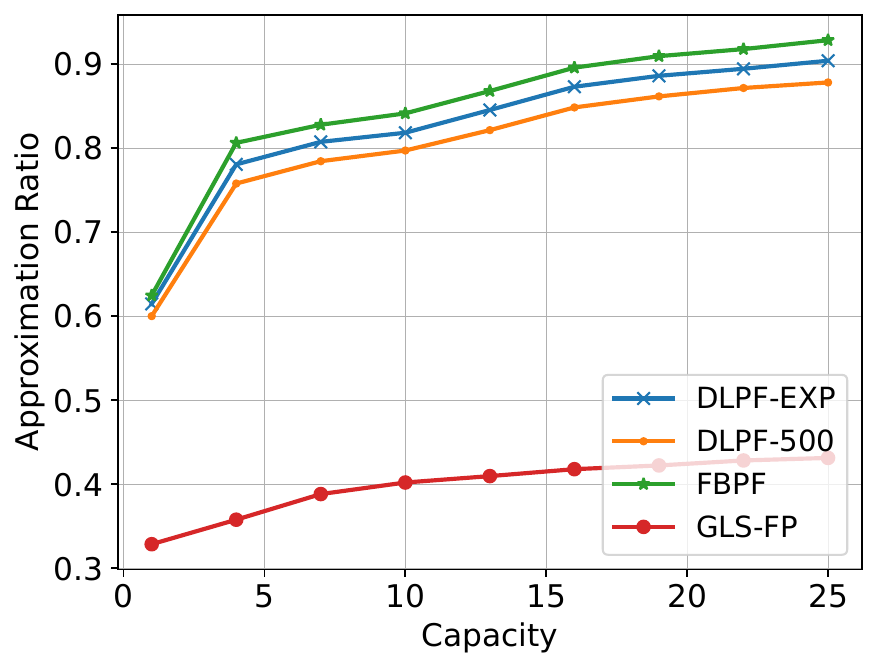}
  \caption{Varying capacity}
  \label{fig:comparison-a-synthetic}
\end{subfigure}%
\begin{subfigure}{.33\textwidth}
  \centering
  \includegraphics[width=0.9\linewidth]{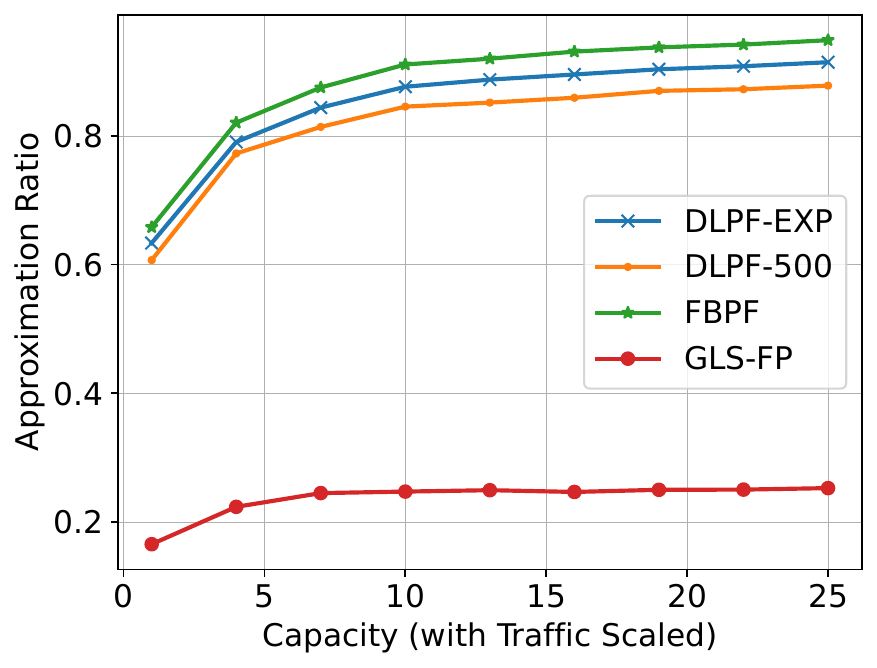}
  \caption{Varying capacity and traffic}
  \label{fig:comparison-b-synthetic}
\end{subfigure}
\begin{subfigure}{.33\textwidth}
  \centering
  \includegraphics[width=0.9\linewidth]{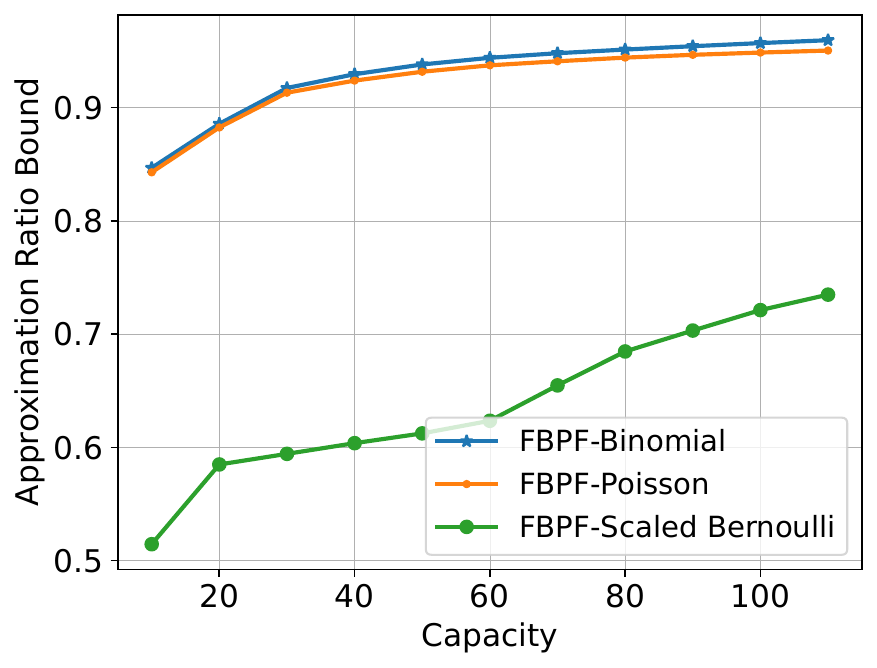}
  \caption{Three packet arrival distributions}
  \label{fig:comparison-diff-distrs}
\end{subfigure}
\caption{Comparison of algorithms in terms of obtained approximation ratios.}
\label{fig:rewards-comparison-synthetic}
\vspace{- 0.1 in}
\end{figure}

In this section, we focus on understanding the performance of our algorithms in real networks with real packet arrival traces. First, we consider 
IBM's network $\G_{IBM}$ \cite{ibmnetwork}, shown in \Cref{fig:comparison-a}. 
To simulate the packet arrivals, we use a trace of IP traffic collected at the University of Cauca \cite{rojas2019consumption}.
We focus on the arrivals during the $10$ minutes of the dataset. As nodes in networks such as $\G_{IBM}$ typically serve multiple clients (e.g. routers of an ISP, serving the traffic of users), we assign the traffic of roughly $1000$ IPs (all the IPs in the trace) to the $18$ nodes of the network. Based on this assignment, we obtain a traffic flow for each ordered pair of nodes in $\G_{IBM}$. For example in $18$ nodes, there are $18 \times 17$ source-destination pairs. Many of these pairs, following the random assignment of IPs result in little to no traffic. As a result, we focus on the $30$ source-destination pairs with the highest traffic. We illustrate three of the final traces of packet arrivals for different source-destination pairs with different characteristics in \Cref{fig:arrival-traces}.
We observe that all three traces have clear non-stationarity and are characterized by sudden bursts of packets. 

\begin{figure}[t]
\centering
\begin{subfigure}{.30\textwidth}
  \centering
  \includegraphics[width=0.95\linewidth]{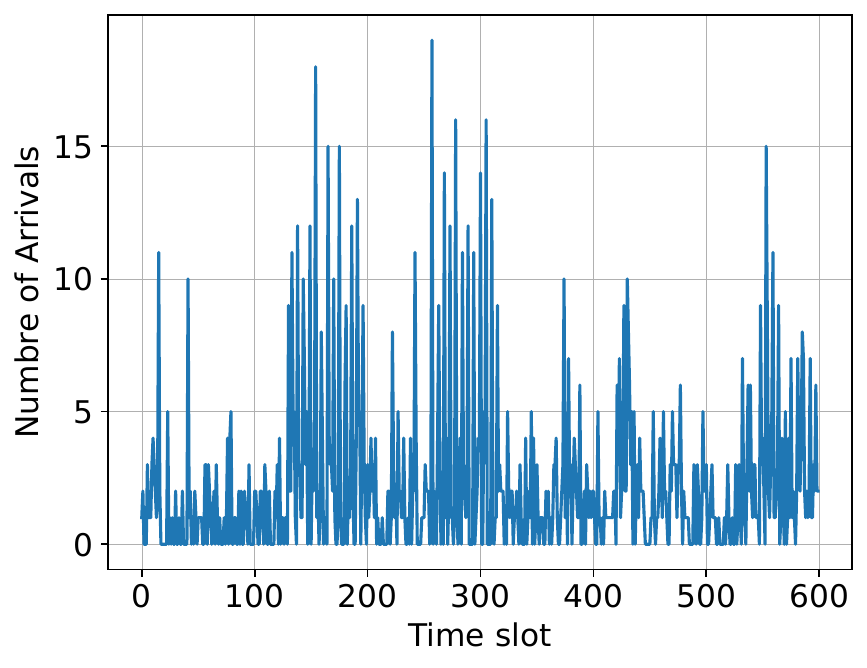}
  \caption{Trace $1$}
  \label{fig:comparison-a}
\end{subfigure}%
\begin{subfigure}{.30\textwidth}
  \centering
  \includegraphics[width=0.95\linewidth]{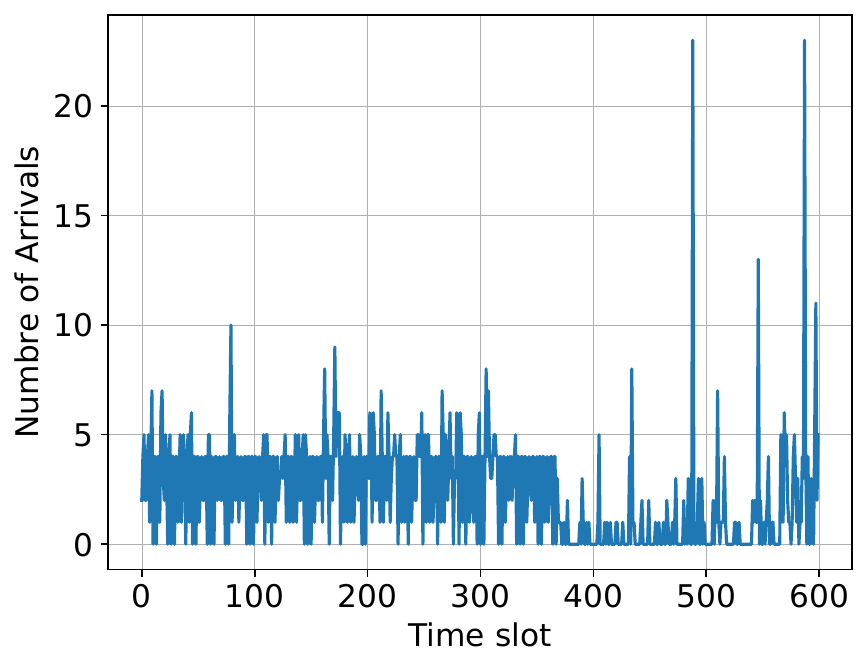}
  \caption{Trace $2$}
  \label{fig:comparison-b}
\end{subfigure}
\begin{subfigure}{.30\textwidth}
  \centering
  \includegraphics[width=0.95\linewidth]{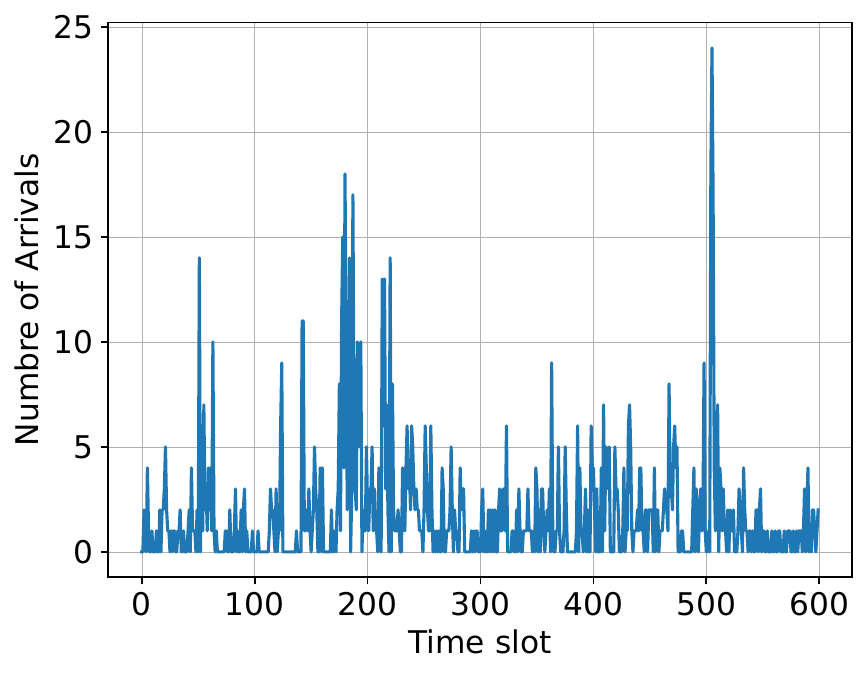}
  \caption{Trace $3$}
  \label{fig:comparison-c}
\end{subfigure}%
\caption{Three different traces of packet arrivals for three source-destination pairs in $\G_{IBM}$.}
\label{fig:arrival-traces}
\vspace{- 0.1 in}
\end{figure}

In our simulations, we assign a random reward uniformly chosen between $(0,1)$ for each source-destination pair and a deadline $10$ time slots to each packet. To further evaluate the impact of traffic intensity and the number of flows, we simulate two cases. In one case, we maintain all top $30$ source-destination pairs, whereas in the second experiment, we maintain a random subset of $15$ source-destination pairs.

As the traffic is non-stationary, we focus our study on DLPF-30, DLPF-100, two variants of our dynamic algorithm DLPF, that update their estimates every $30$, and $100$ time slots respectively (our preliminary experiments included phase lengths larger than $100$, which were found to attain a lower reward compared to phase lengths of $30$ and $100$). We compare the algorithms in terms of their average reward per time slot. The final results for the two experiments are shown in \Cref{fig:comparison-a-realdata} and \Cref{fig:comparison-b-realdata}. As we can see, the algorithms maintain a significant gain over GLS-FP, and the gain grows as the traffic intensity increases and the problem becomes more challenging.

Finally, we verified the good performance of our methods over an additional network. We used the Hibernia Atlantic (Canada) network from \cite{ibmnetwork}, which is a network of smaller size compared to $\G_{IBM}$. Under similar simulation configurations but over the new topology (and the mapping of the IPs on the new nodes), we simulated the three methods, showing the results in \Cref{hibernia-atlantic-network}. Both of our methods are outperforming GLS-FP in this topology as well.

 \begin{figure}[t]
 \centering
 \begin{subfigure}{.3\textwidth}
   \centering
   \includegraphics[width=0.90\linewidth]{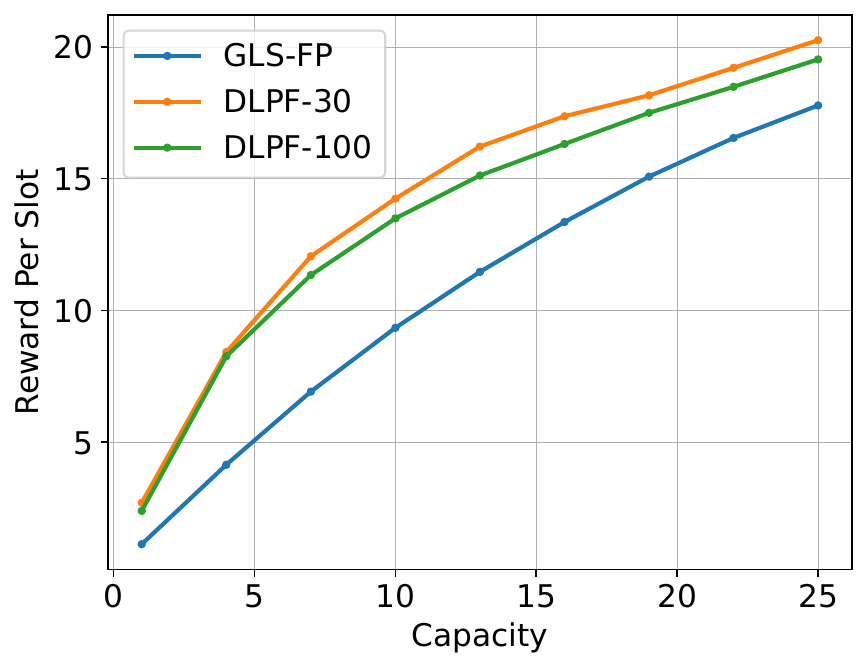}
   \caption{$\G_{IBM}$ with 15 source-destination pairs}
   \label{fig:comparison-a-realdata}
 \end{subfigure}%
 \hfill
 \begin{subfigure}{.3\textwidth}
   \centering
   \includegraphics[width=0.90\linewidth]{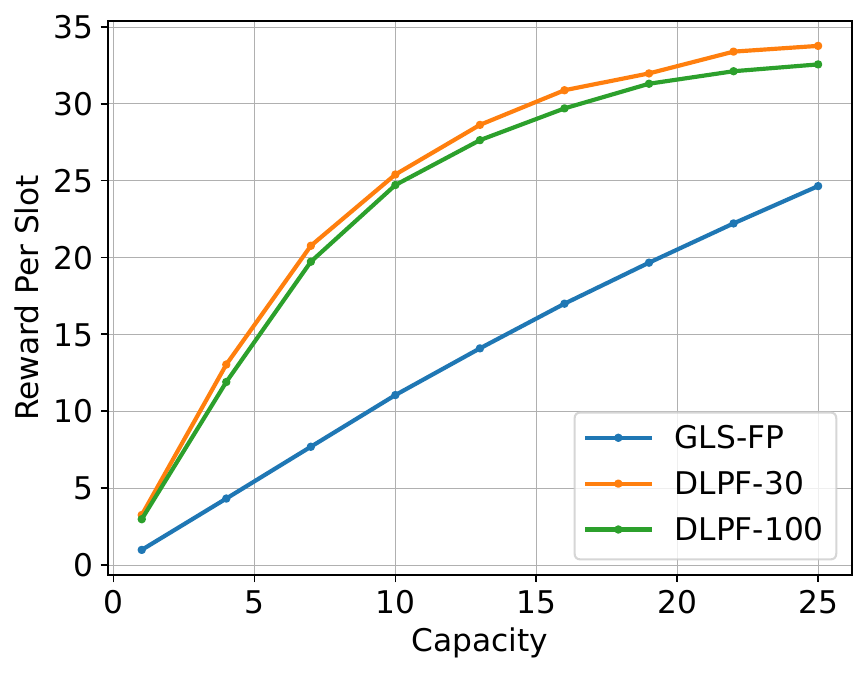}
   \caption{$\G_{IBM}$ with 30 source-destination pairs}
   \label{fig:comparison-b-realdata}
 \end{subfigure}
 \hfill
  \begin{subfigure}{.3\textwidth}
   \centering
   \includegraphics[width=0.90\linewidth]{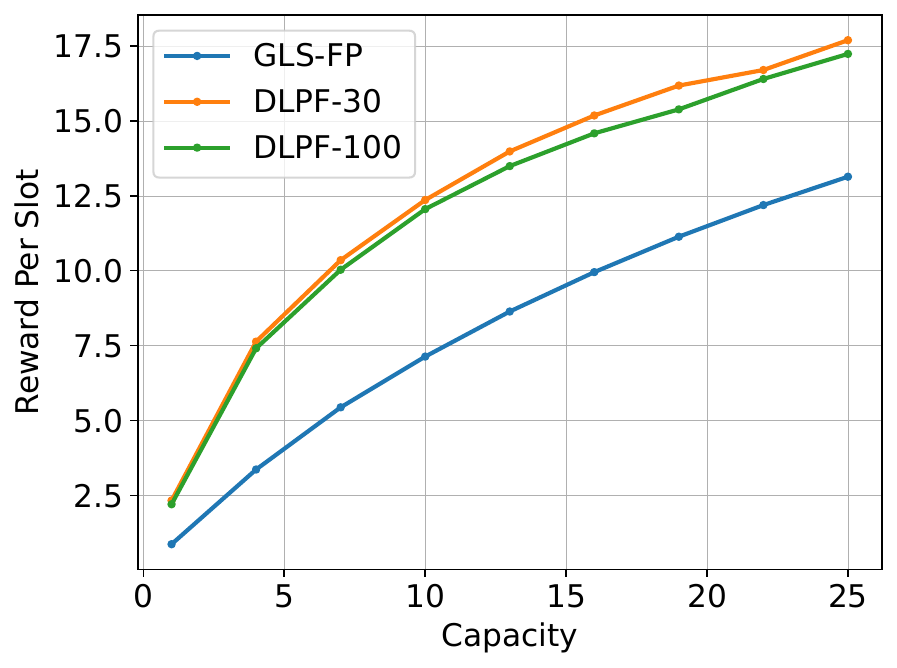}
   \caption{Hibernia Atlantic Network with 15 source-destination pairs}
   \label{hibernia-atlantic-network}
 \end{subfigure}
 \caption{Performance comparison (in terms of average reward per time slot) using real traffic on $\G_{IBM}$ (\Cref{fig:comparison-a-realdata}, \Cref{fig:comparison-b-realdata}) and the Hibernia Atlantic Network (\Cref{hibernia-atlantic-network}).}
 \label{fig:real-data-results}
 \vspace{- 0.1 in}
 \end{figure}

\section{Conclusions}
This paper shows that it is possible to design algorithms that provide near-optimal $(1-\epsilon)$ approximation ratio for the problem of scheduling deadline-constrained multihop traffic in the stochastic setting. Our result requires minimum link capacity $\CMN= \Omega(\log(L/\epsilon)/\epsilon^2)$ to guarantee such performance. This is a significant improvement over the prior results in the worst-case traffic setting. Our techniques are general and can be applied under different distributional assumptions. An interesting future work is to investigate more efficient and distributed ways of solving the LP \FS, e.g., by using distributed primal-dual methods.

\bibliographystyle{ACM-Reference-Format}
\bibliography{citations}

\appendix
\section{Remaining Proofs}
\subsection{\texorpdfstring{Proofs for
\Cref{randomized-scheduling-offline}}{}} \label{rem-proofs-offline}
\begin{proof}[Proof of \Cref{expected-random-integer-lemma}]
Consider possibly non-stationary, general packet arrival distribution, and without loss of generality assume $\abart>0$.
Further, consider the following solution for the expected instance \EI (\dref{EIT:first}-\dref{EIT:last}): 
$\vEI t = \E{\sumajt \vRIO }/{\abart}$
, where, $\vRIO$ is the solution to the random instance \RI, and the expectation is taken with respect to the randomness in the definition of \RI. This solution  is feasible for \EI. Indeed, first, note that constraints \dref{EIT:one-choice} are satisfied:
\[\sumk \vEI t = \sumk \frac{ \E{\sumajt \vRIO}} {\abart} = 
\frac{ \E{ \sumajt \sumk \vRIO}} {\abart} \overset{(\alpha)}{\leq} \frac{ \E{a_j^t}}{\abart} = 1,
\]
where in $(\alpha)$, we used inequality $\sumk \vRIO \leq 1$ from \Cref{RIT:one-choice}, as $\vRIO$ must be feasible for each random program \RI.
Further, for the capacity constraint \dref{EIT:capacity}, we have 
\begin{align}
&\sumtlast \sum_{j=1}^{\PKT} \abart \sumschedl x_{j k}^{\trl} = 
\sumtlast \sum_{j=1}^{\PKT} \sumschedl  \abart x_{j k}^{\trl}  \\
&=\sumtlast \sum_{j=1}^{\PKT} \sumschedl \abart \frac{ \E{\sumajtrl \vRIOtrl}} {\abart} \\
& 
=\E{\sumtlast \sum_{j=1}^{\PKT} \sumschedl \sumajtrl \vRIOtrl}\overset{(a)}{\leq} \E{\C_\ell}=\C_\ell,
\end{align}
where in $(a)$ we used  \dref{RIT:capacity} due to the feasibility of $\vRIO $. Finally, the non-negativity constraints \dref{EIT:last} are trivially satisfied.
The objective value of the proposed solution $\vEI t$ is then:
\[
\sumtotarrv{t} w_j \abart \sumk \vEI t = \E{\sumtotarrv{t} w_j \sumajt \sumk \vRIO } = \ERI. 
\]
Therefore, we proposed a feasible solution to the expected instance which has objective value $\ERI$.  This implies that for the optimal objective value $W_\EI$ we have: $W_{\EI}\geq \ERI$.
\end{proof}

\subsection{\texorpdfstring{Proofs for
\Cref{randomized-scheduling-online}}{}} \label{proofs-online}

\begin{proof}[Proof of \Cref{number-of-samples}]\label{lemma5}
For a packet type $j$, let $A_j$ denote the event of estimating within sufficient accuracy its probability: $\{ \abar (1-\epsilon_0)\leq \ahat \leq  \abar (1+\epsilon_0) \}$. We will show that $P(\bigcap_{j=1}^\PKT A_j) \geq 1-\epsilon$, or equivalently $P(\bigcup_{j=1}^{\PKT} A_j^c) \leq \epsilon$. Using the Hoeffding bound \cite{hoeffding1963probability} we have for the error probability for a fixed $j$: 
\[\Pr(A_j^c)=\Pr(\abar (1-\epsilon_0)\leq \ahat \leq  \abar(1+\epsilon_0))^c \leq 2 \exp\left(-2 \abar ^2 \epsilon^2_0 T/a_{\max}^2 \right) \overset{(a)}{\leq} \frac {\epsilon}{\PKT},\]
where $(a)$ holds for \[\TC\geq \frac{a_{\max}^2}{\amin^2 \epsilon^2_0 2} \log \left ( \frac {2 \PKT}{\epsilon}\right ).\]
The result is obtained by union bound over all $j$.
\end{proof}

\begin{proof}[Proof of \Cref{static-approximation}]
The proof is conditioned on the assumption of  accurate estimates $\{ \abar (1-\epsilon_0)\leq \ahat \leq  \abar (1+\epsilon_0) \}$ which holds with probability at least $(1-\epsilon)$ based on \Cref{number-of-samples}. In this case, first, consider the solution $\mathbf x^\star$ to $\EST$ (identified from $\mathbf f^\star$, with $\epsilon=0$). Let us refer to the modification of optimization $\EST$ with probabilities $\ahat$ as $\hat E_S$. Then $\frac{\mathbf x^\star}{1+\epsilon_0}$ is feasible for $\hat E_S$. Indeed for the capacity constraints we have: 
\[
\sumTT \sumpkt \ahat w_j \sumKK \frac{x_{jk}^\star}{1+\epsilon_0} 
\overset{(a)}{\leq} 
\sumTT \sumpkt \abar w_j  \sumKK x_{jk}^\star
\overset{(b)}\leq \C_\ell
\]
where $(a)$ follows from \Cref{number-of-samples}, and $(b)$ follows from the feasibility of 
$\mathbf x^\star$ for $\EST$. The remaining constraints follow trivially as they do not depend on $\abar$.
Since $\frac{\mathbf x^\star}{1+\epsilon_0}$ is a feasible solution to
$\hat E_S$, we have, due to the optimality of $\mathbf {\hat x}$ for $\hat E_S$:

\begin{equation}
\sumpkt w_j \ahat \sumKK \hat x_{jk}
\geq 
\sumpkt w_j \ahat \sumKK \frac{x_{jk}^\star} {1+\epsilon_0} \label{lemma:first-important-inequality}
\end{equation}

We proceed by also showing that $\mathbf{\tilde x} = (1-\epsilon_0) \mathbf{\hat x}$ is feasible for $\EST$. Indeed:
\[
\sumTT \sumpkt \abar w_j \sumKK (1-\epsilon_0) \hat x_{jk}
\leq 
\sumTT \sumpkt \ahat w_j  \sumKK \hat x_{jk}
\leq C_\ell.
\]
In the following, we obtain a relation between the objective value for $\tilde {\mathbf x}$ and $\hat {\mathbf x}$, and then between $\hat {\mathbf x}$ and $\mathbf x^\star$, to conclude the Lemma.
\begin{equation}
\frac{1+\epsilon_0}{1-\epsilon_0} 
\sumpkt w_j \abar \sumKK \tilde x_{jk}
= (1+\epsilon_0) 
\sumpkt w_j \abar \sumKK \hat x_{jk} 
\geq 
\sumpkt w_j \ahat \sumKK\hat x_{jk} \label{lemma:second-inequality}
\end{equation}
Further, we have 
\begin{equation}
\sumpkt w_j \ahat \sumKK \frac{x_{jk}^\star} {1+\epsilon_0}
\geq 
\frac{1-\epsilon_0}{1+\epsilon_0}
\sumpkt w_j \abar \sumKK x_{jk}^\star \label{lemma:third-important-inequality}
\end{equation}
Using Inequality \dref{lemma:first-important-inequality} and Inequality \dref{lemma:third-important-inequality} and plugging into inequality \dref{lemma:second-inequality}  
we get:
\[
\sumpkt w_j \abar \sumKK \tilde x_{jk} \geq \frac{(1-\epsilon_0)^2}{(1+\epsilon_0)^2}
\sumpkt w_j \abar \sumKK x_{jk}^\star
\]
Further note that 
$\frac{(1-\epsilon_0)^2}{(1+\epsilon_0)^2} \geq 1-4\epsilon_0$, yielding the final result. 
\end{proof}

\begin{proof}[Proof of \Cref{online-theorem}]
Recall that the time horizon of length $\TC$ is divided into $\PHASES+1$ phases, with $\epsilon 2^{\PHASES}=1$. For simplicity we assume that $\epsilon$ is such that $\PHASES = \log (1/\epsilon)$ is an integer.
For each phase we ignore any transmissions of the first $\DMAX$ slots (i.e., as if they were idle) for the purpose of the analysis. The actual performance of the algorithm can only be better than that.
Then, any phase $\phi$ (with $\phi\geq 1$) is split into the idle period $I_\phi$ of duration $\DMAX$ and an active period of duration $T_i$. The total active time is $\TCP=\sum_{\phi=1}^{\PHASES} \TC_\phi$ and the total time is $\TC=\TC^\prime + \DMAX \PHASES$.
Recall that $\TC_{0}=\TC_1=\epsilon \TC^\prime$ and for the remaining values $\TC_\phi = \epsilon \TCP 2^\phi$. 

Let $\mu = \frac{a_{\max}^2}{2 \amin^2}$. We saw in \Cref{number-of-samples}, that using $\mu \frac{\log(2 \PKT/\epsilon)}{\epsilon_0^2}$ samples, we obtain a $1-4\epsilon_0$ approximation to $\EST$ w.p. at least $(1-\epsilon)$. 
Since we are scheduling with the corresponding scaled flow probabilities, similarly to the proof of \Cref{main-theorem-offline} in \Cref{technique-static}, we have a loss of factor $3\epsilon$. Combining the various approximation ratio losses, we obtain the following fraction of the optimal value for each period of accuracy $\epsilon_0$:
$(1-4\epsilon_0)(1-3\epsilon)(1-\epsilon)\geq (1-4\epsilon_0) (1-4 \epsilon)$ 

More specifically, in our case, in phase $\phi$ of total duration $\TC_\phi$ ($\phi\geq1$), the algorithm can use all prior phases $0,1,\cdots, \phi-1$ of total duration $\sum_{i=0}^{\phi-1} \TC_i = \TC_0 + \TC_0 (2^{\phi}-1)= \TC_0 2^{\phi}=\TC_\phi$.
Therefore, during phase $\phi$, we use: $\epsilon_0 := \epsilon_{\phi} = \sqrt{ \mu \frac{\log(2 \PKT/\epsilon)}{\TC_\phi}}$, and we obtain a fraction $(1-4\epsilon)(1-4\epsilon_\phi)$ of the optimal.
Overall, we can identify the approximation ratio of \Cref{randomized-scheduling-online} by analyzing the quantity.
\begin{equation}
(1-4\epsilon)\sum_{\phi=1}^{\PHASES} \frac{\TC_\phi}{\TC}(1-4\epsilon_{\phi}), \label{approximation-ratio-to-analyze}
\end{equation}
Indeed this can be seen as we are obtaining  $(1-\epsilon_\phi)(1-4\epsilon)$ approximation in phase $\phi$, for a fraction $\TC_\phi/\TC$ of the total time. We therefore have:
\begin{align*}
& \sum_{\phi=1}^{\PHASES} \frac{\TC_\phi}{\TC}(1-4\epsilon_{\phi}) = 
\sum_{\phi=1}^{\PHASES} \frac{\TC_\phi}{\TC}-
\sum_{\phi=1}^{\PHASES} \frac{\TC_\phi}{\TC} 4\epsilon_{\phi} \geq 
\sum_{\phi=1}^{\PHASES} \frac{\TC_\phi}{\TC}-
\sum_{\phi=1}^{\PHASES} \frac{\TC_\phi}{\TC^\prime}4\epsilon_{\phi} \\
&\overset{(a)}\geq 
(1-\epsilon) \sum_{\phi=1}^{\PHASES} \frac{\TC_\phi}{\TC^\prime}-
\sum_{\phi=1}^{\PHASES} \frac{\TC_\phi}{\TC^\prime} 4\epsilon_{\phi} 
\overset{(b)}=(1-\epsilon)^2-
\sum_{\phi=1}^{\PHASES} \frac{\TC_\phi}{\TC^\prime} 4\epsilon_{\phi}  \geq 
(1-2\epsilon)-
\sum_{\phi=1}^{\PHASES} \frac{\TC_\phi}{\TC^\prime} 4\epsilon_{\phi}. 
\end{align*}

To see $(a)$, note that by assumption: 
$\TC \geq \DMAX (1 + \frac 1 \epsilon) \PHASES$. We then have:
\[
\TCP = \TC-\DMAX \PHASES \geq  \TC- \frac{\TC}{1+1/\epsilon} = \frac{\TC}{1+\epsilon} \geq \TC(1-\epsilon).
\]
from which $(a)$ follows by the reciprocal inequality after dividing with $(1-\epsilon)$.
Further, $(b)$ holds due to $\sum_{\phi=1}^\PHASES {\TC}_\phi/{\TCP} = \frac{\TCP-\TC_0}{\TCP}=\frac{\TCP-\epsilon \TCP}{\TCP} = 1-\epsilon$.
We will proceed by showing  $\sum_{\phi=1}^{\PHASES} \frac{\TC_\phi} {\TC} \epsilon_{\phi} \leq \epsilon$ which will give us a bound:
\begin{equation}
\sum_{\phi=1}^{\PHASES} \frac{\TC_\phi}{\TC}(1-4\epsilon_{\phi})  \geq 1-6\epsilon. \label{related-bound}
\end{equation}


Indeed, we have
\[
\sum_{\phi=1}^{\PHASES} \TC_\phi \epsilon/\TCP 
\overset{(a)}=
\sqrt{\mu \log(2\PKT/\epsilon) \epsilon/\TCP} \sum_{\phi=1}^{\PHASES} \sqrt{2}^{(\phi-1)} \overset{(b)}\leq 
\frac{\sqrt{\mu \log(2\PKT/\epsilon) \epsilon/\TCP} }{\sqrt \epsilon (\sqrt 2 -1 )} = 
 \frac{\sqrt{ \mu \log(2 \PKT/\epsilon)/\TC^\prime}}{\sqrt 2 -1}\overset{(c)}\leq \epsilon.  \]
$(a)$ is from definition of $\epsilon_\phi$ and $\TC_\phi$. In $(b)$, we used the fact that $\sum_{\phi=1}^{\PHASES} \sqrt{2}^{\phi-1}=\frac{\sqrt {2}^{\PHASES}-1}{\sqrt 2 - 1} =  (1/\sqrt{\epsilon}-1)/(\sqrt 2 -1)\leq \frac {1}{\sqrt \epsilon (\sqrt 2 -1)}$
since
$2^{\PHASES/2} \sqrt{\epsilon} = 1$.
Finally, $(c)$ holds under the assumption that $\TC> \frac{1}{(\sqrt 2 -1)^2} \mu \log(2 \PKT/\epsilon)/\epsilon^2 + \DMAX \log (1/\epsilon)$
which implies $\TC^\prime>  
\frac{1}{(\sqrt 2 -1)^2} \mu \log(2 \PKT/\epsilon)/\epsilon^2$.
Therefore, from \dref{related-bound} and due \dref{approximation-ratio-to-analyze}, we obtain a fraction
$(1-6\epsilon)(1-4\epsilon)\geq (1-10\epsilon)$ of $\RI$.
\end{proof}
\section{General Distributions with stationary arrival rates}\label{dependent-packet-arrivals}
In this section we consider general distributions on the arrival processes $\{\ajt\}$ with stationary arrival rates, i.e., $\avgpacks_{j}^t \equiv \avgpacks_j$. First, we provide a preliminary definition which allows us to extend \Cref{main-theorem-offline}.
\begin{definition} \label{arbitrary-distr-remark}
Consider a nonnegative integer-valued random variable $Z$ with distribution $\mathcal D$, and suppose $Z$ is bounded by a constant $M$. Consider a decomposition of $Z$ as the sum of $N$ independent random variables $\{Z_i: Z_{i} \leq M_i\}$:
\[
Z = \sum_{i=1}^{N} Z_{i}.
\]
We define the dependency-degree of the decomposition as  $\max _i M_i$. Then, define the dependency-degree of distribution $\mathcal D$, as the minimum dependency-degree of all decompositions for $\mathcal D$.
\end{definition}

In particular, the Binomial and Bernoulli distributions, both have dependency-degree  $1$ (i.e., $M_i=1$). A scaled Bernoulli distribution which is either $0$ or $M$, has the maximum possible dependency-degree of $M$.
Further, it is easy to see that the dependency-degree of the sum of two independent random variables is equal to the maximum dependency degree of the two random variables. 

We now state \Cref{main-theorem-offline-dependencies}, which generalizes \Cref{main-theorem-offline}.
\begin{theorem}
Consider stationary arrival processes $\{a_{j}^t\}$, with the property that each packet type's total traffic within a fixed time window, $\sumtlast a_{j}^t$, has at most dependency degree $D$ (for all $j$ and $t$). Then,
    given $\epsilon \in (0,1/3)$, $\ALGOFF$ provides $(1-3\epsilon)$-approximation to $\RI$ when $\CMN \geq 2 D \left(\frac{ 1+\epsilon}{\epsilon}\right)^2 \log (L/\epsilon)$ and $\TC\geq \frac{2 \DMAX^2} {\epsilon}$.  \label{main-theorem-offline-dependencies}
\end{theorem}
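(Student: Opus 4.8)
The plan is to notice that the i.i.d.\ Bernoulli/Binomial assumption enters the proof of \Cref{main-theorem-offline} in exactly one place --- step $(b)$, which applies \Cref{small-drop-chance-lemma} --- whereas every other ingredient (\Cref{expected-random-integer-lemma}, \Cref{static-is-good-lemma}, and Steps~1--4 of \Cref{outline-theorem-offline}) is either distribution-free or depends on the arrivals only through the rates $\{\abar\}$. So it suffices to prove a generalization of \Cref{small-drop-chance-lemma}: if each packet type's window total $\sumtlast a_{j}^t$ has dependency degree at most $D$, then scheduling with the forwarding variables $\{\flo j \ell \trl\}$ of \Cref{randomized-scheduling-offline} drops any given packet with probability at most $\epsilon$, provided $\CMN \geq 2D\left(\frac{1+\epsilon}{\epsilon}\right)^2\log(L/\epsilon)$. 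Substituting this back into step $(b)$, the identical chain of inequalities in the proof of \Cref{main-theorem-offline} then gives the $(1-3\epsilon)$-approximation, with the horizon condition $\TC\geq 2\DMAX^2/\epsilon$ untouched.

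To prove the generalized drop lemma I would re-run the concentration argument from the proof of \Cref{small-drop-chance-lemma} with the same random variable $X$ --- the capacity consumed on a fixed link $\ell$ at a fixed slot $t$, as in \dref{total-consumption-random-variable} --- and the same mean bound $\E{X}\le \C_\ell/(1+\epsilon)=:S_u$, which still follows from \dref{FS:capacity}. What changes is the decomposition of $X$. By the dependency-degree hypothesis, together with independence of arrivals across types and the fact (noted after \Cref{arbitrary-distr-remark}) that the dependency degree of a sum of independent variables is the maximum, the arrivals relevant to this constraint, $\{a_j^\trl: j\in\PD,\ \trl\in\{t-\DMAX,\dots,t\}\}$, partition into independent groups each contributing at most $D$ packets; since \Cref{randomized-scheduling-offline} routes each admitted packet with its own independent coins, the per-group counts $Y_g$ of packets that land on $\ell$ at $t$ are mutually independent with $0\le Y_g\le D$ and $X=\sum_g Y_g$. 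Applying \Cref{concentration-lemma} with $B=D$ and $\lambda=\epsilon S_u$, and using $\E{Y_g^2}\le D\,\E{Y_g}$ so that $S:=\sum_g\E{Y_g^2}\le D\,S_u$, the exponent becomes
\[
-\frac{\lambda^2}{2S+2B\lambda/3}\ \ge\ -\frac{\epsilon^2 S_u^2}{2D S_u+2D\epsilon S_u/3}\ \ge\ -\frac{\epsilon^2 S_u}{2D(1+\epsilon)}\ =\ -\frac{\epsilon^2\C_\ell}{2D(1+\epsilon)^2},
\]
so that $\Pr[X\ge\C_\ell]\le e^{-\epsilon^2\C_\ell/(2D(1+\epsilon)^2)}\le \epsilon/L$ under the stated bound on $\CMN$; a union bound over the at most $L$ link--time-slot pairs of a route-schedule then bounds the drop probability of any packet by $\epsilon$. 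As in the original proof, the boundary cases $t<\DMAX$ or $t>\TC$ involve strictly fewer summands and hence only a smaller drop probability.

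I expect the main obstacle to be the careful assembly of the decomposition: one must argue that the group structure guaranteed by \Cref{arbitrary-distr-remark} for a single type over a single window of $\DMAX+1$ slots can be combined, across all $\PKT$ types and all relevant slots, into one family of mutually independent summands of bounded size $D$, and that this family correctly absorbs the independent hop-by-hop routing randomness of \Cref{randomized-scheduling-offline}, so that $X$ is genuinely a sum of independent $[0,D]$-valued variables to which \Cref{concentration-lemma} applies --- everything else is the original computation with $B=1$ replaced by $B=D$. It is also worth stating explicitly that the generalized drop lemma is the only distribution-dependent input, so no further step of the proof of \Cref{main-theorem-offline-dependencies} requires modification.
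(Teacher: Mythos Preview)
Your proposal is correct and essentially identical to the paper's approach: the paper also reduces \Cref{main-theorem-offline-dependencies} to a generalization of \Cref{small-drop-chance-lemma} (stated as \Cref{small-drop-chance-lemma-generalization}), decomposes the link load into independent groups each bounded by $D$, uses $\E{Y_g^2}\le D\,\E{Y_g}$ to get $S\le D S_u$, and applies \Cref{concentration-lemma} with $B=D$ to obtain the same exponent $-\epsilon^2\C_\ell/(2D(1+\epsilon)^2)$, after which the chain of inequalities in the proof of \Cref{main-theorem-offline} is reused verbatim. One minor slip to fix: the displayed chain of inequalities on the exponent should read $\le$ throughout rather than $\ge$ (since $S\le D S_u$ makes the denominator smaller and hence the exponent more negative), although the conclusion $\Pr[X\ge\C_\ell]\le e^{-\epsilon^2\C_\ell/(2D(1+\epsilon)^2)}$ you draw from it is correct.
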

The minimum link-capacity threshold for a given approximation ratio, now depends on the dependency degree of the distribution of the total arrivals for the packet type in a window of $(\DMAX+1)$ consecutive time slots. In other words, a lower dependency-degree results in better performance guarantees, given $\CMN$.
\begin{remark}
Due to \Cref{main-theorem-offline-dependencies}, we can obtain guarantees for distributions with dependencies across time slots.
For example, if $a_{j}^t\in \{0,1\}$, and we allow the arrivals of a packet type  at a given time, to depend on the arrivals of that type at different time slots, then there can be a maximum of $\DMAX+1$ packets depending on each other, which results in dependency degree of $D=(\DMAX+1)$. An example of such a distribution is given in \cite{lee2021generalized}.
\end{remark}


For the proof of \Cref{main-theorem-offline-dependencies}, the analysis remains mostly identical to the case of Bernoulli or Binomial arrivals. However we need to generalize \Cref{small-drop-chance-lemma} as follows.

\begin{lemma}\label{small-drop-chance-lemma-generalization}
In the case that
$\sumtlast a_{j}^t$ has at most dependency degree $D$, for all $j$ and $t$,
using the forwarding probabilities  $\{f_{j\ell}^{\tau \star}\}$ for scheduling packets (Lines \ref{alg-packet-for}-\ref{alg-packet-for-end} of \Cref{randomized-scheduling-offline}), the probability of a packet being dropped is at most $\epsilon$, if $\CMN \geq 2 D \left(\frac{ 1+\epsilon}{\epsilon}\right)^2 \log \frac L \epsilon$.
\end{lemma}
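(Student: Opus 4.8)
The plan is to follow the proof of \Cref{small-drop-chance-lemma} essentially line by line, the only change being that the independent per--time-slot indicators $X_{j\trl}$ used there (independent because of the i.i.d.\ assumption, and bounded by $1$) are replaced by independent per-\emph{group} packet counts that are bounded by $D$ instead of $1$. As before, I would fix a link $\ell$ and a time $t$, treat the interior case $\DMAX\le t\le \TC$ (the two boundary cases involve strictly fewer terms and hence a smaller overload probability), and let $X$ denote the capacity consumption of $\ell$ at $t$ under the ideal, no--mid-route-drop routing, as in \dref{total-consumption-random-variable}. It still suffices to show $\Pr[X\ge \C_\ell]\le \epsilon/L$, since the final union bound over the at most $L$ link--time pairs on a packet's route-schedule, and the resulting bound $\epsilon$ on the drop probability, carry over verbatim from \Cref{small-drop-chance-lemma}.

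The substantive step is re-expressing $X$ as a sum of independent random variables bounded by $D$. The dependency-degree hypothesis — as spelled out in \Cref{generalization-non-stationary}, it should be read as: inside the window $W=\{t-\DMAX,\dots,t\}$ the type-$j$ packets split into groups that are mutually independent across groups (and, since packet types are independent, across types), with at most $D$ packets per group. For group $g$ of type $j$ let $X_{jg}$ be the number of its packets that the forwarding rule of \Cref{randomized-scheduling-offline} places on $\ell$ at time $t$. Because the algorithm routes each packet with its own independent coin flips, $X_{jg}$ is a function only of group $g$'s arrival configuration and of the routing coins of its packets; hence $\{X_{jg}\}_{j,g}$ are mutually independent, $0\le X_{jg}\le D$, and $X=\sum_{j,g}X_{jg}$. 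I expect this translation — from ``$\sumtlast a_j^t$ has dependency degree $D$'' to ``the window's packets split into independent groups of size $\le D$ whose routing contributions are independent'' — to be the one place needing care; everything downstream is the original computation with the role of the bound $B$ played by $D$.

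Finally, I would run the concentration estimate as in \Cref{small-drop-chance-lemma}. Since a type-$j$ packet arriving $\trl$ slots before $t$ is placed on $\ell$ at $t$ with probability exactly $f_{j\ell}^{(t-\trl)\star}$, summing over groups gives $\E X=\sum_j\abar\sum_{\tau=0}^{\DMAX}f_{j\ell}^{\tau\star}\le \C_\ell/(1+\epsilon)=:S_u$ by constraint \dref{FS:capacity}; and $0\le X_{jg}\le D$ gives $\sum_{j,g}\E{X_{jg}^2}\le D\sum_{j,g}\E{X_{jg}}=D\,\E X\le DS_u$. Applying \Cref{concentration-lemma} with $B=D$, $S\le DS_u$, and $\lambda=\epsilon S_u$ (so that $\E X+\lambda\le (1+\epsilon)S_u=\C_\ell$) yields
\[
\Pr[X\ge \C_\ell]\le \exp\!\Big(-\tfrac{(\epsilon S_u)^2}{2DS_u+2D\epsilon S_u/3}\Big)\le \exp\!\Big(-\tfrac{\epsilon^2\C_\ell}{2D(1+\epsilon)^2}\Big)\le \epsilon/L ,
\]
where the last step uses $\CMN\ge 2D\big(\tfrac{1+\epsilon}{\epsilon}\big)^2\log\tfrac{L}{\epsilon}$. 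This is the original bound with an extra factor $D$ in the exponent's denominator, which is exactly why the capacity requirement picks up the factor $D$; the union bound then finishes the proof as in \Cref{small-drop-chance-lemma}.
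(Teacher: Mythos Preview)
Your proposal is correct and follows essentially the same route as the paper: both replace the per-time-slot Bernoulli indicators by independent per-group contributions bounded by $D$, bound $\sum \E{X_{jg}^2}\le D\,\E X\le DS_u$, and apply \Cref{concentration-lemma} with $B=D$ and $\lambda=\epsilon S_u$ to get the extra factor $D$ in the exponent. Your justification of the second-moment bound via $X_{jg}^2\le D X_{jg}$ is in fact slightly cleaner than the paper's explicit expansion $\E\big[(\sum_i Y_i)^2\big]\le D\,\E[\sum_i Y_i]$, but the two are equivalent.
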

\begin{proof}
The proof in \Cref{small-drop-chance-lemma} leverages the independence of packet types. More specifically the proof for \Cref{small-drop-chance-lemma} relied on \Cref{concentration-lemma} to show that any link's capacity is unlikely to be exceeded. This was then used to obtain through a union bound the final result. We follow similar reasoning here, however when using \Cref{concentration-lemma} we need to define a set of independent variables, which now cannot be the arrival or not of a packet on the link. 

Suppose all the packets that can arrive on a link $\ell$ are partitioned into $N$ groups. We define the total consumption due to group $i$ as $Z_{g_i}$. We then have $Z_{g_i} \leq D$ by assumption (although the assumption is imposed on the arrivals for each packet type, this is propagated to the arrivals at a link, due to the properties of the dependency-degree). We proceed similarly to \Cref{small-drop-chance-lemma} by analyzing $\ES{Z_{g_i}^2}$ and $\ES{Z_{g_i}}$. However, now, in contrast to \Cref{small-drop-chance-lemma}, it is not true that $\ES{Z_{g_i}^2}=\ES{Z_{g_i}}$. Indeed let us consider a group $g$ with a maximum number of $I$ packets that can arrive in the considered link, and let these arrivals be indicated through variables $Y_1,Y_2,\cdots,Y_I$.
Then 
\[\ES{[Z_{g}^2]}=
\ES{\big(\sum_{i=1}^I Y_i\big)^2}= \ES{\big(\sum_{i=1}^I Y_i + \sum_{i \neq j} Y_i Y_j\big)}  =
\ES{\big[\sum_{i=1}^I Y_i (1+\sum_{j:j\neq i}^I Y_j)\big]} \leq 
D \ES{\big[\sum_{i=1}^I Y_i}]= D \ES {[Z_g]}
\]
Using the above, we prove (based on the notation of \Cref{concentration-lemma}, similarly to \Cref{small-drop-chance-lemma}): 
\[S=\sum_{i=1}^{N} \ES{[Z_{g_i}^2]} \leq D \ES{Z} \leq D \C_{\ell}/(1+\epsilon)=D S_u,\]
where $S_u$ and $\lambda$ are defined as in the proof of \Cref{small-drop-chance-lemma}. Then the exponent in \Cref{concentration-lemma}, for $B=D$, becomes
\[
-\frac{\lambda^2}{2S+2 D\lambda/3} \leq 
-\frac{1} D
\frac{\epsilon^2 S_u^2 }{2S_u+2 \epsilon S_u /3 } \leq 
- \frac{1} D\frac {\epsilon^2 \C_{\ell} }{2(1+\epsilon)^2}
\]
from which we can see that $\C_{\ell}$ needs to scaled this time by a factor of $D$ in order to replicate the results of the proof in \Cref{small-drop-chance-lemma}.
\end{proof}

\section{Non-stationary distributions}
\subsection{Extension to non-stationary arrival rates} \label{non-stationary-extension}
In this section we present details for the non-stationary extension of \Cref{randomized-scheduling-offline}. First, we need to extend \FS, to obtain \FNS. In essence, \FNS is obtained directly from \EI, by omitting Step 2 in \Cref{outline-theorem-offline}. As a result, randomizing according to the probabilities from the solution of \FNS results in improved approximation ratios over using the probabilities obtained from \FS, as we do not need to leverage \Cref{static-is-good-lemma} anymore (which adds a multiplicative factor of $(1-2 \DMAX^2/T)$ in the approximation ratio and thus worsens the quality of the guarantee for small horizons). Alternatively, \FS could be interpreted as an approximation of \FNS. 

The analysis follows similarly to the stationary case, by omitting the intermediate step of \Cref{static-is-good-lemma}. An analogous statement as in \Cref{small-drop-chance-lemma} can then be derived for \FNS. The \LP \FNS is given below:

\begin{subequations}\label{FS-TIME}
\begin{align}
\max_{\mathbf{f}} \quad & 
\sum_{t=1}^\TC \sum_{j=1}^{\PKT} w_{j} \avgpacks_j^t \sum_{\ell \in \AO{s_{j}}} \fl{j} \ell {0t}
\quad(:=\mathrm F_{NS}) \label{FS-obj-TIME} \\ 
\textrm{s.t.} \quad & 
\sum_{\ell \in \AO{s_{j}}} \fl j \ell {0t} \leq 1,\quad \forall j\in \PD, \forall t\in \TC, \label{FS:one-choice-TIME}\\
& 
\sum_{\ell \in \AI{v}} \fl{j}{\ell}{(\trl-1)t}=\sum_{\ell \in \AO{v}} \fl j \ell {\trl t},\quad \forall v\in \V, \trl \in [\DMAX], \forall j\in \PD, \label{FS:time-conserv-TIME}\\
&f_{j\ell}^0=0,\ \ \forall \ell\not \in \AO{s_j}, 
\quad f_{j\ell}^{d_j}=0,\ \ \forall \ell\not \in \AI{d_j}, \label{FS:edge-cases-TIME}
\\
& 
\sum_{\trl=0}^{\DMAX}  \sumpkt \avgpacks_{j}^{t-\trl} f_{j\ell}^{\trl(t-\trl)}\leq \frac{\C_{\ell}}{1+\epsilon}, \quad \forall \ell\in \LNKS, \forall t\in [\TB],
\label{FS:capacity-TIME}\\
& \fl j {\ell} {\trl t} \geq 0,\quad \forall j \in \PD,\forall \ell \in \ELNKS,\forall \trl\in [\DMAX], t\in [\TC]. \quad \label{FS-postv-TIME}
\end{align}
\end{subequations}
In the optimization above, we conventionally consider $f_{j\ell}^{\tau t}=0$ for $t\leq0$ and $t>\TC$ (in order to simplify indexing in \Cref{FS:capacity-TIME}). We now state a theorem for the general non-stationary case.

\begin{theorem}\label{most-general-thm}
Consider general arrival processes $\{a_{j}^t\}$, with the property that each packet type's total traffic within a fixed time window, $\sumtlast a_{j}^t$, has at most dependency degree $D$ (for all $j$ and $t$). Then,
    given $\epsilon \in (0,1/2)$, and scheduling using the forwarding variables from \FNS, yields a $(1-2\epsilon)$-approximation to $\RI$ when $\CMN \geq 2 D \left(\frac{ 1+\epsilon}{\epsilon}\right)^2 \log (L/\epsilon)$.  
\end{theorem}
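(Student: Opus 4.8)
The plan is to follow the four-step argument behind the proof of \Cref{main-theorem-offline}, but to \emph{omit Step~2} --- the reduction to a stationary solution via \Cref{static-is-good-lemma} --- since that is precisely the step that introduced the lossy $(1-2\DMAX^2/\TC)$ factor. The key point is that \Cref{expected-random-integer-lemma} already applies to arbitrary, possibly non-stationary, arrival distributions with finite rates, so we retain the upper bound $W_{\EI}\geq \E{W_{\RI}}$, where $\EI$ now keeps its full, time-indexed capacity constraints \dref{EIT:capacity}. It then suffices to construct a randomized schedule whose expected reward is at least $\tfrac{1-\epsilon}{1+\epsilon}\,W_{\EI}$.

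First I would verify that \FNS is precisely the flow-based reformulation of $\EI$ with each link capacity scaled down by $1/(1+\epsilon)$. This is Step~3 of the earlier proof carried out with the extra time index: a route-schedule variable $x_{jk}^t$ is split into per-hop forwarding mass $f_{j\ell}^{\trl t}$, and conversely, by the flow-decomposition theorem in the self-looped graph $\EG$, any feasible $\mathbf f=\{f_{j\ell}^{\trl t}\}$ of \FNS maps back to a feasible $\{x_{jk}^t\}$ of $\EI$ with matching objective. The conservation constraints \dref{FS:time-conserv-TIME}--\dref{FS-postv-TIME} encode feasibility of the relative route-schedules (including the source/destination/deadline conditions \dref{FS:edge-cases-TIME}), while the per-$(\ell,t)$ constraint \dref{FS:capacity-TIME} --- which sums over the ages $\trl$ of arrivals from slots $t-\trl$ --- is the image of \dref{EIT:capacity}. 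Scaling any flow-feasible solution of $\EI$ by $1/(1+\epsilon)$ is feasible for \FNS, hence $W_{\FNS}\geq \tfrac{1}{1+\epsilon}\,W_{\EI}$.

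Next I would re-run the concentration argument of \Cref{small-drop-chance-lemma} in the generalized form of \Cref{small-drop-chance-lemma-generalization} to show that routing a packet by the forwarding probabilities $\{f_{j\ell}^{\trl t\star}\}$ drops it with probability at most $\epsilon$ once $\CMN\geq 2D\big(\tfrac{1+\epsilon}{\epsilon}\big)^2\log(L/\epsilon)$. For a fixed link $\ell$ and target slot $t$, the load is $X=\sum_{j}\sum_{\trl=t-\DMAX}^{t}X_{j\trl}$ with $\E{X}=\sum_j\sum_{\trl}\lambda_j^{\trl}f_{j\ell}^{(t-\trl)\trl\star}\leq \C_\ell/(1+\epsilon)$ by \dref{FS:capacity-TIME} --- this is where the time-indexed rates and forwarding variables replace the stationary ones, but the inequality is unchanged. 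Partitioning the contributing arrivals into independent groups of size at most $D$ gives $S\leq D\,\E{X}$ and range bound $B=D$, so \Cref{concentration-lemma} with $\lambda=\tfrac{\epsilon}{1+\epsilon}\C_\ell$ yields $\Pr[X\geq \C_\ell]\leq e^{-\C_\ell\epsilon^2/(2D(1+\epsilon)^2)}\leq \epsilon/L$, and a union bound over the at most $L$ link--slot pairs of the route-schedule bounds the drop probability by $\epsilon$. Non-stationarity touches neither the independence across $(j,\trl)$ nor the group decomposition, so this step is essentially verbatim.

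Finally I would chain the pieces exactly as in the proof of \Cref{main-theorem-offline}, but without the stationarity term: $\E{W_{\ALG}}\geq(1-\epsilon)\,W_{\FNS}\geq \tfrac{1-\epsilon}{1+\epsilon}\,W_{\EI}\geq \tfrac{1-\epsilon}{1+\epsilon}\,\E{W_{\RI}}\geq(1-2\epsilon)\,\E{W_{\RI}}$, using $\tfrac{1-\epsilon}{1+\epsilon}\geq 1-2\epsilon$, which gives \Cref{most-general-thm} (the restriction $\epsilon<1/2$ only ensures the bound is positive). The step I expect to demand the most care is the first one: checking that the time-indexed reformulation \FNS is genuinely equivalent to $\EI$ --- in particular that the capacity constraint \dref{FS:capacity-TIME} is the correct per-$(\ell,t)$ image of \dref{EIT:capacity} under flow decomposition, and that the back-map respects the edge conditions \dref{FS:edge-cases-TIME}; the probabilistic part is a routine invocation of \Cref{small-drop-chance-lemma-generalization}.
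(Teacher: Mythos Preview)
Your proposal is correct and follows essentially the same route the paper takes: it explicitly says that \FNS is obtained from \EI by omitting Step~2 (\Cref{static-is-good-lemma}), and that the analysis proceeds as in the stationary case with the generalized drop-probability bound of \Cref{small-drop-chance-lemma-generalization} in place of \Cref{small-drop-chance-lemma}. Your chain $\E{W_{\ALG}}\geq(1-\epsilon)W_{\FNS}\geq\tfrac{1-\epsilon}{1+\epsilon}W_{\EI}\geq\tfrac{1-\epsilon}{1+\epsilon}\E{W_{\RI}}\geq(1-2\epsilon)\E{W_{\RI}}$ is exactly the intended argument.
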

Note that \Cref{most-general-thm} does not impose any constraints on the time horizon. The use of \FNS for the forwarding variables comes with a cost in terms of computational complexity. In particular, the number of variables and constraints required to solve \FNS now scale with the horizon $\TC$ (which was the reason we considered \FS in stationary environments). This can be mitigated as we discuss next in \Cref{frame-base-extension-appendix}.

\subsection{Frame-Based solutions for non-stationary arrival rates} \label{frame-base-extension-appendix} 
As discussed in \Cref{non-stationary-extension}, in the presence of non-stationary traffic, the number of variables and constraints for solving \FNS scales with horizon $\TC$. However, a frame-based solution can be adopted to mitigate the computational cost of solving \FNS, while offering additional benefits which we discuss below.

In a frame-based construction, we aim to partition horizon $[\TC]$ into consecutive time slots, named ``frames``, of some maximum duration $H$, with $H\ll \TC$. Then, we solve \FNS for each frame independently, and in order to forward packets that have arrived within a frame, we use the forwarding variables obtained from the solution of \FNS at the beginning of that frame. 

Frame-based constructions offer three motivating benefits: (a) We need not know (or estimate) ahead of time the arrival rates for the entire horizon, but only for the next frame. This is particularly relevant to prediction-based methods, since the arrival rates are often predictable in the short-term future, but difficult to predict over longer horizons.
(b) The complexity of solving the \LP is reduced. For example, if we are using an \LP solver with complexity $O(n^{k})$ for some $k>1$, where $n$ the number of variables, then, as $n \propto \TC$, we benefit by solving the problem in frames of size $H$, which results in $\TC/H$ optimization problems, with $n\propto H$ each, and therefore a reduction of the complexity from $O(\TC^k)$ to $O(\frac \TC H  H^k)=O(T H ^{k-1})$. (c) The computational complexity of finding the forwarding variables, is more evenly split across the time horizon, rather than requiring the calculation of the forwarding variables for the entire horizon at the beginning of the execution of the algorithm.

A frame-based method however might lead to a degradation in the approximation ratio obtained, unless it is meticulously designed, or certain assumptions are imposed on the traffic. This is because the scheduling decisions within a frame do not take into consideration the possible arrivals in other frames, despite the fact that these decisions may impact these future frames. It then becomes important to characterize the losses due to a frame construction. Fortunately, these losses can be localized around the boundaries of the frame (the end or the beginning of the frame), since, due to the bound $\DMAX$ on the deadlines, the impact of decisions can be shown to be limited to at most $\DMAX$ slots away from the boundaries of each frame. However, without additional assumptions, due to the arbitrary non-stationary distributions on the arrivals, a fixed-frame construction could have significant losses, if most of the high-weight traffic is concentrated on these boundaries. This can be resolved by either placing certain mild assumptions on the non-stationarity of the traffic, or by constructing a more involved frame construction, with frames of potentially different frame lengths. We illustrate a variable-sized frame construction.

\Cref{frm-constr-lemma} characterizes the losses of using a variable-sized frame construction.
\begin{lemma}\label{frm-constr-lemma}
\FNS admits a frame-based solution, with frames of variable length, of at most $H=(2+4/\epsilon)\DMAX$ time slots each, with
\[
W_{\FNS}^{\mathrm{FRAME}} \geq (1-\epsilon) W_{\FNS} 
\]
\end{lemma}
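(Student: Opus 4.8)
The plan is to take an optimal solution $\mathbf f^\star$ of \FNS and carve it into frames by deleting the forwarding variables of the arrivals that land in a small number of carefully placed ``buffer'' windows, each of width $\DMAX$. Write $\rho_t:=\sumpkt w_j\avgpacks_j^t\sum_{\ell\in\AO{s_j}}\flo j{\ell}{0t}$ for the reward generated by the arrivals at time $t$, so that $W_{\FNS}=\sum_{t=1}^{\TC}\rho_t$. Suppose the frames are intervals $[a_i,b_i]$ with $b_i+1=a_{i+1}$, and in frame $i$ we keep $\flo j{\ell}{\trl t}$ (for all $\trl$ and $\ell$) only for arrival times $t\in[a_i,b_i-\DMAX]$, zeroing the rest; the deleted window $[b_i-\DMAX+1,b_i]$ is the buffer of frame $i$. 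I would first check that this restricted assignment is still feasible for \FNS: zeroing a whole column of variables (all $\trl,\ell$ for a fixed $(j,t)$) keeps the conservation and unit constraints \dref{FS:time-conserv-TIME}--\dref{FS:one-choice-TIME} intact and only slackens the capacity constraints \dref{FS:capacity-TIME}. Next, the combined solution is globally feasible: the capacity constraint at any time $t'$ involves only arrivals in $[t'-\DMAX,t']$, and two surviving arrivals in such a window cannot belong to different frames, because the $\DMAX$-wide buffer forces surviving arrivals of different frames to be more than $\DMAX$ apart; hence each capacity constraint ``sees'' a single frame and inherits that frame's feasibility. The reward lost by the deletion is exactly $\sum_i\sum_{t=b_i-\DMAX+1}^{b_i}\rho_t$, the total $\rho$-mass of the buffers.

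It remains to choose the buffers cheaply while keeping frames short, which is the heart of the argument. I would tile $[1,\TC]$ (together with the trailing idle window $[\TC+1,\TB]$, which costs nothing since it carries no arrivals) by consecutive blocks of $\DMAX$ slots, then partition these blocks into disjoint groups of $m=\lceil 1/\epsilon\rceil$ consecutive blocks, and from each group pick the block of least $\rho$-mass to serve as a buffer (placed at the end of the frame it terminates). Its mass is at most a $1/m\le\epsilon$ fraction of its group's mass, and since the groups are disjoint the total buffer mass is at most $\epsilon W_{\FNS}$, so the surviving assignment already has value at least $(1-\epsilon)W_{\FNS}$; the frame-based solution actually returned by solving \FNS on each frame can only be at least as good, giving $W_{\FNS}^{\mathrm{FRAME}}\ge(1-\epsilon)W_{\FNS}$. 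Moreover, with one buffer block per group, two consecutive buffer blocks lie at most $2m-1$ blocks apart (the worst case being the first block of one group and the last of the next), so every frame spans at most $(2m-1)\DMAX\le(2+4/\epsilon)\DMAX=H$ slots, the slack in the constant absorbing the width of the buffer itself and the alignment at the two ends of the horizon.

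The main obstacle is exactly this last step: the $\rho$-light slots may be clustered, so the obvious rule ``in each window of length $H$ delete the cheapest $\DMAX$ consecutive slots'' either produces over-long frames or charges the same reward to overlapping windows, which would only yield a constant-factor rather than a $1-\epsilon$ guarantee. Partitioning the blocks into disjoint groups and taking one buffer per group is what simultaneously makes the charging disjoint and bounds the inter-buffer spacing. The remaining work is bookkeeping: confirming that a buffer of width exactly $\DMAX$ decouples the capacity constraints of adjacent frames (a window $[t'-\DMAX,t']$ contains only $\DMAX+1$ slots and cannot straddle a $\DMAX$-slot buffer), treating the first and last frames where one side needs no buffer, and tracking the $O(\DMAX)$ boundary offsets so that the final bound is the stated $H=(2+4/\epsilon)\DMAX$.
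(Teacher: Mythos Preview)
Your proof is correct and follows essentially the same strategy as the paper: partition the horizon into $\DMAX$-wide blocks (the paper's ``mini-frames''), group them into runs of $\Theta(1/\epsilon)$ blocks (the paper's ``cycles'' of $N=1+2/\epsilon$ mini-frames), and in each group drop the cheapest block to serve as a separating buffer, so that the per-group loss is at most an $\epsilon$-fraction and consecutive buffers are at most $2m-1$ blocks apart. The one variation worth noting is how ``cheapest'' is measured: you use the block's contribution $\sum_t\rho_t$ to the global optimum $\mathbf f^\star$, whereas the paper uses the standalone per-mini-frame LP value $W[i]$ obtained by solving $\FNS$ on that mini-frame alone; your choice makes the $1/m$ averaging bound immediate and the decoupling argument more explicit, but it presupposes $\mathbf f^\star$ is already in hand, while the paper's rule can be evaluated cycle-by-cycle from small LPs without ever solving the global $\FNS$---which is precisely the algorithmic motivation for the frame-based approach.
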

\begin{proof}
We assume for simplicity that $\TC$ is a multiple of $\DMAX N$ for some $N$. Then, let us partition $\TC$ into groups of $\DMAX$ slots, referred to as mini-frames.  
Further, define cycles $\mathcal C_{1}, \mathcal C_{2}, \cdots$, with each cycle consisting of $N$ consecutive mini-frames. Therefore each cycle has duration $\DMAX N$. 
Assume we are processing some arbitrary cycle, say $\mathcal C_{1}$ without loss of generality.
Solve \FNS on each mini-frame for $\mathcal C_{1}$ and $\mathcal C_2$. Suppose the optimal solutions within each mini-frame are of reward $W[1],W[2],\cdots, W[N]$ for the first cycle, and $W^{\prime}[1],\cdots, W^{\prime}[N]$ for the second cycle. 
Find $i$ with minimum reward for the first cycle $W_{min}= W[i]$, and similarly $i^{\prime}$ for the second cycle. These minimum reward frames can be assigned to be the boundaries of the frame and intuitively they should result in limited losses. Indeed we define the frame to be the set of consecutive mini-frames between (but not including) mini-frames $i$ and $i^{\prime}$. We argue that the loss in each cycle due to omitting these mini-frames will be limited.
Consider mini-frame $i$, which is part of cycle $\mathcal C_{i}$. 
Since our frame-based solution will not schedule packets in this mini-frame, it will suffer a loss from arrivals in that mini-frame. However outside the mini-frame, the remaining reward will be at least as large as the non-frame-based solution. As a consequence of the definition of $i$, we can further argue that these losses are limited. In particular, the loss cannot be higher than a fraction $2/(N-1)$ of the reward in the cycle. Indeed, if that was the case, $i$ wouldn't have been the minimum reward mini-frame in the cycle.
Ignoring all transmissions in this mini-frame can result in a loss of reward of the optimal policy due to transmissions in this mini-frame, which cannot be higher than a fraction $2/(N-1)$ of the reward in the entire frame. If $2/(N-1)=\epsilon$, that is, for $N=1+2/\epsilon$, we have an $\epsilon$ loss in each cycle.
The above scheme, only requires knowing or estimating the distribution of arrivals over the incoming two cycles. \Cref{changing-frame-construction} illustrates the construction through an example.
\end{proof}

\begin{figure}[t]
\centering
\includegraphics[width=0.6\linewidth]{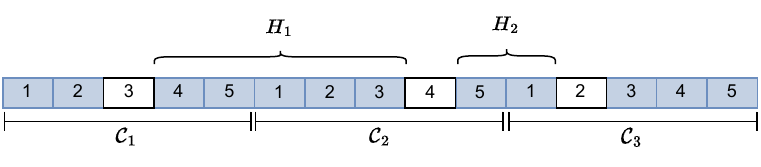}
\caption{A frame-based construction, where frames are defined as a sequence of mini-frames between two mini-frames of minimum reward. The figure indicates an example with $N=5$ mini-frames in each cycle. In the first cycle, mini-frame $3$ has minimum reward $W[i]$. In the second cycle, mini-frame $4$ has minimum reward $W^\prime[i^\prime]$.
}
\label{changing-frame-construction}
\end{figure}

\subsection{Reducing Complexity for Periodic Arrival Rates} \label{periodic-traffic-extension}

\begin{figure}[t]
\centering
\begin{subfigure}{.40\textwidth}
  \centering
  \includegraphics[width=0.95\linewidth]{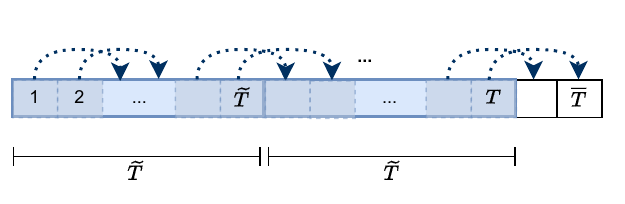}
  \caption{Range of capacity constraints for \EI}
  \label{fig:before-symmetrize-per}
\end{subfigure}%
\begin{subfigure}{.40\textwidth}
  \centering
  \includegraphics[width=0.95\linewidth]{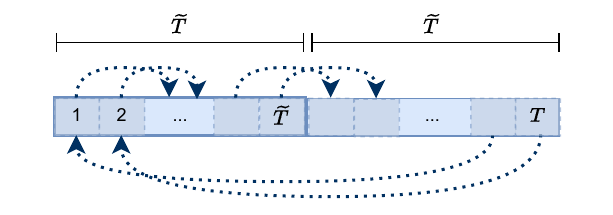}
  \caption{Modified range for \EIB}
  \label{fig:after-symmetrize-per}
\end{subfigure}
\caption{We consider the same example as in \Cref{fig:illustrate-symmetrizing}, with $\DMAX=2$, but now, rather than a stationary traffic, we assume periodic traffic with $\TC=2\widetilde {\TC}$, illustrated in \Cref{fig:before-symmetrize-per}. The broad symmetry of the problem breaks again at $1,2, \TC-1, \TC$. After the modification shown in \Cref{fig:after-symmetrize-per}, the first $\widetilde {\TC}$ slots are indistinguishable from the final $\widetilde {\TC}$ slots in \EIB. For example the arrivals rates impacting slot $1$ are identical to the arrival rates impacting slot $1+\widetilde {\TC}$. This implies a symmetry in shifts by $\widetilde {\TC}$ slots, which yields equivalent solutions following by shifting a solution in the time index by $\widetilde {\TC}$ (similarly to the proof of \Cref{static-is-good-lemma}).}
\label{fig:illustrate-symmetrizing-periodic}
\vspace{- 0.1 in}
\end{figure}

Here, we provide details on how a periodic set of forwarding variables can be used to obtain a near-optimal solution for \RI. For simplicity, it is assumed that $\widetilde \TC$ divides the horizon $\TC$. Then, to argue that a periodic solution to \EI is near optimal, we need to provide an argument similar to that in the proof of \Cref{static-is-good-lemma} in a stationary setting (i.e., generalize for $\widetilde {\TC}\geq 1$). 

Indeed, it is not hard to verify that the modified program, \EIB is now symmetric with respect to shifts in the index of the variables by $\widetilde {\TC}$. Refer to \Cref{fig:illustrate-symmetrizing-periodic} for an illustration of the point. Then, any general optimal solution to $\EI$, can be made periodic, by noticing that a shift of all time-indices in the solution of \EI by $\widetilde {\TC}$ results in another optimal solution. Then, averaging all shifts, yields a solution that only depends on the timeslot within the period, and hence, is periodic. From that argument as in \Cref{static-is-good-lemma}, we infer that a periodic near-optimal solution exists for \EI, and hence a simplification similar to \EST can be obtained, which converted in a flow form, yields the following program $\mathrm F_{NSP}$.

\begin{subequations}\label{FS-TIME-P}
\begin{align}
\max_{\mathbf{f}} \quad & 
\sum_{t=1}^{\widetilde{\TC}} \sum_{j=1}^{\PKT} w_{j} \avgpacks_j^t \sum_{\ell \in \AO{s_{j}}} \fl{j} \ell {0t}
\quad(:=\mathrm F_{\mathrm{NSP}}) \label{FS-obj-TIME-P} \\ 
\textrm{s.t.} \quad & 
\sum_{\ell \in \AO{s_{j}}} \fl j \ell {0t} \leq 1,\quad \forall j\in \PD, \forall t\in [\widetilde\TC], \label{FS:one-choice-TIME-P}\\
& 
\sum_{\ell \in \AI{v}} \fl{j}{\ell}{(\trl-1)t}=\sum_{\ell \in \AO{v}} \fl j \ell {\trl t},\quad \forall v\in \V, \trl \in [\DMAX], \forall j\in \PD, \label{FS:time-conserv-TIME-P}\\
&f_{j\ell}^0=0,\ \ \forall \ell\not \in \AO{s_j}, 
\quad f_{j\ell}^{d_j}=0,\ \ \forall \ell\not \in \AI{d_j}, \label{FS:edge-cases-TIME-P}
\\
& 
\sum_{\trl=0}^{\DMAX}  \sumpkt \avgpacks_{j}^{(t-\trl)\mathrm{mod} \widetilde \TC} f_{jl}^{\trl((t-\trl)\mathrm{mod} \widetilde \TC) }\leq \frac{\C_{l}}{1+\epsilon}, \quad \forall \ell\in \LNKS, t\in [\widetilde \TC],
\label{FS:capacity-TIME-P}\\
& \fl j {\ell} {\trl t} \geq 0,\quad \forall j \in \PD,\forall \ell \in \ELNKS,\forall \trl\in [\DMAX], t\in [\TC]. \quad \label{FS-postv-TIME-P}
\end{align}
\end{subequations}

\section{Relation between flows and route-schedules}\label{flow-relation}

Recall that in this work we randomize over route-schedules for each packet (e.g. following our analysis in \EST). In order to assign probabilities to all route-schedules compactly, we reformulate \EST, inspired by ideas from the maximum flow and multicommodity flow problem \cite{magnanti1993network}. 
First, we need to define a new graph over which we can refer to flows. To do this, we initially associate route-schedules to regular routes in a time-expansion of the original network, such as the one indicated in \Cref{time-expanded-appendix}. In the time-expansion each link is copied $\DMAX+1$ times (with each copy representing the age at which the packet can be scheduled over the link) and each node is copied $\DMAX+2$ (representing the location of the packet at different ages).

Then, all route schedules $\mathcal K_j$ correspond to paths in the time-expansion from node $(s_j:0)$ to $(z_j:\DMAX+1)$. However, from standard results in the network flow literature \cite{magnanti1993network}, we can combine a composition of weighted paths in any graph, into a flow, and vice versa, we can decompose a weighted flow into a set of paths \cite{goldberg1998beyond}. In particular, by making sure that the weights are appropriately scaled (i.e., the total weight of the weighted paths is at most one), we can interpret the weights as probabilities of scheduling over each path.

Applied to our problem, this allows us to work with either variables directly on route-schedules $x_{jk}$ (\EST), or their flow-inspired corresponding forwarding variables $f_{j\ell}^{\trl}$. For instance, note that the example of forwarding-variables  defined in \Cref{fig:flow-assignment-example} can also be viewed as a flow
over the time-expansion in \Cref{time-expanded-appendix}.

\subsection{From forwarding variables to route-schedules}\label{iterative-process-flow-to-route}
As in the network flow literature \cite{magnanti1993network,raghavan1987randomized}, there are multiple ways to assign forwarding variables to probabilities over route-schedules. In the paper we presented a hop-by-hop approach, which uses directly the intuitive definition of a forwarding variable. An alternate mapping however could be through an iterative process that in each step finds a route-schedule of maximum possible probability, subtracts the probability assigned to it from the flow, and repeating the process. This results in exhausting for each packet type at least one edge of the time-expansion per iteration. Since there is a maximum of $(\DMAX+1) |\ELNKS|$ edges in the time-expansion, this results in a limited number of route-schedules with positive probability per packet type.


\begin{figure}[t]
\centering
\includegraphics[width=0.3\linewidth]{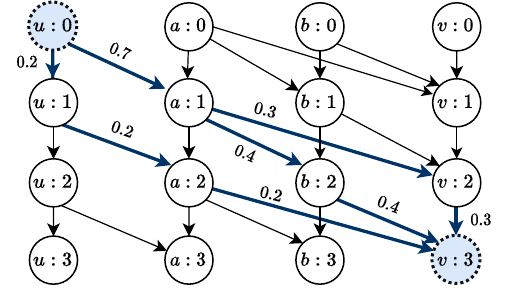}
\caption{A time-expansion of the network in \Cref{fig:flow-assignment-example} with the packet type considered in \Cref{fig:flow-assignment-example}. Note that there are only $3$ route-schedules for that packet type, and these correspond to all paths from $u:0$ to $v:3$ (recall that the deadline was $2$). For example path $(u:0,u:1,a:2,v:3)$ corresponds to route schedule $k=[(u,u)\quad(u,a)\quad(a,v)]$. Each route-schedule can be assigned a probability through the forwarding variables indicated in both \Cref{fig:flow-assignment-example} as well as the time-expansion. Note however that in the time-expansion, the forwarding variables consist also of a valid flow as in the max-flow problem \cite{magnanti1993network}.
}
\label{time-expanded-appendix}
\end{figure}

\section{ADDITIONAL SIMULATIONS}\label{additional-simulations}

\subsection{Synthetic Datasets}
\textbf{Grid Network $4\times 4$.} We simulated the performance of our algorithm in a grid network with $16$ nodes, in a $4\times4$ arrangement. 
In the grid topology, a node which is not at an edge or corner of the grid, is connected with $4$ nodes (top,right,bottom,left nodes). Corner nodes only have 2 connections instead. For the simulation parameters, we used $\PKT=30$, and randomly assigned sources and destinations, as earlier, and weights randomly chosen in $\{10,\cdots,100\}$.  The results are shown in \Cref{fig:grid-various} for different $\DMAX$ and $\abar$ ranges (i.e. each $\abar$ is uniformly distributed in that range).

\begin{figure}[t]
\centering
\begin{subfigure}{.30\textwidth}
  \centering
  \includegraphics[width=0.95\linewidth]{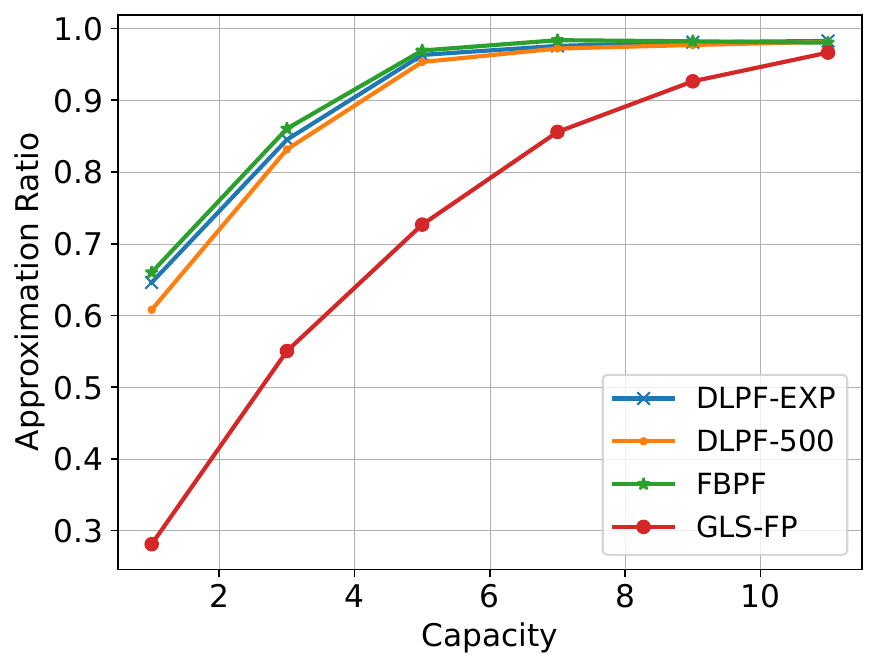}
  \caption{$\DMAX=5,\abar \in (0,1)$}
  \label{grid-network-sim}
\end{subfigure}
\begin{subfigure}{.30\textwidth}
  \centering
  \includegraphics[width=0.95\linewidth]{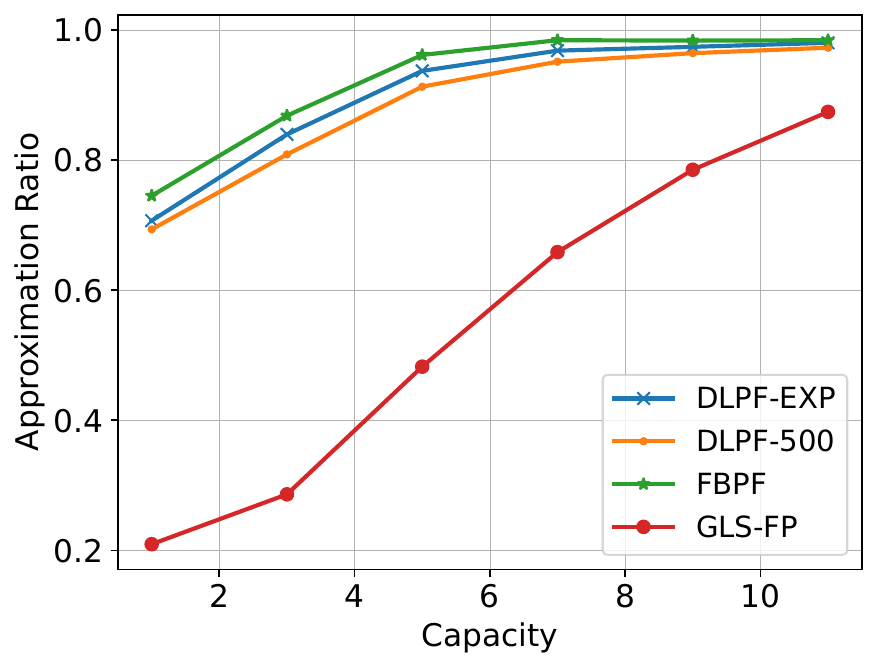}
  \caption{$\DMAX=5,\abar \in (0.8,1)$}
  \label{grid-network-sim-dead7}
\end{subfigure}%
\begin{subfigure}{.30\textwidth}
  \centering
  \includegraphics[width=0.95\linewidth]{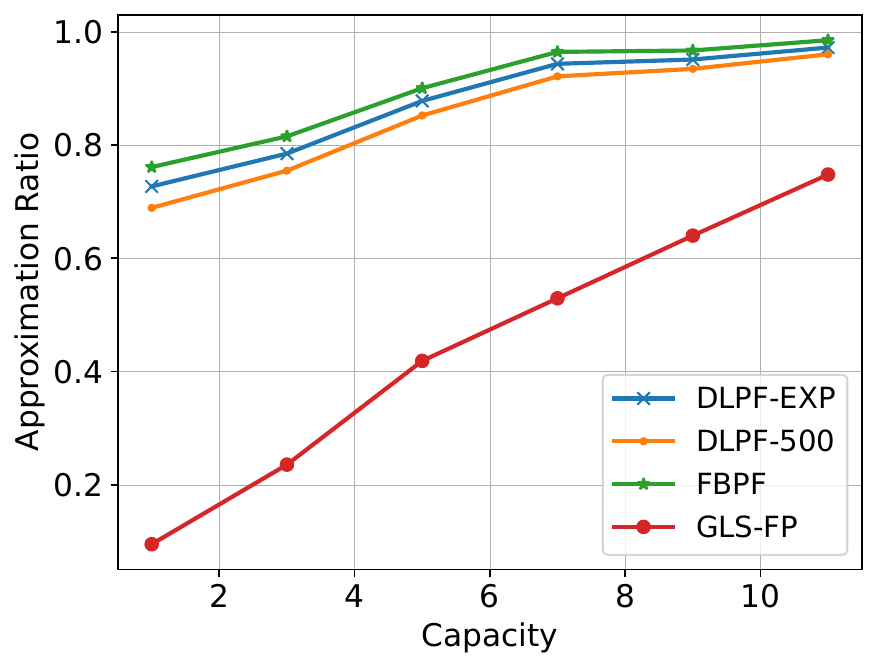}
  \caption{$\DMAX=7,\abar \in (0.8,1)$}
  \label{grid-network-sim-3}
\end{subfigure}
\caption{Performance evaluations for Grid Network.}
\label{fig:grid-various}
\vspace{- 0.1 in}
\end{figure}

\noindent\textbf{Circle Network with $8$ nodes.} We simulate two network topologies and show the results in \Cref{fig:circle-results}. First, we use a simple circle topology with links $(1,2), (2,3), \cdots (7,8), (8,1)$ (with the results shown in \Cref{bare-circle}). We subsequently introduce a short connection connecting $(1,5)$ (with the results shown in \Cref{circle-with-short}). We used $\DMAX=15$ and the arrival probabilities were chosen uniformly in $\abar\in (0.5,1.0)$. Weights are generated similarly to the Grid Network and we used $\PKT=20$.

\begin{figure}[t]
\centering
\begin{subfigure}{.30\textwidth}
  \centering
  \includegraphics[width=0.95\linewidth]{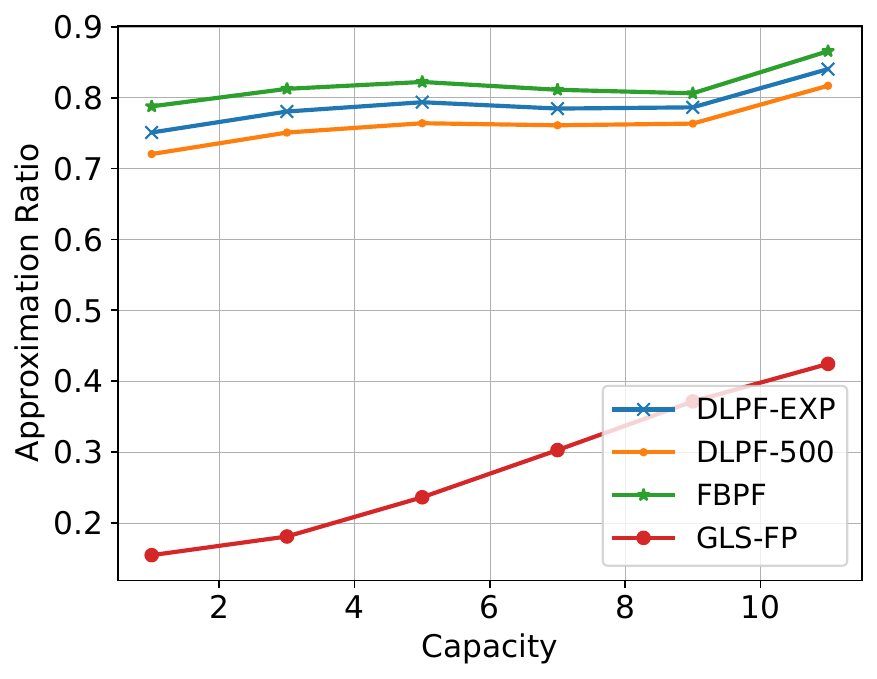}
  \caption{Simple circle topology}
  \label{bare-circle}
\end{subfigure}%
\begin{subfigure}{.30\textwidth}
  \centering
  \includegraphics[width=0.95\linewidth]{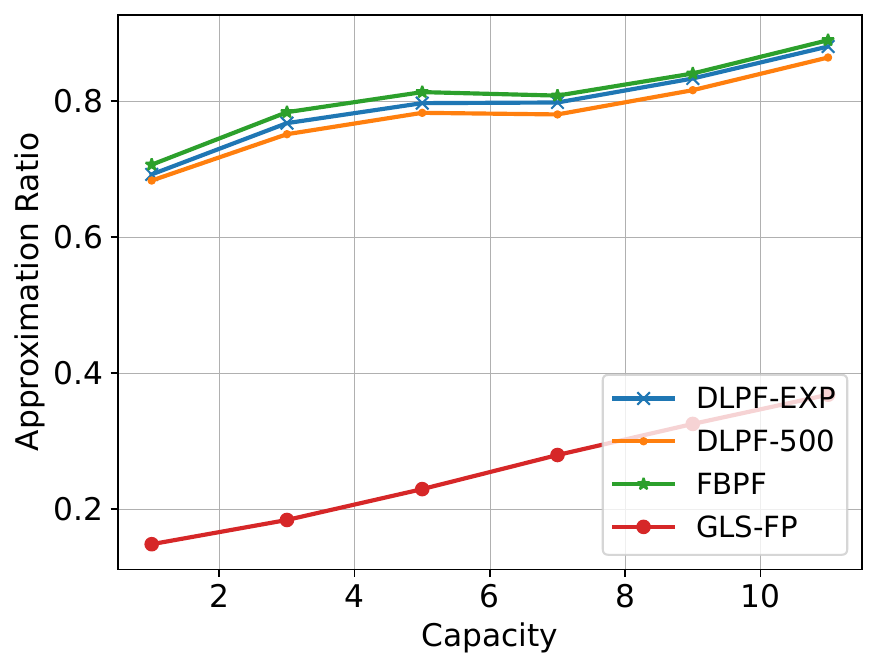}
  \caption{Circle with shortcut}
  \label{circle-with-short}
\end{subfigure}
\caption{Performance evaluations for Circle Topologies.}
\label{fig:circle-results}
\vspace{- 0.1 in}
\end{figure}

\end{document}